\documentclass[10pt]{article}
\usepackage{comment}
\def\ZK{{\mathbb{Z}}_K}

\usepackage{amsmath,amsthm,amsfonts,amssymb}
\usepackage{graphicx}
\usepackage{enumerate}
\usepackage{float}
\usepackage{makecell}
\usepackage{tikz-cd}
\usetikzlibrary{arrows}
\usepackage{adjustbox}
\usepackage{extarrows}
\usepackage{enumitem}

\newtheorem{theorem}{Theorem}[section]
\newtheorem{lemma}[theorem]{Lemma}
\newtheorem{proposition}[theorem]{Proposition}

\newtheorem{remark}{Remark}

\def\Z{\mathbb{Z}}
\def\N{\mathbb{N}}
\def\Q{\mathbb{Q}}
\def\R{\mathbb{R}}
\def\lra{\ \longrightarrow\ }

\def\jj{\mathfrak{j}}
\def\mm{{\mathfrak{m}}}
\def\cK{\mathcal{C}_K}
\def\ZKx{{\Z^\times_K}}
\def\uf{u_o}

\newcommand{\lhra}{\lhook\joinrel\relbar\joinrel\rightarrow}
\newcommand{\Gal}[2]{{\mathrm{Gal}_{#1/#2}}}
\newcommand{\ag}[1]{{\ZK/{(#1)}}}
\newcommand{\mg}[1]{{\bigl(\ag{#1}\bigr)^\times}}
\newcommand{\U}[1]{{U_{#1}^1}}

\DeclareMathOperator{\coker}{coker}
\DeclareMathOperator{\nm}{{\mathbf{N}}}
\DeclareMathOperator{\img}{im}

\DeclareFontFamily{U}{rcjhbltx}{} 
\DeclareFontShape{U}{rcjhbltx}{m}{n}{<->rcjhbltx}{}
\DeclareSymbolFont{hebrewletters}{U}{rcjhbltx}{m}{n}
\DeclareMathSymbol{\shin}{\mathord}{hebrewletters}{152}

\usepackage{xcolor}
\usepackage[
  final,
  colorlinks=true,
  linkcolor=blue,
  citecolor=blue,
  urlcolor=blue,
  linktoc=all
 ]{hyperref}

\begin{document}
\synctex=1

\author{Ingemar Bengtsson*, Gary McConnell**}

\title{HOW STARK UNITS ENTER SIC OVERLAPS}

\maketitle

\begin{center}

{\small 
${}^{*}${\sl Stockholms Universitet, AlbaNova,}

{\sl SE-106 91 Stockholm, Sverige}
}

\

{\small 
${}^{**}${\sl Controlled Quantum Dynamics Theory Group, Imperial College,}

{\sl London, United Kingdom SW7 2AZ}
}

\vspace{8mm} 

{\bf Abstract:}

\end{center}

\noindent It has been observed that the mutual scalar products of the vectors in 
a SIC-POVM are given by algebraic units, and at least in some cases by square roots of Stark units. 
The full picture is somewhat more 
complicated, especially if non-minimal SIC-POVMs are considered. We present a mixture 
of exact and numerical evidence suggesting that the overlap units are always products 
of integral powers of square roots of Stark units from ray class fields all of which are attached to the maximal ring of 
integers in the base field. In the non-minimal case a lattice of such ray class fields is involved. 
In every second dimension (counted in a certain way) some of the overlap units equal $\pm 1$, and we 
show that this follows from a special property of the ray class fields. Our observations are 
complementary to but consistent with the claim that the overlap units can be calculated directly from the 
Shintani--Faddeev modular cocycle. 

\vspace{10mm} 

\tableofcontents

\section{Introduction}\label{sec:sec1}
\noindent We will be concerned with the intersection of two unsolved 
problems, one of them coming from number theory and the other from 
quantum theory. It came as a considerable surprise that this 
intersection exists \cite{AYAZ, AFMY}. 

The number theory problem is a subproblem of Hilbert's 12th: find an 
elegant description of the generators of the most general abelian 
extension of a real quadratic number field. (This formulation is a little 
vague, but so was Hilbert's \cite{Hilbert}.) A real quadratic field is 
an extension $K = {\mathbb{Q}}(\sqrt{D})$ of the rational field ${\mathbb{Q}}$, 
where $D$ is a positive square-free integer. In general an extension of a 
number field is said to be abelian if its Galois group is abelian. It has 
been known for a long time what the abelian extensions are. One fixes an 
ideal in the ring of integers of the base field, in our case $K$, and defines what is known 
as a ray class field with that ideal as the finite part of its modulus. Every abelian 
extension is a subfield of such a ray class field. 
Fifty years ago Harold Stark gave a procedure that conjecturally allows us to 
calculate algebraic units in any such ray class field \cite{Stark1}. It 
starts from an analytic zeta function attached to the field and---provided 
that the numerical calculation is carried out with enough precision---ends 
with an exact expression for what are known as Stark units. In most but not 
quite all of the cases we consider they serve as generators of their 
ray class field. Much more can be said, a main point 
being that there does not at this time exist a proof of Stark's conjectures, 
not even when the base field is~$K$. 

A SIC---the name is a shortening of the unilluminating acronym SIC-POVM---can 
be projectively defined as an orbit of the Weyl--Heisenberg 
group in the Hilbert space ${\mathbb{C}}^d$ with a certain equiangularity 
property \cite{Zauner, Renes}. We have placed the group theoretical 
details in Appendix \ref{sec:A1}, so as to not prolong this introduction 
unnecessarily for readers in the know. Once these details are accepted 
it is enough to consider the group elements $\{ D_{i,j} \}_{i,j = 0}^{d-1}$, 
where $D_{0,0}$ is the identity element. Choose a fiducial unit vector $ \lvert \Psi_0 \rangle$ 
for the group to act on. By definition the orbit is a SIC if and only if there exist 
phase factors $e^{i\theta_{j,k}}$, the indices $j$ and $k$ not both zero, such that 

\begin{equation} \langle \Psi_0 \lvert D_{0,0} \lvert \Psi_0 \rangle = 1\ , \hspace{8mm}
\langle \Psi_0 \lvert D_{j,k} \lvert \Psi_0 \rangle = \frac{e^{i\theta_{j,k}}}
{\sqrt{d+1}} \ . \label{SIC} \end{equation}

\noindent By taking the absolute values we see that the $d^2$ vectors in 
the orbit are equiangular, and it is easy to prove that more than $d^2$ 
equiangular unit vectors cannot exist in dimension $d$. The open question 
is whether this upper bound can be achieved (in this way) for every choice 
of the dimension $d$. Good reasons to be interested in SICs come from 
foundational concerns in quantum theory and from some quantum engineering 
applications \cite{Chris}, but we do not go into this here. 

We focus on the quite remarkable observation that the unspecified phase 
factors $e^{i\theta_{j,k}}$ that appear in the definition seem to be algebraic 
units as soon as $d > 3$ \cite{AFMY}. For the case $d = p$ where $p$ is a prime 
equal to 2 modulo 3 Gene Kopp established in four examples that they are indeed square 
roots of Stark units \cite{Kopp1}. There are other cases where Stark units enter the overlaps in a 
slightly different way \cite{ABGHM, BGM}. For brevity we will refer to the phase factors 
as {\it overlap units}. 

The bridge between Hilbert space and number theory is the remarkable 
formula \cite{AYAZ} 

\begin{equation} (d + 1)(d - 3) = f_0^2 \Delta_0 \ , \hspace{5mm} \Delta_0 = 
\left\{ \begin{array}{ll} D & \mbox{if $D = 1$ mod 4} \\ 
4D & \mbox{otherwise.} \end{array} \right. \label{Eq1} \end{equation}

\noindent Here $D$ is a square-free integer used to define a real 
quadratic field, $\Delta_0$ is a \emph{fundamental discriminant}, and $d$ is 
the dimension of the Hilbert space in which the SIC exists. 
In our set-up, $d$ is \emph{also} the modulus defining a ray class field with 
the quadratic field as its base field. The prevailing conjecture, verified in all 
cases examined, is that this ray class field can be used to construct 
a SIC in dimension $d$ \cite{AFMY}. The integer $f_0$ is of 
interest too: its divisors allow us to define an entire lattice of 
abelian extensions of the minimal ray class field, and to predict the 
`spectroscopy' of unitarily inequivalent SICs in that dimension 
\cite{Kopp5, Kopp2}.

Early studies of the SIC existence problem included extensive numerical 
searches, leading 
Scott and Grassl to develop a `SIC phenomenology' that tells us 
what symmetries the SICs enjoy in what dimensions \cite{Scott1, Scott2}. 
The full picture is quite intricate, but it can now be derived from 
the conjectures about what number fields are needed 
to construct the SICs. In a recent contribution Kopp has proposed a special 
function known as the Shintani--Faddeev modular cocycle \cite{Kopp3} which 
is believed to produce the algebraic units that eventually, after the 
appropriate Galois transformation, appear as SIC overlaps, for 
{\it every} SIC \cite{Kopp4}. The aim of the present paper is then 
to see exactly how these units relate to Stark units in standard ray 
class fields. 

We take it as given that for each divisor $f$ of $f_0$ in equation 
(\ref{Eq1}) there exists a SIC that can be constructed in a well defined subfield 
of a ray class field with (finite) modulus $fd$, and that there exists an algorithm 
that allows us to compute the Stark units $\mbox{S}_{fd}$ in that ray class field. 
If $f$ is itself composite, $f = f_1\cdots f_n$, there is a lattice of subfields 
each containing its own Stark units $\mbox{S}_{f_id}$. We claim that for each 
choice of $d > 3$ and $f \lvert f_0$ there exists a SIC such that: 

\begin{itemize} 

\item{Every overlap unit is a product of square roots of Stark units $\mbox{S}_{f_id}$. 
For non-minimal SICs each divisor $f_i$ of $f$ contributes its own square roots 
of Stark units to the product.} 

\item{If $d$ admits non-trivial divisors (as an integer in $K$) there will 
be a corresponding lattice of ray class subfields, giving rise to `baby overlaps' 
constructed from Stark units in these subfields.} 

\item{In some special `baby overlaps' the square rooted Stark units appear raised to 
an integer exponent. This exponent is determined by a simple rule \cite{Markuspower}, 
equation (\ref{Grasslrule}) below, applied separately to each of the subfields defined 
by the divisors of $f$.} 

\item{If either of $d+1$ or $d-3$ is a square the Stark units are 
forced to be trivial in certain subfields, giving rise to overlap units equal to $\pm 1$.}

\end{itemize}

\noindent As we shall see, these claims are not in contradiction with the standard picture 
that minimal SICs in even dimensions are constructed from ray class fields with modulus $2d$ \cite{AFMY}, 
or that non-minimal SICs are constructed using proper subfields of the ray class field with 
modulus $fd$ \cite{Kopp2} (or $2fd$, as the case may be). 

We will support the fourth of our claims by a theorem concerning the behaviour of ray class 
fields with modulus $d-2$, assuming eq. (\ref{Eq1}) is still in place. In fact we will find 
that the moduli $d-3$, $d-2$, $d-1$, and $d$, have a very special status. 

Sections \ref{sec:sec2} and \ref{sec:sec3} contain some preliminaries, and section 
\ref{sec:sec4} discusses the case of minimal SICs---which is governed by the divisors 
of the integer $d$. Section \ref{sec:sec5} gives some 
details concerning the number fields needed in the non-minimal case, and Section 
\ref{sec:sec6} describes the non-minimal overlaps---for which we also have to consider 
the divisors of the integer $f_0$. Sections \ref{sec:sec7} and \ref{sec:sec8} address two 
important special cases. 
Up to this point our paper should be regarded as nothing more than a collection of facts, 
but for the second of these special cases (that of `aligned SICs') 
the theorem proved in Section 9 provides a full explanation. 

\section{Facts about ray class fields and Stark units}\label{sec:sec2}
\noindent This section is a sketch of some of the number theory  
underlying our problem. Although we will touch on 
class field theory in Sections~\ref{sec:sec5} and~9, 
we refer the interested reader to  the various 
standard treatises \cite{Neukirch, Lang, gras, Tate}. 
For orientation, consult Figure \ref{fig:Nonmin2}. 
SICs that can be constructed using the fields 
shown there are known as {\it minimal}. 
In many dimensions there exist also {\it non-minimal} 
SICs, needing a further abelian extension of the 
minimal fields. We will come to them in Section~\ref{sec:sec5}. 

\begin{figure}[t]
\centerline{ \hbox{
		\includegraphics[width=60mm]{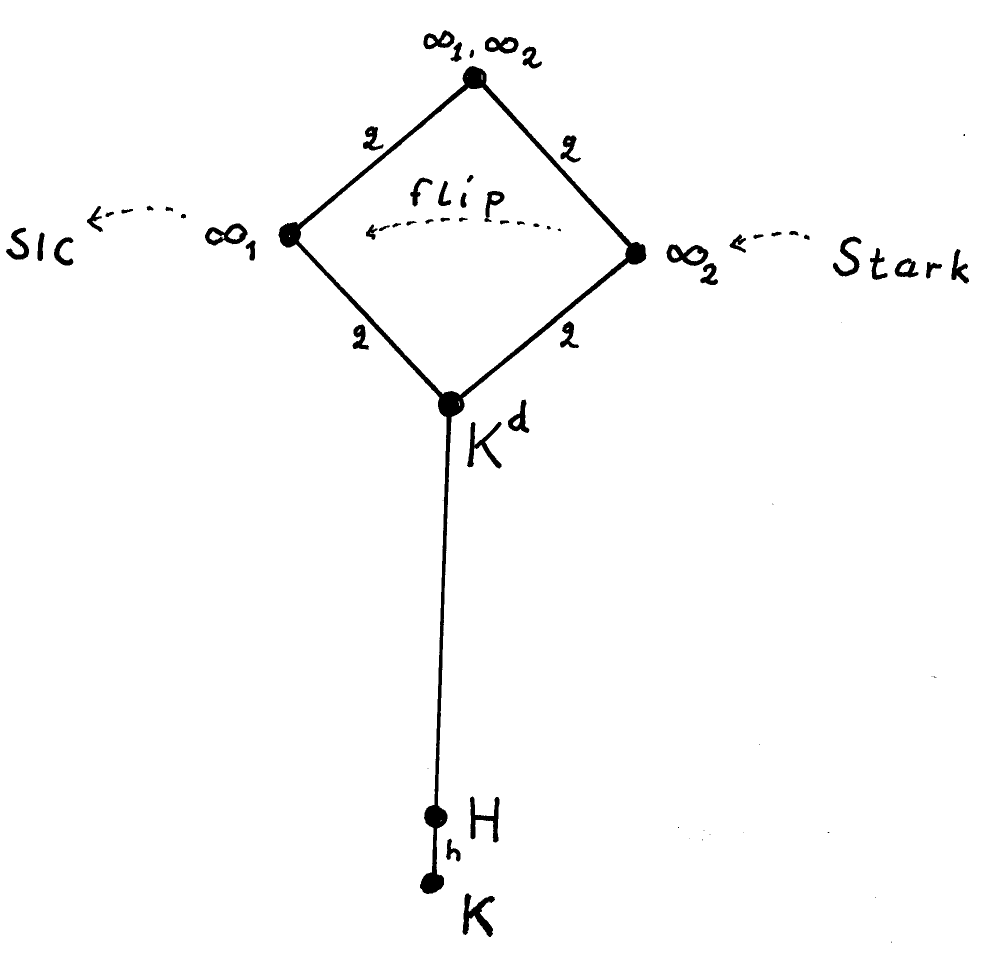} } }
\caption{\small{A field inclusion diagram including the real quadratic field $K$, 
the Hilbert class field $H$, and the four ray class fields $K^d$, $K^{d\infty_1}$, 
$K^{d\infty_2}$, and $K^{d\infty_1\infty_2}$. The degree $h_K$ of the extension 
from $K$ to $H$ determines the number of unitarily non-equivalent minimal SICs. 
To go from the real field 
$K^{d\infty_2}$ to the complex field $K^{d\infty_1}$, as required by the 
procedures that have been proposed for constructing SICs, we have to flip the 
sign of $\sqrt{D}$. This is where we rely on an exact description of the Stark units.}} 
\label{fig:Nonmin2}
\end{figure}

\paragraph{The class number $h_K$ and prime factorisation} 
At the bottom of the lattice 
sits the real quadratic field $K = {\mathbb{Q}}(\sqrt{D})$. It 
is extended to the Hilbert class field $H$. 
The Galois group Gal$(H/K)$ is important in the SIC problem because it acts on the minimal SIC 
to create a multiplet of $h_K$ unitarily inequivalent SICs \cite{AFMY}. 
Here $h_K$ is the {\it class number}, namely the order of the \emph{ideal class group}~$\cK$ 
of $\ZK$, the ring of integers in $K$. The ideal class group is isomorphic to the Galois 
group Gal$(H/K)$, so the class number is also the degree of the extension. 
If $h_K = 1$ the extension is trivial. 
If $h_K>1$ prime factorisation in $\ZK$ fails to be unique, and we may have to 
express the \emph{ideal} $p\ZK$ as a product of \emph{non-principal ideals}. 
Ideal factorisation is unique. 

Regardless of the value taken by $h_K$ it can 
happen that a rational prime $p$ does not remain prime in $\ZK$. 
If $p\ZK$ is a prime ideal in $\ZK$ then $p$ is said to be {\it inert}. 
It {\it splits} if there exist ideals $\partial$ and 
$\overline{\partial}$ such that $p\ZK = \partial\overline{\partial}$, and it {\it ramifies} 
if it is a power of a prime ideal. For the real quadratic fields $K = {\mathbb{Q}}(\sqrt{D})$, and for 
odd primes $p$, this behaviour is determined by the value of $D$ modulo $p$. 
If $D$ mod $p$ is a quadratic residue the prime splits, and if it is a 
non-residue the prime is inert. If $p$ divides into the discriminant it ramifies, 
$p\ZK = \mathfrak{p}^2$ where $\mathfrak{p}$ is a unique ideal 
(principal or non-principal) of $\ZK$. For odd primes this happens if 
and only if $p \lvert D$. 

\paragraph{Ray class fields} The key extension is that from Hilbert's class field to the 
ray class field with {\it modulus} $d$. This actually refers to the principal ideal 
$d\ZK$, but since $d$ is a rational integer our notation ignores this. When $d = 1$ 
we recover the Hilbert class field. 
To the finite modulus $d$ one can add one or two `infinite places', and 
this gives rise to the diamond at the top of Figure \ref{fig:Nonmin2}. The two infinite 
places correspond to the two different 
ways in which $\sqrt{D}$ can be embedded in the real numbers. When the 
finite part of the modulus equals $d$, as it appears in the key equation 
(\ref{Eq1}), the degree of the ray class field always increases by a factor 
of 2 when an infinite place is added \cite{AFMY}. This gives us four ray class 
fields to consider, conveniently denoted by $K^d$, $K^{d\infty_1}$, $K^{d\infty_2}$, 
and $K^{d\infty_1\infty_2}$. 

The first of these is totally real, meaning that 
regardless of how it is embedded in the complex field it consists only 
of real numbers. The next two are isomorphic in an abstract sense, but once we 
choose a definite embedding of $\sqrt{D}$ as a positive or negative real number one of them 
consists only of real numbers while the other contains genuinely complex 
numbers. 
In these specific situations where we know that one of the fields is real 
and the other is complex, we have chosen the ordering $\infty_1,\infty_2$ 
in such a way that~$K^{d\infty_2}$ is real and $K^{d\infty_1}$ complex. 
Moreover since we shall have need of actual fixed conjugate 
embeddings~$\jj \colon K\lra\R$ and~$\jj^\tau \colon K\lra\R$, we clarify the notation here. 
We denote by $\tau$ the generator of the Galois group $\Gal{K}{\Q}$. 
The embedding~$\jj$ represents the $\infty_2$ 
equivalence class in which every mapping sends 
$\sqrt{D}$ to the positive real number. 
Similarly its $\tau$-conjugate $\jj^\tau$ is in 
the $\infty_1$ equivalence class and sends $\sqrt{D}$ to a negative real number. 
This ordering was chosen to align with conventions in the Magma programme~\cite{Magma}.

The field $K^{d\infty_1\infty_2}$ is the largest of 
the four, and contains the cyclotomic field generated by the $d$th roots 
of unity as a subfield~\cite[\S4.2]{AFMY}. We 
need the roots of unity to construct the SIC because we 
need them to represent the Weyl--Heisenberg group, but it appears 
that the fiducial vector can always be chosen so that the SIC overlaps 
belong to the field $K^{d\infty_1}$ (or to the field $K^{2d \infty_1}$ if $d$ 
is even, but as we will see this is irrelevant in our context). 

The catch is that the Stark units that we calculate using Stark's 
procedure belong to the real field $K^{d\infty_2}$. If we know their minimal 
polynomial over $K$, then they can be turned into the complex units that we need by means 
of the Galois transformation $\sqrt{D} \rightarrow - \sqrt{D}$, and they will 
indeed be found to sit on the unit circle. The minimal polynomial is needed also 
to remove the sign ambiguities that arise when we take their square roots. Hence 
the numerical calculation of the real Stark units has to be performed with a 
precision high enough to identify the minimal polynomial, even if we are content 
to perform the rest of the calculation numerically with modest precision. 
 
\paragraph{Stark units} In outline, the calculation of the Stark units proceeds as follows. Let $A$ be 
a ray ideal class modulo ${\mathfrak{m}}$, where the finite part of ${\mathfrak{m}}$ 
is an ideal in $\ZK$, the ring of integers in $K$.  For us, the finite part 
will usually be a principal ideal generated by an integer from $\ZK$. Hence 
$A$ is an element in the {\it ray class group} modulo ${\mathfrak{m}}$ 
(see ref. \cite{Neukirch}, chapter VI). 
Then we define a (partial) zeta function as 

\begin{equation} \zeta (s, A) = \sum_{{\bf a} \in A}N({\bf a})^{-s} \ . 
\end{equation}

\noindent Here $N({\bf a})$ is the norm of the ideal ${\bf a}$. This function 
carries information about the real quadratic field to which it is attached. 
It has a pole at $s = 1$, but we can construct a function without poles by 
first introducing a certain involution $R$ and then defining 

\begin{equation} \delta(s,A) = \zeta (s,A) - \zeta (s,RA) \ . 
\label{involution} \end{equation}

\noindent Stark then conjectures \cite{Stark1} that one obtains algebraic units through 
the value at $s=0$ of the derivative of this expression with respect to~$s$: 

\begin{equation} \mbox{S}_\sigma = e^{\delta^\prime (0,\sigma )} \ . \end{equation}

\noindent Here an element $\sigma$ of the Galois group of the extension is 
used as a label, which is possible because of the Artin isomorphism between the 
ray class group modulo ${\mathfrak{m}}$ and the Galois group Gal$(K^{{\mathfrak{m}}}/K)$. 
In most cases these Stark units are expected to generate 
the number field in which they sit. For the actual calculation of the Stark 
units we rely on in-built Magma \cite{Magma} commands for computing Hecke $L$-functions. 
These are related to the zeta function by means of a kind of glorified Fourier 
transformation. A short description of the Magma program we use for the 
calculation can be found in refs. \cite{ABGHM, BGM}. 

According to the Stark--Tate conjectures the square root of a Stark unit belongs 
to an abelian extension of the ray class field which may or may not be a trivial 
extension \cite{Tate}. This is important here because while the overlaps belong 
to $K^{d\infty_1}$ the overlap units that we aim to calculate may or may not 
do so. This depends on whether the factor $\sqrt{d+1}$ that appears in 
equation (\ref{SIC}) belongs to that field. See ref. \cite{AFMY} for a precise 
statement. Since almost all our calculations will concern squares of overlaps 
we can largely ignore this issue, just as we can ignore the field $K^{2d \infty_1}$ 
which appears at this point if $d$ is even. 

Let us nevertheless elaborate a little: 
choose $d = 5$, say, for which $\sqrt{d+1}$ does 
not belong to the ray class field. If the actual overlaps do, it follows that the overlap 
units cannot be Stark units. They can be square roots of Stark units though, because 
in this case these square roots belong to the abelian extension $K^{d\infty_1}(\sqrt{d+1})$. 
This provides a (weak) rationale for why we need to take square roots. 

It is worth mentioning that the behaviour described in Figure \ref{fig:Nonmin2} 
holds for modulus $d$, but it is by no means universal. If the field is 
unaffected by the addition of an infinite place to the finite modulus then the 
involution that appears in equation (\ref{involution}) is absent, and 
the Stark units collapse to $+ 1$. We will encounter examples 
of this phenomenon below, and it will be the subject of Section 9. 

\paragraph{The unit group and the dimension towers} 
A key role is played by the {\it unit group} $\ZKx$; 
that is, the multiplicative group of 
units in the ring of integers $\ZK$. It has 
a torsion part $\{ 1, -1\}$ and in addition there is an 
infinite cyclic group generated 
by a {\it fundamental unit} $\uf$. We also define the totally positive unit 

\begin{equation} u_D = \frac{d_1-1 + \sqrt{(d_1+1)(d_1-3)}}{2} \ . \end{equation}

\noindent The subscript $1$ in $d_1$ refers to \eqref{dees} below. 
This unit is {totally positive} in the sense that it is positive regardless of the sign 
we assume for the square root. If $d_1- 3$ is a 
square it is the case that $u_D = \uf^2$, in all 
other cases $u_D = \uf$. 

If we fix the quadratic field by fixing the square-free 
number $D$ then the key equation (\ref{Eq1}) admits an infinite sequence of solutions 
$\{d_\ell \}_{\ell = 1}^\infty$ for $d$ \cite{AFMY}. In fact

\begin{equation}\label{dees} 
d_\ell = d_\ell(D) = u_D^\ell + u_D^{-\ell} + 1 .
\end{equation}

\noindent We refer to the resulting sequences of dimensions as {\it AFMY towers}. If we fix the 
quadratic field by setting $D = 5$, for example, then the corresponding tower begins 

\begin{equation} \{ d_\ell \}_{\ell = 1}^\infty = \{ 4, 8, 19, 48, 124, 323, 
844, 2208, 5799, 15128 , 39604, \dots \} \ . \label{tower} \end{equation}

The degree of the ray class field with modulus $d$ is 
inversely proportional to the 
multiplicative order of the image of the fundamental 
unit $\uf$ in $\ZK/d{\ZK}$, 
the integers taken modulo $d$. In our case this 
order is always divisible by 3, and this 
is related to the fact that the symmetries 
exhibited by SICs are found to be of an 
order divisible by 3; something first glimpsed by 
Zauner \cite{Zauner, Scott1, Marcus}. 
In fact the order of the unit $u_D$ modulo $d_\ell\ZK$ is $3\ell$. 
Thus the symmetries 
of the minimal SIC increase as we climb the tower, 
and using this knowledge 
the minimal SIC has been found, for example, in exact 
form in all dimensions listed in (\ref{tower}), with 
the exception of $d_8 = 2208$---which is the 
highest dimension in which a SIC has been found by means of a 
numerical search \cite{Fibonacci}---and 
$d_{10} = 15128$, where as yet no SIC has been found. 

If $d_\ell  - 3$ is a square then the fundamental unit $\uf$ has  
order $6\ell$ modulo~$d_\ell$. Then the degree of the ray class field drops by 
a factor of two, and the minimal SIC responds by having anti-unitary symmetry. 
For reasons that are not fully understood, fiducial vectors for SICs with 
anti-unitary symmetry can be constructed using square roots of Stark units in 
a `small' subfield, which means that exact SICs are available in high dimensions 
of this form, including some five digit ones \cite{ABGHM,BGM}.

\section{The lattice of divisors of $d$}\label{sec:sec3}

\noindent An important fact is that if the modulus of one ray class field 
divides that of another then the first ray class field is a subfield of 
the other (see Proposition II.4.1.1 in ref. \cite{gras}).
Hence the ray class field $K^{d\infty_1}$ will have a lattice of subfields 
corresponding to the divisor lattice of the integer $d$. This will manifest 
itself in the SIC through the occurrence of {\it baby overlaps}, 
that is to say some subsets of overlaps that belong to the various subfields. 
In this section we will lay out how the same pattern of baby overlaps arises 
from a consideration of the Weyl--Heisenberg group in dimension $d$, 
assuming that the SIC is left invariant by a symmetry of order three. The 
reader may find the argument somewhat loose. At the end of the section we 
will discuss one possible proof strategy.

\begin{figure}
\centerline{ \hbox{
		\includegraphics[width=55mm]{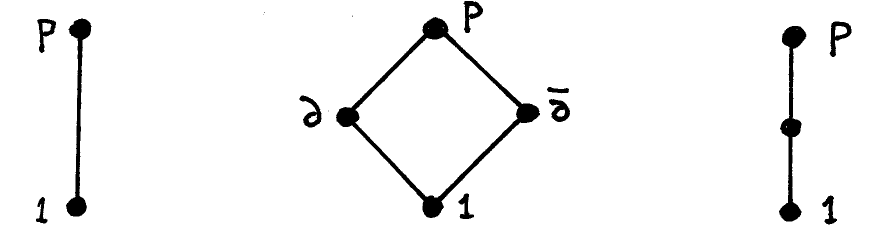} } }
\caption{\small{The lattice of divisors of a rational prime $p$ in a 
quadratic field, depending on whether $p$ is inert, splits, or ramifies.}} 
\label{fig:gitterp1}
\end{figure}

\begin{figure}
\centerline{ \hbox{
		\includegraphics[width=65mm]{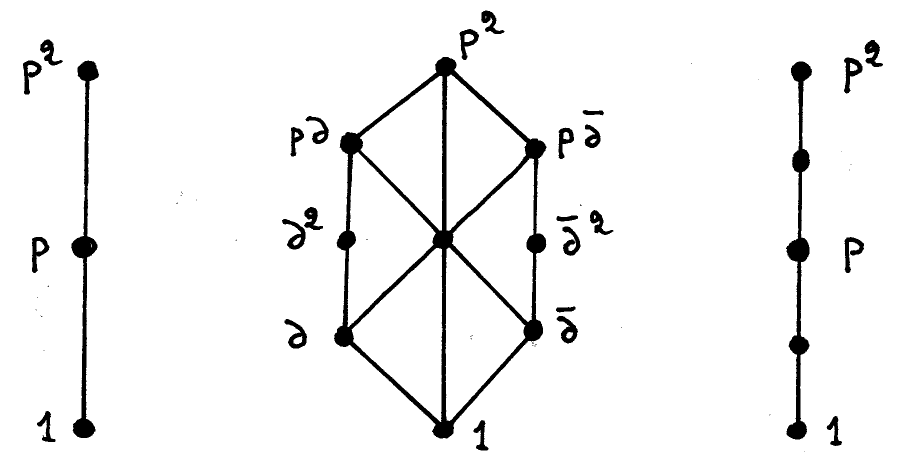} } }
\caption{\small{For $d=p^2$: the lattice of divisors of $p^2$, depending 
on whether $p$ is inert, splits, or ramifies.}} 
\label{fig:gitter22}
\end{figure}
 
We can obtain non-trivial divisor lattices already in prime dimensions $d = p$, 
because as noted in section \ref{sec:sec2} a rational prime $p$ may be composite 
considered as an integer in the quadratic field $K = \mathbb{Q}(\sqrt{D})$. If $D$ is a 
quadratic residue mod $p$ then $p$ splits, that is to say that there exist ideals $\partial$ 
and $\bar{\partial}$ such that $p =\partial \bar{\partial}$, and if $p$ divides the 
discriminant then $p$ ramifies, that is is to say it is a power of a prime ideal. 
For now we are only interested in the behaviour of primes that divide $d$, 
and the discriminant is determined by $d$ through the key formula (\Ref{Eq1}). 
We observe that 

\begin{equation} p \lvert d \hspace{5mm} \Rightarrow \hspace{5mm} 
f_0^2\Delta_0 = (d+1)(d-3) \equiv -3 \ \mbox{mod} \ p \ . \end{equation}

\noindent Let $p$ be an odd prime, so that 
$p$ divides $\Delta_0$ if and only if $p$ divides $D$. 
Hence $D$ will be a quadratic residue mod $p$ if and only if $-3$ 
is. But from quadratic reciprocity we know 
that $-3$ mod $p$ is a quadratic residue if and only if $p = 1$ mod 3. 
As long as $f_0$ and $d$ are relatively prime (this can fail only if $p = 3$) 
we conclude that $p$ 
is inert if $p=2$ mod 3, it splits if $p=1$ mod 3, and it ramifies if $p = 3$. 
The resulting divisor lattices are shown in Figure \ref{fig:gitterp1} for 
prime dimensions, and in Figure \ref{fig:gitter22} for prime-squared dimensions. 
If $d$ has several prime factors the divisor lattices become more complicated, 
but they can easily be worked out case by case. 

\begin{figure}
        \centerline{ \hbox{
        		\includegraphics[width=50mm]{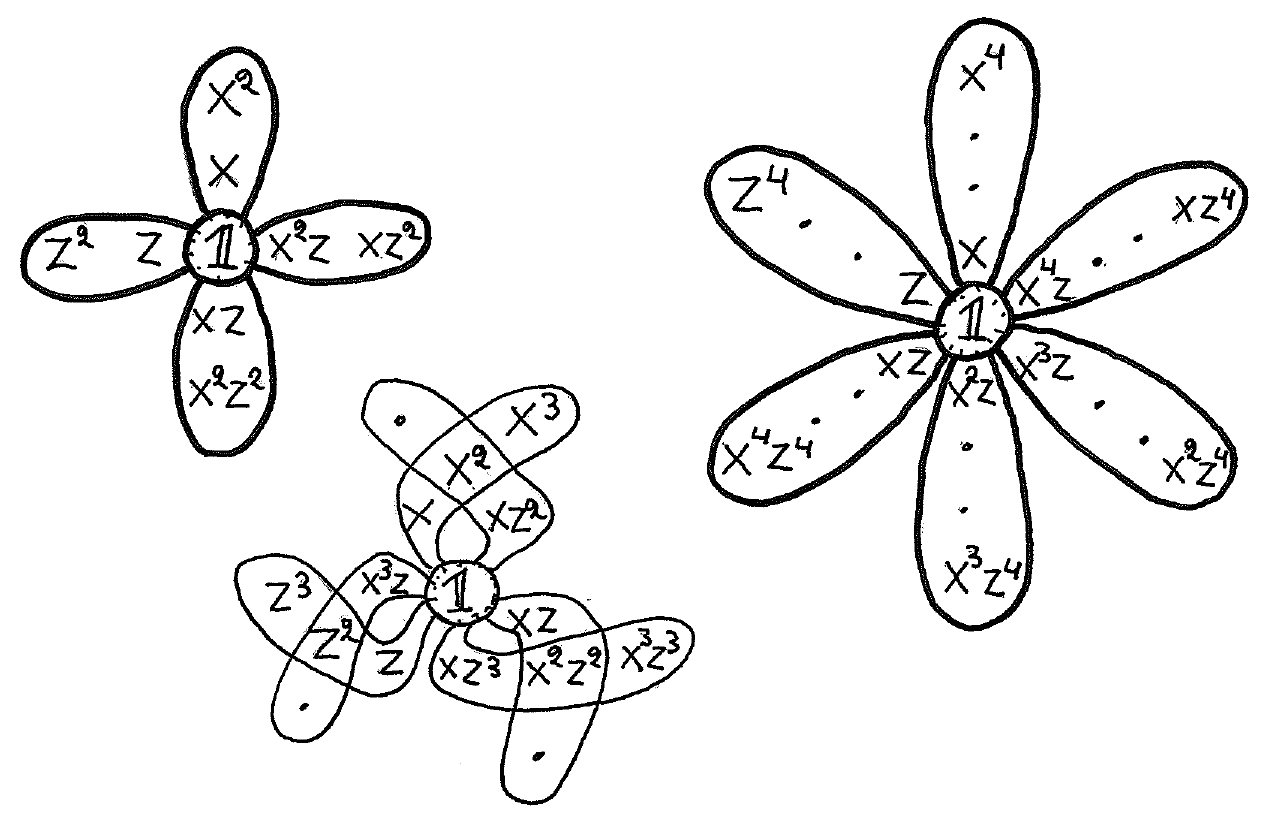} } }
        \caption{\small When the dimension is a prime $p$, 
        the Weyl--Heisenberg group 
        consists of $p+1$ cyclic subgroups 
        having only the unit element in common. The result is a flower with 
        $p+1$ petals. When $d$ is a prime power the petals of the flower intertwine 
				in a characteristic fashion, as illustrated here for $d = 2^2$.}
        \label{fig:blomma1}
\end{figure} 

Now we turn to the Weyl--Heisenberg group. Necessary definitions can be found 
in Appendix \ref{sec:A1}. The first observation to be made here is that in composite dimensions $d$ 
the group is a direct product of the Weyl--Heisenberg group in the prime power factors, 

\begin{equation} H(d) = H(p_1^{k_1}p_2^{k_2} \dots p_n^{k_n}) = 
H(p_1^{k_1})\times H(p_2^{k_2}) \times \dots \times H(p_n^{k_n}) 
\ . \end{equation}

\noindent This mirrors how prime power factors of $d$ give rise to subfields of $K^d$. Concretely, 
prime power dimensions will serve as building blocks for composite dimensions, 
hence we focus on prime power dimensions alone. It is helpful to be able to 
visualize how the maximal abelian subgroups sit inside $H(p^k)$. In prime dimensions 
$d = p$ the group can be divided into $p+1$ cyclic subgroups having only the 
identity element in common. Let us refer to them as {\it petals} of a {\it flower}. 
See Figure \ref{fig:blomma1}. In prime power dimensions the cyclic subgroups become 
entwined with each other in a characteristic fashion. The relevant group theory is 
explained elsewhere \cite{BGM, monomial}; here we just refer to Figure 
\ref{fig:blomma1} which illustrates the case $d = 2^2$. In general, if 
$d = p^2$ there are $p+1$ entwined petals formed from $p$ petals each. Each 
such entwined petal contains $(p-1)(p^2 +1)$ operators, including $p-1$ operators 
of order $p$. For higher powers of $p$ the counting can be worked 
out with modest effort.

We refer to elements of $H(d)$ as {\it displacement operators} $D_{i,j}$, 
and recall that the symplectic group acts on the displacement operators according 
to equation (\ref{symplektiskt}) in Appendix \ref{sec:A1}. If we fix the representation 
so that the group element $Z = D_{0,1}$ is diagonal then any symplectic operator that 
transforms the petal generated by $Z$ into itself will be represented by a 
monomial matrix. If in addition it transforms the petal generated by $X = D_{1,0}$ into 
itself it will be represented by a permutation matrix. Otherwise it will be 
represented by a complex Hadamard matrix whose entries are roots of unity 
\cite{Marcus}. Our key assumption is that the SIC has {\it Zauner symmetry}, 
that is to say that it is invariant under a unitary transformation representing a 
symplectic matrix of order three and trace equal to $-1$ mod $d$  \cite{Zauner, Marcus}. 
A possible choice that works in every dimension is 

\begin{equation} {\cal Z} = \left( \begin{array}{cc} 0 & - 1 \\ 1 & - 1 
\end{array} \right) \ . \end{equation}

\noindent With a suitable choice of the fiducial vector the SIC is invariant 
under this transformation in the sense that 

\begin{equation} \langle \Psi_0 \lvert U_{\cal Z}D_{i,j}U_{\cal Z}^{-1} \lvert \Psi_0\rangle 
= \langle \Psi_0 \lvert D_{-j,i-j} \lvert \Psi_0\rangle = \langle \Psi_0 \lvert D_{i,j} \lvert \Psi_0 
\rangle \ . \end{equation}

\noindent If the position $\ell$ of the dimension in the AFMY towers is higher 
than one the symmetry may be larger, and if the dimension is of the form $d = n^2+3$ the symmetry 
may include an anti-unitary symmetry as well. A different complication occurs if the 
dimension $d>3$ is divisible by 3 
but not by 9, because then there exists another conjugacy class of Zauner unitaries. 
In dimensions of the form $d = 3(3n+1)$ this leads to SICs having a symmetry 
of `type $F_a$' \cite{Scott1, Scott2}. They will be discussed in Section \ref{sec:sec7}. To keep things 
simple we momentarily ignore all special cases, and we also focus on $d = p$. 
It will then be observed that Zauner symmetry divides the $p+1$ petals of a 
flower into $k$ triplets and zero, two, or one, singlets, depending respectively upon whether $p \equiv 2, 1 \textrm{ or } 0 \bmod 3$. 

For the latter two cases it is then convenient to choose a different representative 
of the conjugacy class of order three matrices. For $p = 1$ mod 3 and $p = 3$, 
respectively, we choose $\alpha\in\Z$ with $2\leq\alpha\leq p-2$ and $\gamma \in \{1,2\}$ such that 

\begin{equation} {\cal Z}_1 = \left( \begin{array}{cc} \alpha & 0 \\ 0 & \alpha^{-1} 
\end{array} \right) \ \ \mbox{respectively} \ \ {\cal Z}_2 = 
\left( \begin{array}{cc} 1 & 0 \\ \gamma & 1 \end{array} \right) \ , 
\end{equation}

\noindent where $\alpha \neq 1$ satisfies~$\alpha^3 \equiv 1 \bmod p$ 
(such integers exist if and only if $p \equiv 1 \bmod 3$). 
The unitary transformation 
$U_{{\cal Z}_1}$ transforms the petals 
generated by $Z$ and $X$ into themselves, while $U_{{\cal Z}_2}$ transforms the 
petal generated by $Z$ into itself. Hence $U_{{\cal Z}_1}$ is a 
permutation matrix, and $U_{{\cal Z}_2}$ is at least monomial. Choosing a Zauner 
matrix of the form ${\cal Z}_1$ means that the special petals in the flower are 
those generated by $X$ and $Z$, and we always make this choice if 
$d \equiv 1 \bmod 3$. Overlap units of the form 

\begin{equation} \sqrt{d+1}\langle \Psi_0 \lvert Z^j \lvert \Psi_0\rangle = e^{i\theta_{0,j}} 
\label{Zoverlap} \end{equation}

\noindent will be referred to as {\it $Z$-overlaps}. $X$-overlaps 
are similarly defined. They will, in fact, become 
baby overlaps if $d \equiv 1 \bmod 3$. 

We are now ready to link the unitary geometry to the arithmetic in a more 
definite fashion. Let $S$ denote the symplectic symmetry group of the SIC, 
and let $M(S)$ be the 
centraliser of $S$ within the group $GL(2, {\mathbb{Z}}/d{\mathbb{Z}})$. 
The claim, made by Appleby et al. \cite{AFMY}, is that 

\begin{equation} \mbox{Gal}(K^{d\infty_1}/H) \simeq M(S)/S \ . \end{equation}

\noindent The Galois group acts on the overlaps, while the group $M(S)/S$ 
acts on the displacement operators according to the recipe in Appendix \ref{sec:A1}. 
For $d = p \equiv 2 \bmod 3$ it turns out that the action of these groups is transitive. 
For $d = p \equiv 1 \bmod 3$ the group transforms the petals generated by $X$ and $Z$ into 
themselves (assuming that we have arranged the fiducial vector suitably). The corresponding 
overlaps must then sit in proper subfields of the full field. If, say, $d = p^2$ and $d = p \equiv 1 \bmod 3$ there will be 
several short orbits of displacement operators, consisting of the subgroups generated by 

\begin{equation} \{ Z^p\}, \ \{ Z^p\}, \ \{ Z\}, \ \{ X^p, Z^p\}, \ \{ X\}, 
\ \{ X^p, Z\}, \ \{ X, Z^p\} \ . \end{equation}

\noindent Each such subgroup corresponds to a subfield, and indeed to an element of the second 
divisor lattice shown in Figure \ref{fig:gitter22}. 

The claim then is that every element in the divisor lattice for $d$ 
corresponds to baby overlaps sitting in a subfield of $K^{d\infty_1}$. 
Moreover, if baby overlaps occur it is possible to arrange the fiducial 
vector so that the $Z$-overlaps, and usually also the $X$-overlaps, 
are baby overlaps. We also claim that all the overlap units are (products 
of) square roots of Stark units raised to the power $n$, where the 
exponent is determined by the rule 

\begin{equation} n\times (\mbox{number of overlaps in the orbit}) = 
 \lvert S \lvert  \lvert \mbox{Gal} \lvert  \ , \label{Grasslrule} \end{equation}

\noindent and where in turn the Galois group is 
that of the appropriate subfield (over $H$). 
This rule was suggested by Markus Grassl \cite{Markuspower}. 

Can any of this be proved from first principles? This is a difficult matter, 
because at the time of writing SIC existence has been proved only in those 
dimensions where a SIC has been explicitly calculated. Hence we have to 
rely on some conjecture for a conditional proof. A particularly favourable case 
concerns SICs with anti-unitary symmetry, which occur 
whenever $d$ is of the form~$n^2+3$ for some $n\geq1$. 
In this case there exists a conjectural `formula' for a SIC fiducial 
vector constructed from square roots of Stark units in a proper subfield of 
$K^{d\infty_1}$ \cite{ABGHM, BGM}. The formula has been tested in close to a 
hundred cases, including (with appropriate modifications) some non-minimal 
SICs. The formula immediately implies that all the $Z$-overlaps 
are trivial, and that all the $X$-overlaps lie in the subfield containing 
the fiducial vector. With sufficient attention to detail the formula 
in fact allows one to deduce what subfields hold which baby overlaps, for 
all of the latter \cite{BGM}. Proving that the various overlaps are 
indeed given by Stark units is more difficult. It has not been done \cite{BG}. Nor has 
it been explained why we assumed a symmetry of order three in the first 
place. A comment, but no answer, can be found in Appendix \ref{sec:A2}.  

Hence, the situation is currently still far from understood. 
But the evidence to be reported below supports the claims made above.

\section{Overlaps for minimal SICs}\label{sec:sec4}

\noindent In this section we will analyse precisely how Stark units give the overlaps for 
minimal SICs, simply by giving the results for a set of selected examples. 
We begin with some known facts about prime dimensions. Suppose first that $d = p \equiv 2$ mod 3. 
Then there are no non-trivial 
divisors of $d$. Hence we are dealing with a single ray class field, and the 
order of its Galois group is~(see \cite{AFMY})

\begin{equation}  \lvert \mbox{Gal} \lvert  = 
 \lvert \mbox{Gal}(K^{d\infty_1}/H) \lvert  = \frac{p^2-1}{3\ell} \ . \end{equation}

\noindent The symmetry of the minimal SIC is of order $3\ell$ where $\ell$ is the 
position of $d$ in its AFMY tower, so the order of the Galois group equals the number 
of distinct overlaps in the SIC. We will find that all the 
overlap units are square roots of Stark units in $K^{d\infty_1}$. 

If $d = p \equiv 1$ mod 3 the prime splits over $K$, $p = \partial \overline{\partial}$, 
and it is seen in Figure \ref{fig:gitterp1} that we have three distinct Galois 
groups to consider. The details turn out to depend on whether $p-3$ is a square 
or not. If $p-3$ is not a square the orders of the Galois groups are 

\begin{eqnarray}  \lvert \mbox{Gal}(K^{d\infty_1}/H) \lvert  = \frac{(p-1)^2}{3\ell} 
\hspace{13mm} \nonumber \\ \\ 
 \lvert \mbox{Gal}(K^{\partial \infty_1}/H) \lvert  = 
 \lvert \mbox{Gal}(K^{\overline{\partial} \infty_1}/H) \lvert  = \frac{p-1}{3\ell} \ . 
\nonumber \end{eqnarray}

\noindent The symmetry is again of order $3\ell$. There will be three distinct 
Galois orbits of overlaps, and we will arrange the fiducial vector so that 
the two `small' orbits correspond to the cyclic subgroups of the Heisenberg 
group that are generated by $X$ and by $Z$. There are $p-1$ $Z$-overlaps and 
equally many $X$-overlaps, and these overlap units are found to be square 
roots of Stark units in the relevant subfield of the full ray class field. 

If $d = p$ is a prime of the form~$n^2 + 3$ the situation changes. 
We have \cite{ABGHM} 

\begin{eqnarray}  \lvert \mbox{Gal}(K^{d\infty_1}/H) \lvert  = \frac{(p-1)^2}{6\ell} \ , 
\hspace{8mm}  \lvert \mbox{Gal}(K^{\partial \infty_1}/H) \lvert  = \frac{p-1}{3\ell} \ , \nonumber \\ 
\label{from16} \\  
K^{\overline{\partial} \infty_1} = K^{\overline{\partial}} \ . \hspace{33mm} \nonumber \end{eqnarray}

\noindent The fact that $K^{\overline{\partial} \infty_1} = K^{\overline{\partial}}$ 
has the consequence that Stark's construction \cite{Stark1} applied to this 
subfield results in units that are simply equal to $+ 1$. We arrange the 
fiducial vector so that these trivial units occur as $Z$-overlaps, while the 
$X$-overlaps belong to $K^{\partial \infty_1}$. 

The minimal SIC has anti-unitary symmetry whenever $d = n^2+3$, not necessarily 
a prime, and the order of the symmetry group~$S$ is~$ \lvert S \lvert  = 6\ell$. An important point is that 
it can be proved, using pure Hilbert space arguments, that a SIC with 
anti-unitary symmetry must have $d-1$ trivial baby overlap units, equal to  $+ 1$ in odd 
dimensions  \cite{ABGHM,BGM}. With a suitable choice of the fiducial vector 
they are precisely the $Z$-overlaps. 

{\small 
\begin{table}[h]
\caption{{\small Minimal SIC overlaps in prime dimensions. For $d = p \equiv 2$ 
mod 3 the examples are $d = 5$, 11, 17, 23, 53, for $d = p = \partial 
\overline{\partial} \equiv 1$ mod 3 they are $d = 13$, 31 when $p-3$ is not a 
square. For $d = p$ of the form $n^2+3$, exact SICs were computed in thirteen cases 
\cite{ABGHM}, although the `large' Stark units S$_p$ were computed only for 
$d = 7$, 19, 67.}}
  \smallskip \smallskip
\hskip 1.0cm
{\renewcommand{\arraystretch}{1.6}
\begin{tabular}
{|c|c|c|c|c|c|}\hline \
Dimension & Type & \#  & $ \lvert \mbox{Gal} \lvert $ & $ \lvert S \lvert $ & Overlap \\
\hline \hline 
$d = p = 2$ mod 3 & $D$ & $p^2-1$ & $\frac{p^2-1}{3\ell}$ & $3\ell$ & 
$(\mbox{S}_p)^\frac{1}{2}$ \\ 
\hline \hline 
$d = p = 1$ mod 3 & $Z$ & $p-1$ & $\frac{p-1}{3\ell}$ & $3\ell$ 
& $(\mbox{S}_{\overline{\partial}})^\frac{1}{2}$ \\ 
$d \neq n^2+3$ & $X$ & $p-1$ & $\frac{p-1}{3\ell}$ & $3\ell$ 
& $(\mbox{S}_{\partial})^\frac{1}{2}$ \\ 
& $D$ & $(p-1)^2$ & $\frac{(p-1)^2}{3\ell}$ & $3\ell$ & $(\mbox{S}_p)^\frac{1}{2}$ \\ 
\hline \hline 
$d = p = 1$ mod 3 & $Z$ & $p-1$ & -- & -- & 1 \\ 
$d = n^2+3$ & $X$ & $p-1$ & $\frac{p-1}{3\ell}$ & $6\ell$ & $\mbox{S}_{\partial} $ \\ 
& $D$ & $(p-1)^2$ & $\frac{(p-1)^2}{6\ell}$ & $6\ell$ & $(\mbox{S}_p)^\frac{1}{2}$ \\ 
\hline 
\end{tabular}
}
\label{tab:primes}
\end{table} 
}

With this preamble we can present our evidence that SICs really behave 
as stated. See Table \ref{tab:primes}. This is the first of many tables 
where $D$ stands for a generic displacement operator, $X$ and $Z$ occur 
separately if they give rise to baby overlaps, $\#$ is the total number of overlaps 
accounted for, $ \lvert \mbox{Gal} \lvert $ stands for the order of the Galois group 
Gal$(K^{{\mathfrak{m}}_0\infty_1}/H)$ for the relevant finite modulus ${\mathfrak{m}}_0$ (at least 
for now---this has to be modified a little when we come to the non-minimal 
SICs), $ \lvert S \lvert $ is the order of the symmetry group, and S$_{{\mathfrak{m}}_0}$ is a 
Stark unit in the ray class field indicated by the subscript. 
With this information in hand it can 
be checked by inspection that Grassl's rule (\ref{Grasslrule}) correctly 
predicts the exponent of the square rooted Stark units that give 
the overlaps. When $d = p \equiv 1$ mod 3 the $X$-overlaps belong to 
$K^{\partial \infty_1}$, and the rule tells us that the square rooted Stark 
units should appear raised to the power $n=2$, 

\begin{equation} n\times (p-1)=6\ell \times (p-1)/3\ell \hspace{5mm} 
\Rightarrow \hspace{5mm} n = 2 \ . \end{equation}

\noindent For the `generic' or `large' overlaps we find $n = 1$ so we 
expect them to be square roots of Stark units in $K^{d\infty_1}$, and indeed 
they are. The extent to which the claims of a 
given table has been verified is stated in its caption. 

There are no surprises in prime power dimensions $d = p^k$, at least not when 
$p > 3$. We expect that $d \neq n^2+3$ and $\ell = 1$ for all these dimensions, 
where $\ell$ is the position in the AFMY tower. If so the order of the 
symmetry group is always $ \lvert S \lvert  = 3$. We give 
the pattern of unit overlaps for $d = p^2$ in Table 
\ref{tab:prim2}. It faithfully mirrors the lattice of divisors shown in Figure 
\ref{fig:gitter22}. When $p = 3$ the prime 3 ramifies, $(3) = {\mathfrak{p}}^2$, 
and we encounter a `short' orbit for which Grassl's rule correctly gives the 
power $3$ for the square rooted Stark units. 

{\small 
\begin{table}[h]
\caption{{\small Minimal SIC overlaps for $d = p^2$. The 
examples are $d = 25$, 49, 9.}}
 \smallskip \smallskip
\hskip 0.8cm
{\renewcommand{\arraystretch}{1.4}
\begin{tabular}{|c|c|c|c|c|c|}
\hline 
Dimension & Type & $\#$ & $ \lvert \mbox{Gal} \lvert $ & $ \lvert S \lvert $ & Overlap \\
\hline \hline
$d = p^2$ & $D^p$ & $p^2 - 1$ & $\frac{(p^2-1)}{3}$ & $3$ 
& $(\mbox{S}_p)^\frac{1}{2}$ \\ 
$p = 2$ mod 3 & $D$ & $(p^2-1)p^2$ & $\frac{(p^2-1)p^2}{3}$ & $3$ 
& $(\mbox{S}_d)^\frac{1}{2}$ \\ 
\hline \hline  
$d = p^2$ & $Z^p$ & $p - 1$ & $\frac{p-1}{3}$ & $3$ 
& $(\mbox{S}_{\overline{\partial}})^\frac{1}{2}$ \\ 
$p=1$ mod 3 & $Z$ & $(p - 1)p$ & $\frac{(p-1)p}{p}$ & $3$ 
& $(\mbox{S}_{\overline{\partial}\overline{\partial}})^\frac{1}{2}$ \\ 
 & $X^pZ$ & $(p - 1)^2p$ & $\frac{(p-1)^2p}{3}$ & $3$ 
& $(\mbox{S}_{p\overline{\partial}})^\frac{1}{2}$ \\ 
 & $X^p$ & $p - 1$ & $\frac{p-1}{3}$ & $3$ 
& $(\mbox{S}_{\partial})^\frac{1}{2}$ \\ 
 & $X$ & $(p - 1)p$ & $\frac{(p-1)p}{3}$ & $3$ 
& $(\mbox{S}_{\partial \partial})^\frac{1}{2}$ \\ 
 & $XZ^p$ & $(p - 1)^2p$ & $\frac{(p-1)^2p}{3}$ & $3$ 
& $(\mbox{S}_{p\partial})^\frac{1}{2}$ \\ 
 & $X^pZ^p$ & $(p - 1)^2$ & $\frac{(p-1)^2}{3}$ & $3$ 
& $(\mbox{S}_p)^\frac{1}{2}$ \\ 
 & $D$ & $(p - 1)^2p^2$ & $\frac{(p-1)^2p^2}{3}$ & $3$ 
& $(\mbox{S}_d)^\frac{1}{2}$ \\ 
\hline \hline 
$d = 3^2$ & $Z^3$ & 2 & 2 & 3 & $(\mbox{S}_3)^\frac{3}{2}$ \\ 
 & $D^3$ & 6 & 2 & 3 & $(\mbox{S}_3)^\frac{1}{2}$ \\ 
 & $Z$ & 18 & 6 & 3 & $(\mbox{S}_{3{\mathfrak{p}}})^\frac{1}{2}$ \\ 
 & $D$ & 54 & 18 & 3 & $(\mbox{S}_d)^\frac{1}{2}$ \\ 
\hline
\end{tabular}
}
\label{tab:prim2}
\end{table}
}

When the dimension is even we need to extend the centre of the 
Weyl--Heisenberg group so that it has order $2d$ in order to obtain a 
clean description of the SICs \cite{Marcus}. But this is irrelevant for 
our present purposes. The Stark units that appear in the overlaps are 
those coming from the ray class field with finite modulus $d$, together with 
its subfields. The extension to $2d$ is needed only to take the relevant 
square roots. The case $d = 2^2$ actually fits into Table \ref{tab:prim2} 
if we take into account that $K^{2,\infty_1} = K^2$ so that the corresponding 
Stark units are equal to $+1$. We give examples of the form $d = 2p$ in Table 
\ref{tab:jamna}. Note that $d_\ell = 2p$ implies $\ell = 1$, so $ \lvert S \lvert  = 3$ 
for all such dimensions \cite{BGM}. This is fortunate because the second row of 
the Table would not make sense otherwise. For $d = 6$ we see again that a 
factor 3 in the dimension results in overlaps given by square rooted Stark units raised to 
the power 3 because one of the orbits is `short'. 

{\small 
\begin{table}[h]
\caption{{\small Non-trivial minimal SIC overlaps in twice odd dimensions. The 
Hilbert space is ${\mathbb{C}}^2\otimes {\mathbb{C}}^p$. The operators $D\otimes 
{\bf 1}$ give trivial overlaps and are not listed. The 
examples are $d = 10$, 6. There are no surprises in $d = 14$ which is 
not shown.}}
 \smallskip \smallskip
\hskip 0.9cm
{\renewcommand{\arraystretch}{1.4}
\begin{tabular}{|c|c|c|c|c|c|}
\hline 
Dimension & Type & $\#$ & $ \lvert \mbox{Gal} \lvert $ & $ \lvert S \lvert $ & Overlap \\
\hline \hline
$d = 2p$ & ${\bf 1}\otimes D$ & $p^2 - 1$ & $(p^2-1)/3$ & $3$ 
& $(\mbox{S}_p)^\frac{1}{2}$ \\ 
$p = 2$ mod 3  & $D\otimes D$ & $3(p^2 - 1)$ & $p^2-1$ & $3$ 
& $(\mbox{S}_d)^\frac{1}{2}$ \\
\hline 
$d = 6$ & ${\bf 1}\otimes Z$ & 2 & 2 & 3 & $(\mbox{S}_3)^\frac{3}{2}$ \\ 
 & ${\bf 1}\otimes D$ & 6 & 2 & 3 & $(\mbox{S}_3)^\frac{1}{2}$ \\ 
 & $D\otimes Z$ & 6 & 2 & 3 & $(\mbox{S}_{\mathfrak{p}} )^\frac{1}{2}$ \\ 
 & $D\otimes D$ & 18 & 6 & 3 & $(\mbox{S}_d )^\frac{1}{2}$ \\ 
\hline
\end{tabular}
}
\label{tab:jamna}
\end{table}
}

{\small 
\begin{table}[h]
\caption{{\small Non-trivial minimal overlaps for $d = n^2 + 3 = 4p$, $p = 1$ 
mod 3. The Hilbert space is ${\mathbb{C}}^4\otimes {\mathbb{C}}^p$. 
The examples are $d = 28$, 52, 124, 172, 292, 844. This includes cases with 
$\ell = 5$, 7. For the last three dimensions only the first five rows were 
checked.}} \smallskip \smallskip
\hskip 1.5cm
{\renewcommand{\arraystretch}{1.0}
\begin{tabular}
{|c|c|c|c|c|}\hline \ 
Type & \# & $ \lvert \mbox{Gal} \lvert $ & $ \lvert S \lvert $ & Overlap \\
\hline \hline
${\bf 1} \otimes X$ & $p-1$ & $(p-1)/3\ell$ & $6\ell$ & $\mbox{S}_\partial $ \\
$D_{0,2}\otimes X$ & $3(p-1)$ & $(p-1)/\ell$ & $6\ell$ & $\mbox{S}_{2\partial}$ \\
${\bf 1}\otimes D$ & $(p-1)^2$ & $(p-1)^2/6\ell$ & $6\ell$ & $(\mbox{S}_p)^\frac{1}{2}$ \\ 
$D_{0,2} \otimes D$ & $3(p-1)^2$ & $(p-1)^2/2\ell$ & $6\ell$ & $(\mbox{S}_{2p})^\frac{1}{2}$ \\
$D\otimes {\bf 1}$ & 12 & 2 & $6\ell$ & $(\mbox{S}_4)^\frac{\ell}{2}$ \\ 
$D\otimes Z$ & $12(p-1)$ & $2(p-1)/\ell$ & $6\ell$ & 
$(\mbox{S}_{4\overline{\partial}})^\frac{1}{2}$ \\
$D \otimes X$ & $12(p-1)$ & $2(p-1)/\ell$ & $6\ell$ & $(\mbox{S}_{4\partial})^\frac{1}{2}$ \\
$D \otimes D$ & $12(p-1)^2$ & $2(p-1)^2/\ell$ & $6\ell$ & $(\mbox{S}_{d})^\frac{1}{2}$ \\
\hline 
\end{tabular}
}
\label{tab:4pfaser}
\end{table}
}

For generic composite dimensions the lattice of divisors becomes more 
complicated without becoming more interesting. There are a few things to 
notice. First, if $d = p_1p_2$ with neither of the primes being of the form 
$n^2+3$ it can still happen that their product is. Then some trivial overlaps 
occur in the product dimension, an example being $403 = 13\cdot 31$. Conversely, 
$7\cdot 19 = 133$, and there are no trivial overlap phases there (or should not 
be---we have not checked). Two very special cases are dimensions of the 
form $d = 3(3n+1)$ or of the form $d = n(n-2)$. They will be discussed 
separately, in  Sections \ref{sec:sec7} and \ref{sec:sec8}. 

A comment on the calculations is in order. For the examples in $d = p \equiv 2$ mod 3 
we constructed the exact SIC using Kopp's method \cite{Kopp1}. This entails 
calculating the Stark units numerically to enough precision so that we can 
determine their minimal polynomial. With this in hand we can perform the 
`flip' indicated in Figure \ref{fig:Nonmin2}, and take the global square root 
if necessary. 
We then have to find an exact root of the resulting polynomial, act on this root 
with the Galois group, and place the resulting numbers in the correct order as 
overlap units. This will allow us to construct the SIC using the Schwinger 
formula (\ref{Schwinger}), 
and check with an exact calculation that the resulting 
expression really is a projector of rank 1. The details can be found in Kopp's 
paper. For $d = 53$ the calculation is made easier by the fact that the symmetry 
is of order nine ($\ell = 3$), but it nevertheless took 4 hours to calculate 
the Stark units and 13 hours to find the exact root that we needed (using 
Magma). For most of the remaining examples we took a more pragmatic approach. 
We calculated the minimal polynomial of the complex Stark units in exact 
form, and then checked numerically to high precision that the overlap units 
raised to the suitable power provide its roots. 

When the dimension is of the form $d = n^2+3$ we can again resort to exact 
calculations, and give many more examples. For prime dimensions of this form 
the results were given in Table \ref{tab:primes}. For the case $d = n^2+3 = 4p$ 
see Table \ref{tab:4pfaser}, which is taken from ref. \cite{BGM}. Overlap units 
equal to $\pm 1$ were not included in this table. Note that $\ell$, 
the position in the AFMY towers, occurs in the exponents of the Stark units in one 
of the rows, as predicted by the rule (\ref{Grasslrule}). 

\section{The lattice of divisors of $f_0$}\label{sec:sec5}

We now turn to the non-minimal SICs. Our first task 
is to explain the role of the integer $f_0$ that appears in 
the formula (\ref{Eq1}). Let $f$ be any divisor of $f_0$. Kopp and 
Lagarias use these integers to define subrings of the ring of 
integers in the quadratic field, and to define a new kind of ray class 
field which depends not only on the modulus $d$ but also on the 
integer $f$ \cite{Kopp5}. They refer to $f$ as the {\it conductor} (namely 
of a non-maximal ring of integers in the quadratic field). Our homemade 
notation for these new ray class fields is $K_{L}^{d; f}$. Clearly 
$K_{L}^{d;1} = K^d$, and as usual the 
degree of $K_{L}^{d\infty_1; f}$ is twice that of the totally real field 
$K_{L}^{d;f}$. It turns out that $K_{L}^{d;f}$ is always a subfield of 
$K^{fd}$. We will be interested in whether it is a proper subfield, 
or not. For this reason we define the {\it excess} $e$ as the index of the 
extension, 

\begin{equation} \mbox{excess} = e = [ K^{fd}: K^{d;f}_L ] = 
\frac{\mbox{degree}(K^{fd})}{\mbox{degree}
(K_{L}^{d;f})} \ . \label{excess} \end{equation}

\noindent If the excess equals one, then the two number fields are identical 
(this is also the case if we add one or two infinite places to the moduli). 

\begin{figure}
\centerline{ \hbox{
		\includegraphics[width=30mm]{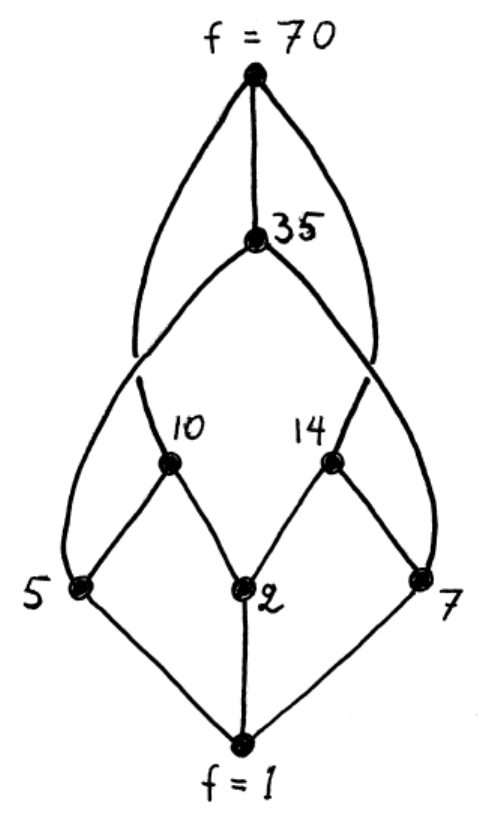} } }
\caption{\small{The lattice of divisors $f$ of $f_0 = 70$ for $d = 199$, with 
$K=\Q(\sqrt{2})$. 
For $f = 1$ the class number is just $h_K$ which is 1. 
For $f = 70$ it is~$12$, and 
the excess $e = 12$ too. 
The SICs corresponding to $f = 1$ and~5 
have anti-unitary symmetry and 
have been constructed from Stark units in exact form. The remaining 
examples are predictions by Kopp and Lagarias, most of which have 
been verified in unpublished work by Grassl.}} 
\label{fig:gitter199}
\end{figure}

Kopp and Lagarias go on to claim that for each divisor $f$ of $f_0$ there 
exists a SIC that can be constructed using their new ray class field $K_L^{d;f}$ 
\cite{Kopp2}. For $d \leq 90$ the resulting `spectrum' of geometrically 
inequivalent SICs agrees with the numerical findings of Scott \cite{Scott2}, 
which are believed to be complete for these dimensions. Here it should be 
recalled that SICs are collected first into anti-unitarily equivalent 
multiplets based on their symmetries, and then these geometric multiplets 
are collected into Galois multiplets where the number of geometric 
multiplets in a given Galois multiplet is equal to the class number $h_K$ of 
the quadratic field~$K$ if the SICs are minimal, and equal to the degree of 
a ring class field in the general case. The conjecture evidently has 
predictive content. We give the example $d = 199$ in Figure 
\ref{fig:gitter199}. Equation (\ref{Eq1}) yields $f_0 = 70$, so there is 
a lattice of no less than eight divisors. The SICs with $f = 1$ 
and $f = 5$ are known. 
The remaining cases have not yet been publicly reported. 

The ray class fields with modulus $d$ and conductor $f$ are subfields of 
the ray class field with modulus $fd$ and conductor 1, and often 
proper subfields. Nevertheless our thesis is that the SIC overlaps are 
always given by products of Stark units in the ray class field $K^{fd \infty_1}$ 
and in some of its subfields. Before we present the evidence we will prove 
a corollary to a theorem in ref. \cite{Kopp2}. First we define the totally positive unit 

\begin{equation} u = \frac{d-1 + \sqrt{(d+1)(d-3)}}{2} \ . 
\label{unitu} \end{equation}

\noindent Whether or not this is a fundamental unit in the quadratic field~$K$ 
depends on the position of $d$ in the AFMY tower of dimensions 
connected to~$K$ (and on whether or not  
$d-3$ is a square) \cite{AFMY}. If $d = d_1$ it equals $u_D$, the first 
totally positive power of a fundamental unit. 

Next, let $\Phi(f)$ be Euler's totient function and 

\begin{equation} \Phi_K(f) = f^2\prod_{p \lvert f}\left(1 - \frac{1}{p}\right) 
\left( 1 - \mbox{$\underline{\Delta_0} \choose p$}
\frac{1}{p}\right) \end{equation}

\noindent its generalisation to quadratic fields. Here 
${\underline{\Delta_0} \choose p}$ is the Kronecker symbol, equal to the Legendre 
symbol if $p$ is odd, while if $p=2$ it equals 0 if $\Delta_0$ is even, 1 if 
$\Delta_0 = \pm 1$ mod 8, and $-1$ if $\Delta_0 = \pm 3$ mod 8. It tells us 
whether the prime $p$ ramifies, splits, or remains inert in~$K$. 

There will be a complication if $d$ and $f$ have a common factor. By inspection 
of the key equation (\ref{Eq1}) we see that $\gcd(d,f) \in \{1,3\}$, and if 
$\gcd(d,f) = 3$ then $d \equiv 3$ mod 9. 
We will largely ignore this special case until we come to Section \ref{sec:sec7}, 
but it will be included in the statement of the proposition. 

We can now state our observation:

\

\begin{proposition}
{\sl If $d$ and $f$ are relatively prime 
the excess as defined above is} 
\begin{equation} \mbox{excess} = \left\{ \begin{array}{lc} \Phi (f) & 
\mbox{if} \ u^3 \equiv 1 \bmod fd \\ \\ \Phi(f)/2 & \mbox{otherwise} \ . 
\end{array} \right. \end{equation} 
\noindent {\sl If $d$ and $f$ are both divisible by 3 then the excess is $3/2$ times 
this in each case.} 
\end{proposition}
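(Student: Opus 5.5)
We compute both degrees with the standard class-number formula for ray class groups and take the ratio. For the ordinary ray class field with modulus $\mm$ over $K$ we have
\[
[K^{\mm}:K] = \frac{h_K\,\lvert(\ZK/\mm)^\times\rvert}{[\ZKx : \ZKx\cap K_{\mm,1}]},
\]
where the denominator is the size of the image of the unit group in $(\ZK/\mm)^\times$ (for the totally real field, with no infinite places). For the Kopp--Lagarias field $K_L^{d;f}$ we use the analogous formula for the ring class group of conductor $f$ (Kopp--Lagarias \cite{Kopp5}). The ray class group of the order $\mathcal{O}_f=\Z+f\ZK$ modulo $d$ has order
\[
\frac{h(\mathcal{O}_f)\,\lvert(\mathcal{O}_f/d\mathcal{O}_f)^\times\rvert}{[\mathcal{O}_f^\times:\mathcal{O}_{f,d,1}^\times]}, \qquad
h(\mathcal{O}_f) = \frac{h_K\,\Phi_K(f)}{\Phi(f)\,[\ZKx:\mathcal{O}_f^\times]} .
\]
Since $\gcd(d,f)=1$, we have $(\mathcal{O}_f/d)^\times\cong(\ZK/d)^\times$. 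By the Chinese remainder theorem, $\lvert(\ZK/fd)^\times\rvert = \Phi_K(f)\,\lvert(\ZK/d)^\times\rvert$.

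Substituting these into the ratio, the factors $h_K$, $\Phi_K(f)$ and $\lvert(\ZK/d)^\times\rvert$ cancel, leaving
\[
e = \Phi(f)\cdot \frac{[\ZKx:\mathcal{O}_f^\times]\,[\mathcal{O}_f^\times:\text{units}\equiv1 \bmod d]}{[\ZKx:\text{units}\equiv 1\bmod fd]} .
\]
Both numerator and denominator count cosets of subgroups of $\ZKx=\{\pm1\}\times\langle\uf\rangle$. The numerator is the index in $\ZKx$ of the units lying in $\mathcal{O}_f$ and $\equiv1 \bmod d$. The denominator is the index of the units $\equiv 1\bmod fd$. Hence
\[
e=\Phi(f)/[U_1:U_2],
\]
where $U_1=\{\varepsilon\in\mathcal{O}_f^\times:\varepsilon\equiv1\bmod d\}$ and $U_2=\{\varepsilon\equiv1\bmod fd\}$. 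The sign $-1$ is excluded from both groups because $d>2$. So the whole problem reduces to computing the index $[U_1:U_2]$, which should turn out to be $1$ or $2$.

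For this we use the explicit unit $u$ of \eqref{unitu}. We have $u+u^{-1}=d-1$ and $u^2-(d-1)u+1=0$, so $u^3\equiv1\bmod d$: modulo $d$ the relation reads $u^2+u+1\equiv0$. Next, $2u=(d-1)+f\sqrt{\Delta_0}$, or a similar expression in terms of $f$ times the generator of $\ZK$. This shows $u\in\Z+f\ZK$ up to the parity care when $\Delta_0\equiv1 \bmod 4$, and more precisely $u\equiv (d-1)/2$-ish, i.e. $u\equiv$ a rational integer $\bmod f\ZK$.

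From this the step to verify is that $u^3\equiv \pm1 \bmod f$. Since $u\equiv a\in\Z \bmod f\ZK$ with $2a\equiv d-1$ and $a^2-(d-1)a+1\equiv0$, we get $a^2\equiv1$, hence $a\equiv\pm1$ when $f$ is odd prime-power-like. The case $f$ even would need a separate check. If $\pm$ is uniform the claim follows.

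Granting that, $u^3$ generates (up to sign) the units in $U_1$, since $U_1$ is determined by the order of $u \bmod d$ being $3$ together with membership in $\mathcal{O}_f$. Then:

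1. If $u^3\equiv1\bmod fd$, we get $U_1=U_2$ and $e=\Phi(f)$.
2. Otherwise $u^3\equiv-1\bmod f$ (with $f>2$), so $u^6\equiv1\bmod fd$ and $[U_1:U_2]=2$, giving $e=\Phi(f)/2$.

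The main obstacle is justifying that $u$ (and not a proper root of it, e.g. $\uf$ when $u=\uf^2$ or $d$ is higher in the tower) is the relevant generator. One must check that a smaller power of $\uf$ cannot lie in $U_1$. I would handle this by noting that any unit $\equiv1\bmod d$ lying in $\mathcal{O}_f$ is a power of $u^3$, using that the order of $u_D$ modulo $d_\ell$ is $3\ell$, as quoted earlier.

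Finally, for $3\mid\gcd(d,f)$ the CRT step fails at the prime $3$. We then redo the local count at $3$: $(\mathcal{O}_f/d)^\times$ versus $(\ZK/fd)^\times$ at $3$ differ by an extra factor, which we expect to produce the additional $3/2$. Here $d\equiv3\bmod9$ controls the $3$-adic valuations.
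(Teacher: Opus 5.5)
For $\gcd(d,f)=1$ your argument has the same skeleton as the paper's: a ratio of class-number formulas, cancellation of $h_K$, $\Phi_K(f)$ and $\Phi_K(d)$, reduction to a unit index, then the unit $u$. There are two differences. You re-derive the degree of $K_L^{d;f}$ from the exact sequence for $\mathcal{O}_f$, where the paper quotes \cite[Thm.~6.5]{Kopp2}. And you attack $u$ modulo $f$ through $u\in\mathcal{O}_{f_0}\subseteq\mathcal{O}_f$ (note $2u=(d-1)+f_0\sqrt{\Delta_0}$, with $f_0$, not $f$), where the paper expands $u^6$.

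Two steps in the coprime case need repair.

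First, you neither need nor in general have $u^3\equiv\pm1\bmod f$. Your own congruence $a^2\equiv1\bmod f$ already gives $u^2\equiv1\bmod f\ZK$. By CRT this yields $u^6\equiv1\bmod fd$, so $u^3$ has order $1$ or $2$ modulo $fd$. That order is all $[U_1:U_2]$ depends on, and no separate treatment of even $f$ is needed. Your case 2, ``otherwise $u^3\equiv-1\bmod f$'', is false. For $d=199$, $K=\Q(\sqrt2)$, $f_0=70$, you have $u=99+70\sqrt2$. With $f=35$ this gives $u^3\equiv u\equiv 29\bmod 35\ZK$, and $29^2\equiv1$ but $29\not\equiv\pm1$. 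The conclusion $[U_1:U_2]=2$ survives only because the order is what matters.

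Second, $U_1=\langle u^3\rangle$ does not follow from the order $3\ell$ of $u_D$ modulo $d_\ell$ alone. A priori $U^1_d$ could contain units $-u_D^i$, or, when $\nm\uf=-1$, odd powers of $\pm\uf$. What excludes them is Theorem~\ref{hydrahead}(B) for $\mm_0=d$ \cite[Prop.~10]{AFMY}. Via Lang's formula it gives $U^1_d=U^1_{d\infty_1\infty_2}$, so every unit $\equiv1\bmod d$ is totally positive and hence a power of $u_D$. Only then do you get $U^1_d=\langle u_D^{3\ell}\rangle=\langle u^3\rangle\subseteq\mathcal{O}_f$, and hence $U_1=U^1_d$. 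The paper uses these facts tacitly as well.

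The clause for $3\mid\gcd(d,f)$ is a genuine gap. ``We expect'' proves nothing, and precisely here your CRT steps break down. The paper obtains the factor $\frac32$ by quoting the $(d,f)=3$ case of \cite[Thm.~6.5]{Kopp2}. On your route you would have to establish three things.
(i) Since $3\Vert d$ and $\mathcal{O}_f/3\mathcal{O}_f\cong\mathbb{F}_3[x]/(x^2)$ has only $6$ units, one gets $\lvert(\ZK/fd)^\times\rvert=\frac32\,\Phi_K(f)\,\lvert(\mathcal{O}_f/d\mathcal{O}_f)^\times\rvert$. This replaces both $\Phi_K(fd)=\Phi_K(f)\Phi_K(d)$ and $(\mathcal{O}_f/d)^\times\cong(\ZK/d)^\times$, which now fail.
(ii) $U_1$ is now defined by congruence modulo $d\mathcal{O}_f$, which is finer than modulo $d\ZK$, so you must check it is still $\langle u^3\rangle$. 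This follows from $u^3-1=d\cdot\frac{d(d-3)+(d-2)f_0\sqrt{\Delta_0}}{2}$, whose second factor lies in $\mathcal{O}_{f_0}$.
(iii) $u^6\equiv1\bmod fd$ no longer follows from separate congruences modulo $f$ and modulo $d$. It does follow from the paper's explicit expansion, which exhibits $u^6-1$ as $df_0$ times an element of $\ZK$.
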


\

\paragraph{Proof:}
We rely on a result by Kopp and Lagarias, namely 
that (see Theorem 6.5 in ref. \cite{Kopp2})  

\begin{equation} \mbox{degree}(K_L^{d, \Sigma ; f}) = \left\{ \begin{array}{lcl} 
\frac{2^{ \lvert \Sigma  \lvert }h\Phi_K(f)}{\Phi (f)}\times 
\mbox{degree}(K^{d}) & \mbox{if} & (d,f) = 1 \\ \\ 
\frac{2}{3}\frac{2^{ \lvert  \Sigma  \lvert }h\Phi_K(f)}{\Phi (f)}\times 
\mbox{degree}(K^{d}) & \mbox{if} & (d,f) = 3 \end{array} \right. \end{equation}

\noindent Here $ \lvert \Sigma \lvert $ is the number of infinite places, and $h_K = \lvert\cK\lvert$ is 
the class number of~$K$. For the ray class field with finite 
modulus $d$ ramified at one infinite place we know (see ref. \cite[Ch VI \S1 eq~(15)]{Lang}), that 

\begin{equation} \mbox{degree}(K^{d\infty_1}) = \frac{2h\Phi_K(d)}
{[U_K:\U{d}]} \ , \label{Lang} \end{equation}

\noindent where $[U_K:\U{d}]$ is the index of the subgroup of the unit group $U_K$ 
consisting of units that equal 1 mod $d$. Putting things together, and 
using the multiplicative property of $\Phi_K$, we find

\begin{equation} {\rm excess} = \Phi(f) \frac{[U_K:\U{d}]}{[U_K:\U{fd}]} \ , 
\label{index} \end{equation}

\noindent unless $d$ and $f$ have a common factor in which case we obtain

\begin{equation} {\rm excess} = \frac{3}{2} 
\Phi(f) \frac{[U_K:\U{d}]}{[U_K:\U{fd}]} \ , \label{index3} \end{equation}

\noindent We need to calculate the orders of $U_K/\U{d}$ and $U_K/\U{fd}$. 
Because the torsion group $\pm 1$ is included in $U_K$ this is equal 
to twice the order of a positive fundamental unit. For the unit defined in eq. 
(\ref{unitu}) we calculate 

\begin{equation} \begin{array}{lc} 
u^3 =  1 + \frac{d^2(d-3) + d(d-2)\sqrt{(d+1)(d-3)}}{2} \\ \\ 
u^6 =  1 + \frac{d^2(d+1)(d-3)(d-2)^2 + d(d-1)(d-2)(d^2-2d -2) 
\sqrt{(d+1)(d-3)}}{2} \ . \end{array} \end{equation}

\noindent The unit $u$ may not be a fundamental unit, but this complication 
cancels between numerator and denominator in equation (\ref{index}). By 
inspection we see that $u^3 = 1$ mod $d$ and $u^6 = 1$ mod $fd$. This suffices 
to prove the theorem as stated.  \qed 

\

Our proposition provides a supplement to the more abstract description in ref. 
\cite{Kopp2}. To compute the excess is now a question of checking whether 
$u^3 \equiv 1$ mod $fd$. We find that the excess equals 1 for $f = 2$ whenever it occurs, 
namely for $d \equiv 3$ mod 4. It may be equal to 1 also if $\Phi (f) = 2$, namely if 
$f = 3$, 4, or 6. Table \ref{tab:KoppL} gives the excess in a few 
interesting cases. 

{\small 
\begin{table}[h]
\caption{{\small The excess for some values of $f$.}}
 \smallskip \smallskip
\hskip 1.6cm
{\renewcommand{\arraystretch}{1.2}
\begin{tabular}
{|c|c|c||c|c| c|}\hline 
$f$ & Dimensions  & excess & $f$ & Dimensions & excess \\
\hline \hline 
2 & $d = 7 + 4k$  & 1 & 8 & 63 + 64k & 2 \\
3 & $d = 8 + 9k$ & 1 & & 3 + 64k & 4 \\ 
\ & $d = 3 + 9k$ & 3 & & 19 + 64k & 2 \\
4 & $d = 15 + 16k$  & 1 & & 47 + 64k & 2 \\
 & $d = 3+16k$ &  2 & 15 & 224 + 225k & 4 \\ 
5 & $d = 24 + 25k$ & 2 & & 3 + 225k & 12 \\ 
 & $d = 3 + 25k$ & 4 & & 53 + 225k & 4 \\ 
6 & $d = 35 + 36k$ & 1 & & 174 + 225k & 6 \\ 
 & $d = 3 + 36k$ & 3 & & & \\ 
\hline 
\end{tabular}
}
\label{tab:KoppL}
\end{table}
}

There are two further points to make. First, for minimal SICs 
we observed that the Galois group Gal$(H/K)$, where $H$ is the Hilbert 
class field, gives rise to multiplets of geometrically inequivalent SICs. 
For non-minimal SICs the role of the Hilbert class field is taken over by 
a ring class field defined for the particular subring of integers that 
is defined by the conductor $f$. 

Then we will be interested to know for what values of $d$ and $f$  
the SIC exhibits anti-unitary symmetry. This happens only if the 
underlying quadratic field~$K$ has a fundamental unit of negative norm, which 
means that the degree of the ray 
class field with modulus $d$ goes down by a factor of 2. The SIC 
responds to this by having an anti-unitary symmetry, so that the number of 
distinct overlaps decreases by a factor of 2. In general, if $d$ 
is of the form $n^2+3$ then 
a unit of negative norm is 

\begin{equation} u_0 = \frac{n + \sqrt{n^2+4}}{2} = 
\frac{n + f_u\sqrt{\Delta_0}}{2} \ . \end{equation}

\noindent Squaring it, we obtain the totally positive unit 

\begin{equation} u = u_0^2 = \frac{n^2+2 + n\sqrt{n^2+4}}{2} = 
\frac{d-1 + nf_u\sqrt{\Delta_0}}{2} \ , \end{equation}

\noindent where $f_u$ is an integer. According to Kopp and Lagarias \cite{Kopp2} 
non-minimal SICs are attached to a non-maximal ring of integers, and 
(conjecturally) non-minimal SICs have anti-unitary symmetry if and only if $u_0$ 
belongs to that ring. It follows that SICs associated to a divisor $f$ will 
have anti-unitary symmetry if and only if $f \lvert f_u$, which is a more stringent condition 
than $f \lvert f_0$ \cite{Kopp4}.

\section{Overlaps for non-minimal SICs}\label{sec:sec6}

\noindent We continue to build our phenomenology through the simple expedient of 
calculating examples. For this purpose we need to know what examples we can 
realistically expect to calculate. 
Tables \ref{tab:numerologi} and \ref{tab:numerologi2} in Appendix \ref{sec:A3} list low 
dimensional examples for which non-minimal SICs occur. The degrees of the number 
fields in which we are interested tend to grow (quickly) with $d$ and $f$, so 
this gives some idea of where to find accessible examples. When anti-unitary 
symmetry occurs we can go to high dimensions. 

What we find is that the overlap units for a SIC with a given conductor $f >1$, 
having divisors $f_1, \dots , f_n$ where $f_1 = 1$ and $f_n = f$, are products 
of overlap units for SICs with conductors that divide $f$, times a new unit 
coming from the field with conductor equal to $f$ itself. If the excess $e > 1$ this 
new unit cannot be a Stark unit from $K^{fd \infty_1}$, since this would generate too 
large a field. Let $K_L^{d;f}$ be the Kopp--Lagarias field in our homemade notation. 
The idea is that 

\begin{equation}  \lvert \mbox{Gal}(K^{fd}/K_L^{d;f)} \lvert  = e \ . \end{equation}

\noindent If $e = 1$ the new unit is, as expected, a square root of a Stark unit 
in $K^{fd \infty_1}$ raised to some suitable power. 
But suppose that the excess is $e > 1$. What we can do then 
is to calculate Stark units in $K^{fd \infty_1}$, identify the Galois 
group of the extension $K^{fd \infty_1}/K_L^{d \infty_1; f}$, and act by 
it on one of the Stark units that we calculated. This gives an orbit consisting of $e$ units, 
and we multiply all of them together so that we obtain a unit left invariant by 
this Galois group. In other words, we take a {\it relative 
norm}. We end up with a collection of $e$-plets of Stark units that lie 
in $K_L^{d \infty_1; f}$, and these $e$-plets are the very units that appear in the overlaps. 
We furthermore find that Grassl's rule (\ref{Grasslrule}) correctly determines all the 
exponents of the square rooted Stark units that appear, provided it is applied 
separately to the contributions from each subfield, and provided we remember 
that the relevant Galois group is that which keeps the ring class field fixed. 

The examples support this simple picture quite consistently. A drawback is 
evidently that we have to perform calculations in fields of high 
degree. Also, in practice it may not be so easy to identify the 
Kopp--Lagarias field. 
On the bright side it may happen that even if 
$e > 1$ the excess for a subfield giving some of the baby overlaps may equal 
1. An example is $d = \partial \overline{\partial}= 199$, $f = 5$, $e = 2$, 
where the subfield $K^{5\partial, \infty_1}$ has excess 1. In this case 
an exact SIC fiducial vector with anti-unitary symmetry is easy to compute using the algorithm in ref. 
\cite{ABGHM}. 

As usual dimensions $d = p \equiv 2$ mod 3 show the simplest pattern, but 
unfortunately we do not find any examples with excess $e > 1$ 
among those we were able to compute. For the examples in Table \ref{tab:1och2mod3} 
we performed the calculations exactly using Kopp's method \cite{Kopp1}. 
Using Magma for $d = 23$, $f = 2$, it took 14 hours to 
compute the Stark units in the larger field, 12 days to find a root of 
the minimal polynomial, and 29 minutes to verify that we have a SIC. 
The first case with $e>1$ occurs for $d = 47$, $f = 8$, $e = 2$, and 
calculating the minimal polynomial for the Stark units in this case is 
beyond us. In the examples we did calculate there are non-trivial 
ring class fields, giving rise to two (for $d = 11$, 17) and four (for $d = 23$) 
unitarily inequivalent SICs, respectively, in the non-minimal multiplets. 

{\small 
\begin{table}[h]
\caption{{\small Non-minimal SIC overlaps in prime dimensions. The examples 
are $d= 11$, 23; 31; 19, 67, 5779 for $f = 2$, $d = 17$ for $f = 3$, $d = 31$ 
for $f = 4$, and $d = 199$ for $f = 5$. The seventh and last row was checked for $d = 19$ 
only, and does not apply for $d = 199 = 14^2 + 3$ because $K^{5\cdot 199\infty_1}$ 
has $e = 2$.}}
  \smallskip \smallskip
\hskip 0.5cm
{\renewcommand{\arraystretch}{1.6}
\begin{tabular}
{|c|c|c|c|c|c|}\hline 
Dimension & Type & \#  & $f=1$ & $f = 2$, 3, 5 & $f = 4$ \\
\hline \hline
$p=2$ mod 3 & $D$ & $p^2-1$ & $(\mbox{S}_{p})^\frac{1}{2}$ & $(\mbox{S}_{p}
\mbox{S}_{fp})^\frac{1}{2}$ & -- \\ 
\hline \hline 
$p =1$ mod 3 & $Z$ & $p-1$ & $(\mbox{S}_{\bar{\partial}})^\frac{1}{2}$ & 
$(\mbox{S}_{\bar{\partial}}\mbox{S}_{f\bar{\partial}})^\frac{1}{2}$ & 
$(\mbox{S}_{\bar{\partial}}\mbox{S}_{f\bar{\partial}}
\mbox{S}_{4\bar{\partial}})^\frac{1}{2}$ \\
$p \neq n^2+3$ & $X$ & $p-1$ & $(\mbox{S}_{\partial})^\frac{1}{2}$ & 
$(\mbox{S}_{\partial}\mbox{S}_{f\partial} )^\frac{1}{2}$ & 
$(\mbox{S}_{\partial}\mbox{S}_{f\partial}\mbox{S}_{4\partial} )^\frac{1}{2}$ \\ 
& $D$ & $(p-1)^2$ & $(\mbox{S}_p)^\frac{1}{2}$ & 
$(\mbox{S}_{p}\mbox{S}_{fp})^\frac{1}{2}$ & \\ 
\hline \hline 
$p=1$ mod 3 & $Z$ & $p-1$ & 1 & 1 & -- \\
$p=n^2+3$ & $X$ & $p-1$ & $\mbox{S}_{\partial}$ & 
$\mbox{S}_{\partial}\mbox{S}_{f \partial}$ & -- \\
& $D$ & $(p-1)^2$ & $(\mbox{S}_p)^\frac{1}{2}$ & 
$(\mbox{S}_p\mbox{S}_{fp})^\frac{1}{2}$ & -- \\ 
\hline 
\end{tabular}
}
\label{tab:1och2mod3}
\end{table} 
}

To proceed we again lower our standards. We rest content with a check 
that the numerical overlap units provide roots of the minimal polynomial of 
an appropriate combination of Stark units. If it is too time-consuming to calculate all the 
Stark units we rest content with analysing some of the baby overlaps, 
and leave some entries in the tables blank.

The case $d = p \equiv 1$ mod 3 splits into two subcases depending on whether 
$d-3$ is a square or not. See Table \ref{tab:1och2mod3}. When $d = 31$ we come across $f = 4$. 
This has two divisors, and we have an example of a non-minimal SIC whose overlaps are 
given by products of Stark units from three different fields. The examples for 
$d-3$ being a square have anti-unitary symmetry, of order 18 for $d = 5779$, $f = 2$.  

We need to consider some examples with excess $e > 1$, so that the 
Kopp--Lagarias ray class fields are not the usual ones. Scanning Tables 
\ref{tab:numerologi} and \ref{tab:numerologi2} for accessible examples 
we find $d = 19$ and $d = 28$. The non-minimal SICs with $f > 2$ that we are about to 
consider have symmetry of order 3 only. 

{\small 
\begin{table}[ht]
\caption{{\small Overlaps for $d = n^2 + 3 =p = 1$ mod 3 when 
the SIC does not have anti-unitary symmetry. The example is $d = 19$, 
with $e = 2$ in both cases. When $f = 1$ or 2 the SIC has anti-unitary 
symmetry, and this has consequences for the way $S_\partial$ and 
$S_{2\partial}$ occur in the baby overlaps.}}
  \smallskip \smallskip
\hskip 1.0cm
{\renewcommand{\arraystretch}{1.6}
\begin{tabular}
{|c|c|c|c|}\hline 
Type & \# & $f=4$ & $f=8$ \\
\hline \hline
$Z$ & $p-1$ & $(\mbox{S}^\prime_{4\bar{\partial}}
\mbox{S}^{\prime \prime}_{4\overline{\partial}})^\frac{1}{2}$ & 
$(\mbox{S}^\prime_{4\overline{\partial}}\mbox{S}^{\prime \prime}_{4\overline{\partial}}
\mbox{S}^\prime_{8\overline{\partial}}\mbox{S}^{\prime\prime}_{8\overline{\partial}}
)^\frac{1}{2}$ \\
$X$ & $p-1$ & 
$\mbox{S}_{\partial}\mbox{S}_{2 \partial}
(\mbox{S}^\prime_{4\partial}\mbox{S}^{\prime \prime}_{4\partial})^\frac{1}{2}$ & 
$\mbox{S}_{\partial}\mbox{S}_{2\partial}(\mbox{S}^\prime_{4\partial}
\mbox{S}^{\prime \prime}_{4\partial}
\mbox{S}^\prime_{8\partial}\mbox{S}^{\prime \prime}_{8\partial})^\frac{1}{2}$ \\
$D$ & $(p-1)^2$ 
& $(\mbox{S}_p\mbox{S}_{2p}\mbox{S}^\prime_{4p}\mbox{S}^{\prime \prime}_{4p}
)^\frac{1}{2}$ & $(\mbox{S}_p\mbox{S}_{2p}\mbox{S}^\prime_{4p}
\mbox{S}^{\prime \prime}_{4p}\mbox{S}^\prime_{8p}\mbox{S}^{\prime \prime}_{8p}
)^\frac{1}{2}$\\ 
\hline 
\end{tabular}
}
\label{tab:19faser}
\end{table} 
}

{\small 
\begin{table}[h]
\caption{{\small Overlaps for $d = 28 = 4p = 4\partial \bar{\partial}$, $f = 5$. 
There is no anti-unitary symmetry, and the excess $e = 4$. The table is incomplete. 
For $f=1$ a complete table (with trivial overlaps excluded) is given in Table 
\ref{tab:4pfaser}.}} \smallskip \smallskip
\hskip 1.3cm
{\renewcommand{\arraystretch}{1.0}
\begin{tabular}
{|c|c|c|c|}\hline 
Type & \# & $f = 1$ & $f = 5$ \\
\hline \hline
${\bf 1} \otimes Z$ & $p-1$ & 1 & 
$(\mbox{S}^{\prime}_{5\overline{\partial}}\mbox{S}^{\prime \prime}_{5\overline{\partial}}
\mbox{S}^{\prime \prime \prime}_{5\overline{\partial}}
\mbox{S}^{\prime \prime \prime \prime}_{5\overline{\partial}})^\frac{1}{2}$ \\
$D_{0,2}\otimes Z$ & $3(p-1)$ & $-1$ & 
$(\mbox{S}^{\prime}_{10\overline{\partial}}
\mbox{S}^{\prime \prime}_{10\overline{\partial}}
\mbox{S}^{\prime \prime \prime}_{10\overline{\partial}}
\mbox{S}^{\prime \prime \prime \prime}_{10\overline{\partial}})^\frac{1}{2}$ \\
${\bf 1} \otimes X$ & $p-1$ & $\mbox{S}_{\partial}$ & 
$S_\partial (\mbox{S}^{\prime}_{5\partial}\mbox{S}^{\prime \prime}_{5\partial}
\mbox{S}^{\prime \prime \prime}_{5\partial}
\mbox{S}^{\prime \prime \prime \prime}_{5\partial})^\frac{1}{2}$ \\
$D_{0,2}\otimes X$ & $3(p-1)$ & $\mbox{S}_{2\partial}$ & 
$S_{2\partial} (\mbox{S}^{\prime}_{10\partial}\mbox{S}^{\prime \prime}_{10\partial}
\mbox{S}^{\prime \prime \prime}_{10\partial}
\mbox{S}^{\prime \prime \prime \prime}_{10\partial})^\frac{1}{2}$ \\
${\bf 1}\otimes D$ & $(p-1)^2$ & $(\mbox{S}_{p})^\frac{1}{2}$ & 
$(\mbox{S}_{p}\mbox{S}^{\prime}_{5p}\mbox{S}^{\prime \prime}_{5p}
\mbox{S}^{\prime \prime \prime}_{5p}\mbox{S}^{\prime \prime \prime \prime}_{5p})^\frac{1}{2}$ \\ 
$D_{0,2}\otimes D$ & $3(p-1)^2$ & $(\mbox{S}_{2p})^\frac{1}{2}$ & 
$(\mbox{S}_{2p}\mbox{S}^{\prime}_{10p}\mbox{S}^{\prime \prime}_{10p}
\mbox{S}^{\prime \prime \prime}_{10p}\mbox{S}^{\prime \prime \prime \prime}_{10p})^\frac{1}{2}$\\ 
$D\otimes {\bf 1}$ & 12 & $(\mbox{S}_4)^\frac{1}{2}$ & 
$(\mbox{S}_4\mbox{S}^{\prime}_{20}\mbox{S}^{\prime \prime}_{20}
\mbox{S}^{\prime \prime \prime}_{20}\mbox{S}^{\prime \prime \prime \prime}_{20})^\frac{1}{2}$ \\ 
\hline 
\end{tabular}
}
\label{tab:28faser}
\end{table}
}

For $d = 19$ we have $f_0 = 8$, so there are four possible values of $f$. See 
Table \ref{tab:1och2mod3}, and Table \ref{tab:19faser} for examples with $e= 2$. When 
$f = 8$ Stark units from four distinct fields appear. An example with a more interesting 
divisor lattice for $f_0$ will be found in Table \ref{tab:35faser} below. 

An example with $e = 4$ is $d = 28$, 
$f = 5$. We bring up this case also because repeated Stark units 
occur---the field $K^{5\partial, \infty_1}$ has degree 16 over $K$, but there 
are only eight distinct Stark units and they do not generate the field. This is 
the largest ray class field in which we have encountered this phenomenon: 
one which is (conjecturally) explained by extra zeroes 
in the L-functions and which will be the subject of a future work. One 
finds when working out the relevant third row in Table 
\ref{tab:28faser} that although the 4-plet of Stark units seemingly has too 
low a degree over $K$, its degree over the ring class field---which in this case equals 
${\mathbb{Q}}(\sqrt{5})$---is the same as that of the 4-plet occurring in the first 
row. And this is what counts when applying the rule (\ref{Grasslrule}), which correctly 
predicts the exponents in all the rows. Anyway the degree over $K$ rises 
to the expected value when the 4-plet is multiplied with a Stark unit squared 
coming from $K^{\partial  \infty_1}$. 

\section{Dimensions of the form $d = 3(3n+1)$}\label{sec:sec7}

\noindent It was noted in Section \ref{sec:sec3} that dimensions divisible by 3 
but not by 9 are special because there is more than one conjugacy class of 
possible order three symmetries. This happens because the order three symmetry can act 
like the identity in the dimension 3 factor of Hilbert space---this is allowed because the 
symplectic identity matrix has trace $-1$ counted mod 3.  In Section \ref{sec:sec5} we found that 
dimensions $d = 3(3n+1)$ are special because it can happen that $f_0$ admits a 
divisor $f$ that also divides $d$. 
The upshot of this is that dimensions of the form $d = 3(3n+1)$ admit SICs of 
a different symmetry type known as {\it type $F_a$} \cite{Scott1}. The ones with an 
`ordinary' symmetry are referred to as type $F_z$. One finds in these dimensions that 
the minimal SIC is of type $F_a$, but this is always accompanied 
by non-minimal SICs of type $F_z$. 

The discussion of the centraliser $M(S)$ in Section \ref{sec:sec3} now needs 
amendment. Clearly any $GL(2,{\mathbb{Z}}/3{\mathbb{Z}})$ matrix commutes with the identity 
matrix. To realize the isomorphism with the abelian Galois group we need 
to choose a maximal abelian subgroup of the centraliser in the dimension 3 factor. 
Up to conjugation there are three possibilities \cite{ACFW}. The generators of the three 
abelian groups are respectively 

\begin{eqnarray} 
a_4: \hspace{3mm} 
\left\langle \left( \small{ \begin{array}{cc} -1 & 0 \\ 0 & - 1 \end{array}} \right) \ , 
\left( \small{ \begin{array}{cc} 1 & 0 \\ 0 & - 1 \end{array}} \right) \right\rangle 
\hspace{12mm} \nonumber \\ \\ a_6: \hspace{3mm} 
\left\langle \left( \small{ \begin{array}{cc} -1 & 0 \\ 1 & - 1 \end{array}} \right)\right\rangle 
\ , \hspace{6mm} a_8: \hspace{3mm} 
\left\langle \left( \small{ \begin{array}{cc} 1 & -1 \\ 1 & 1 \end{array}} \right) \right\rangle 
\ . \nonumber \end{eqnarray}

\noindent When acting on the displacement operators in the dimension 3 
factor these groups divide the 8 non-trivial ones into $2 + 2 + 4$ for subtype ${a_4}$ 
and into $2 + 6$ for ${a_6}$ (as happens also for $F_z$ SICs), while the $a_8$ 
group acts transitively on all eight of them. 

{\small 
\begin{table}[h]
\caption{{\small Some minimal SIC overlaps for SICs of type $F_a$. The examples 
are $d = 30$ for $a_6$, $d = 66$ for $a_4$, and $d = 21$ for $a_8$. For $F_a$ SICs 
with anti-unitary symmetry, see Table \ref{tab:3pbaby}.}}
  \smallskip \smallskip
\hskip 1.9cm
{\renewcommand{\arraystretch}{1.6}
\begin{tabular}
{|c|c|c|c|c|c|c|}\hline 
$D$ mod 3 & Type & \#  & $ \lvert \mbox{Gal} \lvert $ & $ \lvert S \lvert $ & Overlap \\
\hline \hline 
0, $a_6$ & $Z\otimes {\bf 1}$ & 2 & 2 & 3 & $(\mbox{S}_\partial)^\frac{3}{2}$ \\ 
 & $D\otimes {\bf 1}$ & 6 & 6 & 3 & $(\mbox{S}_3)^\frac{3}{2}$ \\ 
\hline \hline 
1, $a_4$ & $Z\otimes {\bf 1}$ & $2$ & 2 & $3$ & $(\mbox{S}_{\overline{\partial}_3})^\frac{3}{2}$ \\ 
 & $X\otimes {\bf 1}$ & $2$ & 2 & $3$ & $(\mbox{S}_{\partial_3})^ \frac{3}{2}$ \\ 
 & $D\otimes {\bf 1}$ & $4$ & 4 & $3$ & $(\mbox{S}_{3})^\frac{3}{2}$ \\ 
\hline \hline 
2, $a_8$ & $D\otimes {\bf 1}$ & $8$ & 8 & 3 & $(\mbox{S}_3)^\frac{3}{2}$ \\ 
\hline 
\end{tabular}
}
\label{tab:Fa}
\end{table} 
}

This will have consequences for the Galois orbits of the baby overlaps. In 
particular, once the Hilbert space has been expressed as ${\mathbb{C}}^3\otimes 
{\mathbb{C}}^{3k+1}$ it will have consequences for overlaps of the form 
$\langle \Psi_0  \lvert D_{i,j}\otimes {\bf 1} \lvert \Psi_0 \rangle$. They will be formed 
from Stark units in $K^{3, \infty_1}$ and its subfields. The divisor 
lattices shown in Figure \ref{fig:gitterp1} are relevant here. 
If 3 is inert---that is, if it remains a prime over the 
quadratic field ${\mathbb{Q}}(\sqrt{D})$---then the 
Galois group will act transitively on these overlaps and the subtype must be 
$a_8$. If 3 splits there will be three orbits of non-trivial overlaps and the 
subtype must be $a_4$. And if 3 ramifies there will be two orbits, and the subtype 
must be $a_6$ \cite{Marcusray}. This behaviour is determined by the value of $D$, 
according to 
\begin{equation} \begin{array}{lll} D = 0 \ \mbox{mod} \ 3 & \Rightarrow 
\ \mbox{3 ramifies} & \Rightarrow \ a_6 \\ \\ 
D = 1 \ \mbox{mod} \ 3 & \Rightarrow 
\ \mbox{3 splits} & \Rightarrow \ a_4 \\ \\ 
D = 2 \ \mbox{mod} \ 2 & \Rightarrow 
\ \mbox{3 is inert} & \Rightarrow \ a_8 \ . \end{array} \end{equation}

{\small 
\begin{table}[h]
\caption{{\small Non-trivial minimal $F_a$ SIC overlaps for $d = n^2 + 3 =3p$. 
The primes split according to $3 = \partial_3\overline{\partial}_3$ and 
$p=\partial_p\bar{\partial}_p$. The examples are $d = 39$, 327, 1299, with $\ell = 3$ 
for the third example. Rows where the degree grows like $(p-1)^2$ were checked 
only for $d = 39$.}}
  \smallskip \smallskip \smallskip
\hskip 1.6 cm
{\renewcommand{\arraystretch}{1.0}
\begin{tabular}
{|c|c|c|c|c|}\hline
Type & \# & $ \lvert \mbox{Gal} \lvert $ & $ \lvert S \lvert $ & Overlap \\ 
\hline \hline
${\bf 1}\otimes X$ & $p-1$ & $(p-1)/3\ell$ & $6\ell$ & $\mbox{S}_{\partial_{p}}$ \\ 
${\bf 1}\otimes D$ & $(p-1)^2$ & $(p-1)^2/6\ell$ & $6\ell$ 
& $(\mbox{S}_{p})^\frac{1}{2}$ \\ 
$Z\otimes X$ & $2(p-1)$  & $(p-1)/3\ell$ & $6\ell$ 
& $(\mbox{S}_{\overline{\partial}_3\partial_{p}})^\frac{1}{2}$ \\ 
$Z\otimes D$ & $2(p-1)^2$ & $(p-1)^2/3\ell$ & $6\ell$ & 
$(\mbox{S}_{\overline{\partial}_3p})^\frac{1}{2}$ \\ 
$X\otimes {\bf 1}$ & 2 & $2$ & $6\ell$ & $(\mbox{S}_{\partial_3})^{3\ell}$ \\ 
$X\otimes Z$ & $2(p-1)$ & $(p-1)/3\ell$ & $6\ell$ 
& $(\mbox{S}_{\partial_3\overline{\partial}_{p}})^\frac{1}{2}$ \\ 
$X\otimes X$ & $2(p-1)$ & $2(p-1)/3\ell$ & $6\ell$ & $\mbox{S}_{\partial_3\partial_{p}}$ \\ 
$X\otimes D$ & $2(p-1)^2$ & $(p-1)^2/3\ell$ & $6\ell$ 
& $(\mbox{S}_{\partial_3p})^\frac{1}{2}$ \\ 
$D\otimes {\bf 1}$ & 4 & $2$ & $6\ell$ & $(\mbox{S}_3)^\frac{3\ell}{2}$ \\ 
$D\otimes Z$ & $4(p-1)$ & $2(p-1)/3\ell$ & $6\ell$ & 
$(\mbox{S}_{3\overline{\partial}_{p}})^\frac{1}{2}$ \\ 
$D\otimes X$ & $4(p-1)$ & $2(p-1)/3\ell$ & $6\ell$ & 
$(\mbox{S}_{3\partial_{p}})^\frac{1}{2}$ \\ 
$D\otimes D$ & $4(p-1)^2$ & $2(p-1)^2/3\ell$ & $6\ell$ & $(\mbox{S}_d)^\frac{1}{2}$ \\
\hline 
\end{tabular}
}
\label{tab:3pbaby}
\end{table} 
} 

{\small
\begin{table}[h]
\caption{{\small Overlaps for non-minimal $F_a$ and $F_z$ SICs when $d = 39$, $f = 1$ is $F_a$ of subtype 
$a_4$, $f = 2$ is $F_a$, and $f = 3$ is $F_z$. The Hilbert space is ${\mathbb{C}}^{39} = 
{\mathbb{C}}^3\otimes {\mathbb{C}}^{13}$. The dimension splits into four 
non-principal ideals, $d = 3\cdot 13 = \partial_3\overline{\partial}_3\partial_{13}
\overline{\partial}_{13}$. There are identities of the form S$_3 = \mbox{S}_{2\partial_3}$ 
and $\mbox{S}_9^\prime \mbox{S}_9^{\prime \prime}\mbox{S}_9^{\prime \prime \prime} 
= \mbox{S}_{3\partial_3} = \mbox{S}_{3\overline{\partial}_3}$. 
The table is incomplete.}}
 \smallskip \smallskip
\hskip 0.0cm
{\renewcommand{\arraystretch}{1.0}
\begin{tabular}
{|c|c|c|c||c|c|c|}\hline 
Type & \# & $F_a$, $f = 1$ & $F_a$, $f = 2$ & Type & \# & $F_z$, $f = 3$ \\
\hline \hline
${\bf 1} \otimes Z$ & 12 & $1$ & $(\mbox{S}_{2\overline{\partial}_{13}})^\frac{1}{2}$ & 
${\bf 1}\otimes Z$ & 12 & 
$(\mbox{S}_{3\overline{\partial}_{13}}^\prime \mbox{S}_{3\overline{\partial}_{13}}^{\prime \prime})^\frac{1}{2}$ \\ 
${\bf 1} \otimes X$ & 12 & $\mbox{S}_{\partial_{13}}$ & 
$\mbox{S}_{\partial_{13}}(\mbox{S}_{2\partial_{13}})^\frac{1}{2}$ & ${\bf 1}\otimes X$ & 12 & 
$\mbox{S}_{\partial_{13}}
(\mbox{S}_{3\partial_{13}}^\prime \mbox{S}_{3\partial_{13}}^{\prime \prime})^\frac{1}{2}$ \\ 
${\bf 1}\otimes D$ & 144 & $(\mbox{S}_{13})^\frac{1}{2}$ & 
$(\mbox{S}_{13}\mbox{S}_{2\cdot 13})^\frac{1}{2}$ & ${\bf 1}\otimes D$ & 144 & 
$(\mbox{S}_{13}\mbox{S}_{3\cdot 13}^\prime 
\mbox{S}_{3\cdot 13}^{\prime \prime})^\frac{1}{2}$ \\ 
$Z\otimes {\bf 1}$ & 2 & $1$ & $(\mbox{S}_{2\overline{\partial}_3})^\frac{3}{2}$ & & & \\ 
$X\otimes {\bf 1}$ & 2 & $(\mbox{S}_{\partial_{3}})^3$ & 
$(\mbox{S}_{2\partial_3})^\frac{3}{2}$ & $X\otimes {\bf 1}$ & 2 & 
$(\mbox{S}_{\partial_3})^3(\mbox{S}_{3\partial_3})^\frac{3}{2}$ \\ 
$D\otimes {\bf 1}$ & 4 & $(\mbox{S}_3)^\frac{3}{2}$ & 
$(\mbox{S}_3\mbox{S}_{2\cdot 3})^\frac{3}{2}$ & $D\otimes {\bf 1}$ & 6 & 
$(\mbox{S}_9^\prime \mbox{S}_9^{\prime \prime}\mbox{S}_9^{\prime \prime \prime})^\frac{1}{2}$ \\ 
\hline 
\end{tabular}
}
\label{tab:39faser}
\end{table}
}

\noindent This is illustrated by Table \ref{tab:Fa}. Note that $d = n^2 +3 = 3$ 
mod 9 happens whenever $3 \lvert n$, which means that quite a few such $F_a$ SICs have 
been constructed with exact arithmetic. The subtype is then necessarily $a_4$. 
See Table \ref{tab:3pbaby}. 

Dimension $d = 12 = 3\cdot 4$ is a very special case. It is not just that `short' 
orbits occur, so that a factor of 3 occurs in the exponents of the Stark units. 
It is that the Stark units S$_{12}$ and S$_{4\overline{\partial}}$ do not generate their 
fields. Indeed their degree is only one half of that of  
$K^{12,\infty_1}$ and $K^{4\bar{\partial} , \infty_1}$ respectively. 
However, taken jointly 
they do generate the full field. Grassl's rule for the exponents holds provided 
the Galois group is that of the ray class field, 
not that of the subfield generated by the Stark units. Another special feature 
of $d = 12$ is the large number of identities between Stark units in the various 
subfields. 

Non-minimal SICs of $F_a$ type appear (or are believed to appear) when 
$f_0/f = 0$ mod 3 \cite{Kopp4}. 
When $f$ and $d$ share a common factor of 3 there is a complication. 
By considering the degrees of the relevant fields one 
is led to expect that all non-minimal $F_a$ SICs with $(d,f) = 3$ are of 
subtype $a_6$. Were this not the case Grassl's rule (\ref{Grasslrule}) 
would predict a fractional exponent. An example of such a non-minimal $F_a$  
SIC occurs when $d = 84$, and it is indeed of subtype $a_6$.  

Whenever an $F_a$ SIC occurs a non-minimal $F_z$ SIC occurs as well. We then 
face the question of how the overlaps in the $F_z$ SICs are related to those 
of the minimal $F_a$ SIC. Table \ref{tab:39faser} gives an account 
of this for $d = 39$. There are no surprises there, but note incidentally 
that for $f = 3$ the excess $e = 3$ for the fields $K^{3d \infty_1}$ and 
$K^{3\cdot 3,\infty_1}$, but $e = 2$ for $K^{3\cdot 13,\infty_1}$. For 
the degree of the Kopp--Lagarias fields $K_L^{m,\infty_1;3}$ it matters 
whether the modulus $m$ is divisible by 3 or not, so this is 
what one would expect given the results in Section \ref{sec:sec5}. But we have no 
proof since the theorem on which Section \ref{sec:sec5} relies \cite{Kopp2} 
covers only the case $m = d$ explicitly. 

\section{SIC alignment}\label{sec:sec8}

\noindent We now consider dimensions of the form $d = n(n-2)$, or equivalently
dimensions such that $d+1$ is a square. This happens if and only if the position $\ell$ 
of the dimension in its AFMY tower is a multiple of 2. If $d_\ell = n$ then $d_{2\ell} = n(n-2)$. 
In these dimensions the phenomenon of SIC alignment enters \cite{Irina}. Actually, 
whenever the base dimension $d_1 \lvert d_\ell$ in an AFMY tower, 
square rooted Stark units from $K^{d_1 \infty_1}$ will appear raised to the power 
$\ell$ among the baby overlaps in dimension $d_\ell$, simply because of Grassl's 
rule (\ref{Grasslrule}) and the fact that the symmetry goes up by such a factor. We 
saw examples ($d = 124$, 844, and 1299) in Tables \ref{tab:4pfaser} 
and \ref{tab:3pbaby}. 

But in the case we 
now have in mind, more is true. It is known that from a SIC in dimension $n$ 
one can construct a continuous family of configurations of vectors in 
dimension $d = n(n-2)$ that share some of the properties of a SIC. If $n$ is 
odd any vector $ \lvert \Psi \rangle$ in this family provably obeys \cite{Basudha} 

\begin{equation} \sqrt{d+1} \langle \Psi \lvert {\bf 1}\otimes D \lvert  \Psi \rangle = - 
(\mbox{overlap unit from dimension $n$})^2 \ , \label{kvadrater}  \end{equation}

\begin{equation} \sqrt{d+1}\langle \Psi  \lvert D\otimes {\bf 1} \lvert \Psi \rangle = 1 \  ,  
\label{ettor} \end{equation}

\noindent where we assumed that the Hilbert space is ${\bf C}^{n-2}\otimes {\bf C}^n$. 
If $n$ is even something similar but a bit more complicated is true \cite{Ole, Danylo}.  It is 
expected that one can always obtain 
a $d$-dimensional SIC by specializing the parameters in this 
construction, and this SIC is then said to be {\it aligned} to the lower 
dimensional SIC. Hence there is a semi-constructive point 
of view on these aligned SICs.

\begin{figure}
        \centerline{ \hbox{
		\includegraphics[width=45mm]{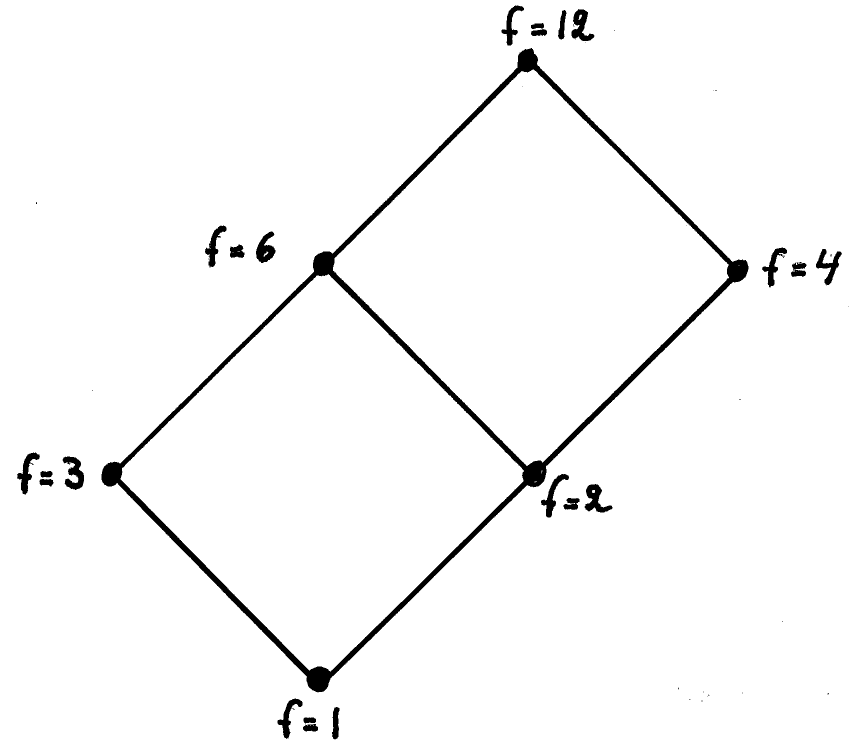} } }
        \caption{\small The inclusion lattice for SIC fields when $d = 35$. The 
				degree of the ring class field is $h = 2$ for $f = 6$, $h = 4$ for 
				$f = 12$, and the excess $e = 2$ for $f = 4$, 12.}
        \label{fig:d35gitter}
\end{figure} 

{\small 
\begin{table}[hb]
\caption{{\small Overlaps for $d = 35$. 
The Hilbert space is ${\mathbb{C}}^5\otimes {\mathbb{C}}^7$. The $f = 1$, 2 SICs 
are aligned with SICs in $d = 7$, and the $f = 1$ SIC inherits anti-unitary 
symmetry from the lower dimension. There are identities of the form 
$\mbox{S}_{4\partial}^\prime \mbox{S}_{4\partial}^{\prime \prime} = 
\mbox{S}_{3\partial} = \mbox{S}_{2\partial} = \mbox{S}_{\partial}^2$, 
$\mbox{S}_{6\overline{\partial}}\prime = \mbox{S}_{6\overline{\partial}}S_{2\overline{\partial}}^2$, 
$\mbox{S}_{4\overline{\partial}}^\prime = \mbox{S}_{4\overline{\partial}}S_{2\overline{\partial}}^2$. 
The Stark units in $K^{4\bar{\partial},\infty_1}$ and  $K^{6\bar{\partial},\infty_1}$ do not 
generate their fields. For $f = 4$, 12, the excess $e = 2$ and we give incomplete results only.}}
  \smallskip \smallskip
{\renewcommand{\arraystretch}{1.0}
\begin{tabular}
{|c|c|c|c|c|}\hline 
Type & $f = 1$ & $f = 2$ & $f = 3$ & $f = 6$ \\
\hline \hline
${\bf 1} \otimes Z$ & $-1$ & $- \mbox{S}_{2\overline{\partial}}$ & 
$(\mbox{S}_{3\overline{\partial}})^\frac{1}{2}$ & $\mbox{S}_{2\overline{\partial}}(\mbox{S}_{3\overline{\partial}}
\mbox{S}_{6\overline{\partial}})^\frac{1}{2}$ \\
${\bf 1} \otimes X$ & $- \mbox{S}_{\partial}^2$ & 
$- \mbox{S}_{\partial}^2\mbox{S}_{2\partial}$ & $\mbox{S}_{\partial}^2(\mbox{S}_{3\partial})^\frac{1}{2}$ & 
$\mbox{S}_{\partial}^2\mbox{S}_{2\partial}(\mbox{S}_{3\partial}\mbox{S}_{6\partial})^\frac{1}{2}$\\
${\bf 1} \otimes D$ & $-\mbox{S}_7$ & 
$-\mbox{S}_7\mbox{S}_{14}$ & $\mbox{S}_7(\mbox{S}_{21})^{\frac{1}{2}}$ 
& $\mbox{S}_7\mbox{S}_{14}(\mbox{S}_{21}\mbox{S}_{42})^{\frac{1}{2}}$ \\ 
$D \otimes {\bf 1}$ & 1 & $1$ & $(\mbox{S}_{15})^\frac{1}{2}$ & 
$(\mbox{S}_{15}\mbox{S}_{30})^\frac{1}{2}$ \\
$D \otimes Z$ & $(\mbox{S}_{5\overline{\partial}})^{\frac{1}{2}}$ & 
$(\mbox{S}_{5\overline{\partial}}\mbox{S}_{10\overline{\partial}})^{\frac{1}{2}}$ & 
$(\mbox{S}_{5\overline{\partial}}\mbox{S}_{15\overline{\partial}})^\frac{1}{2}$ & 
$(\mbox{S}_{5\overline{\partial}}\mbox{S}_{10\overline{\partial}}
\mbox{S}_{15\overline{\partial}}\mbox{S}_{30\overline{\partial}})^\frac{1}{2}$ \\ 
$D \otimes X$ & $(\mbox{S}_{5\partial})^{\frac{1}{2}}$ & 
$(\mbox{S}_{5\partial}\mbox{S}_{10\partial})^{\frac{1}{2}}$ & 
$(\mbox{S}_{5\partial}\mbox{S}_{15\bar{\partial}})^\frac{1}{2}$ & 
$(\mbox{S}_{5\partial}\mbox{S}_{10\partial}
\mbox{S}_{15\partial}\mbox{S}_{30\partial})^\frac{1}{2}$ \\
$D \otimes D$ & $(\mbox{S}_{35})^\frac{1}{2}$ & 
$(\mbox{S}_{35}\mbox{S}_{70})^\frac{1}{2}$ & 
$(\mbox{S}_{35}\mbox{S}_{105})^\frac{1}{2}$ & 
$(\mbox{S}_{35}\mbox{S}_{70}\mbox{S}_{105}\mbox{S}_{210})^\frac{1}{2}$ \\ 
\hline
\end{tabular}
\smallskip
\begin{tabular}
{|c|c|c|}\hline 
& $f=4$ & $f = 12$ \\ 
\hline 
${\bf 1}\otimes Z$ \ & 
$\mbox{S}_{2\overline{\partial}}\mbox{S}_{4\overline{\partial}}$ & 
$\mbox{S}_{2\overline{\partial}}\mbox{S}_{4\overline{\partial}}(\mbox{S}_{3\bar{\partial}}
\mbox{S}_{6\overline{\partial}}
\mbox{S}_{12\overline{\partial}}^\prime 
\mbox{S}_{12\overline{\partial}}^{\prime \prime})^\frac{1}{2}$ \\ 
${\bf 1}\otimes X$ & \ 
$\mbox{S}^2_\partial \mbox{S}_{2\partial} 
(\mbox{S}_{4\partial}^\prime \mbox{S}_{4\partial}^{\prime \prime})^\frac{1}{2}$ \ & 
\ $\mbox{S}_{\partial}^2\mbox{S}_{2\partial}
(\mbox{S}_{4\partial}^\prime \mbox{S}_{4\partial}^{\prime \prime}
\mbox{S}_{3\partial}\mbox{S}_{6\partial}
\mbox{S}_{12\partial}^\prime \mbox{S}_{12\partial}^{\prime \prime})^\frac{1}{2}$ \ \\
${\bf 1}\otimes D$ & 
$\mbox{S}_7\mbox{S}_{14} (\mbox{S}_{28}^\prime \mbox{S}_{28}^{\prime \prime})^\frac{1}{2}$ 
& $\mbox{S}_{7}\mbox{S}_{14}(\mbox{S}_{21}\mbox{S}_{28}^\prime \mbox{S}_{28}^{\prime \prime} 
\mbox{S}_{42}\mbox{S}_{84}^\prime \mbox{S}_{84}^{\prime \prime})^\frac{1}{2}$ \\
$D\otimes {\bf 1}$ & 
$(\mbox{S}_{20}^\prime \mbox{S}_{20}^{\prime \prime})^\frac{1}{2}$ & 
$(\mbox{S}_{15}\mbox{S}_{30}\mbox{S}_{20}^\prime \mbox{S}_{20}^{\prime \prime}
\mbox{S}_{60}^\prime \mbox{S}_{60}^{\prime \prime})^\frac{1}{2}$ \\
\hline 
\end{tabular}
}
\label{tab:35faser}
\end{table} 
}

Equation (\ref{ettor}) will be our main concern in section 9. It is consistent 
with our overall picture only if Stark's construction \cite{Stark1} gives 
trivial units $+1$ for the ray class field $K^{d-2,\infty_1}$. This will be so if 

\begin{equation} K^{d-2} = K^{d-2,\infty_1} = K^{d-2,\infty_2} \ . 
\label{Garyrule} \end{equation}
 
\noindent We will prove that, indeed, this is true for all integers $d \geq  5$ when 
$d$ is connected to the base field $K$ by equation (\ref{Eq1}). That squared 
overlap units from the lower dimension appear in eq. (\ref{kvadrater}) is a 
simple consequence of the prediction we have called 
Grassl's rule (\ref{Grasslrule}), because the order of the 
symmetry group of the aligned SIC is a factor of two larger than that of the SIC 
in the lower dimension while the ray class field (for these special baby overlaps) 
stays the same. 

As our example of a dimension where aligned SICs occur we choose $d = 35 = 
7\cdot 5$. This example is also interesting because it is the lowest dimension 
in which the Kopp--Lagarias inclusion lattice is non-trivial. See Figure 
\ref{fig:d35gitter}. In $d = 7 = 2^2 + 3$ there are two SICs with $f = 1$, 2, and overlap 
units as given in Table \ref{tab:1och2mod3}. In Table 
\ref{tab:35faser} the alignment makes itself felt in the first four rows,  
causing an otherwise unexpected behaviour among the baby overlaps there. 

When $f = 4$, $e = 2$, there is a slight 
problem in the first row. In this case the Stark 
units $S_{4\overline{\partial}}$ do not generate the field 
$K^{4\overline{\partial},\infty_1}$, and Grassl's 
rule for the exponents becomes a little ambiguous.  
Note also that there are a large number of identities 
connecting Stark units in the `small' fields, 
so the entries in Table \ref{tab:35faser} can be rewritten in various ways.

\section{The moduli $\mm_0 = $ $d-3$, $d-2$, $d-1$, $d$}
\label{sec:sec9}
Motivated by the notion of SIC alignment, 
we briefly develop some elementary class field theory results which would 
seem to be interesting in their own right, as well as in their applications to section~\ref{sec:sec8}. 
The finite part of the modulus~$\mm_0$ will be a $K$-integral ideal of $\ZK$ 
and $\mm_\infty$ an `infinite modulus': a 
possibly empty subset of $\{\infty_1,\infty_2\}$. 
The key equation (\ref{Eq1}) is in place, and we examine the degrees of the field extensions
over the totally real `base' ray class 
field~$K^{\mm_0}$ as we vary~$\mm_\infty$. 

Recall that we denote by $\tau$ the 
generator of the Galois group $\Gal{K}{\Q}$, 
and that~$\jj$ and~$\jj^\tau$ will 
denote the two embeddings of 
$K$ into~$\R$, chosen so that $\jj(\sqrt{D}) > 0$ 
(meaning $\jj$ is associated with the place denoted by $\infty_2$, 
as explained in section~\ref{sec:sec2}). 
The function $\nm x = \jj(x)\jj^\tau(x)$ denotes the 
field norm from $K$ down to~$\Q$. 
Recall that $u_D$ is defined to be the first 
totally positive power of $\uf$, in terms of $\uf$: 
\begin{equation}
u_D = 
\begin{cases}
	&	{\uf, \textrm{ when $\nm \uf = +1$;}} \\
	&	{\uf^2, \textrm{ when $\nm \uf = -1$.}} 
\end{cases}
\end{equation}
The fundamental unit $\uf$ is always assumed 
to have been chosen so that $\jj(\uf)>1$.  Hence 
$0 < \jj^\tau(\uf) < 1$ when $\nm \uf = 1$ and 
$-1 < \jj^\tau(\uf) < 0$ when $\nm \uf = -1$. 
By \eqref{dees} the dimensions $d_\ell(D)\geq4$ are 
given by $d=d_\ell(D)=u_D^\ell+u_D^{-\ell}+1$. 
The main result is as follows.

\begin{theorem}\label{hydrahead}
With notation as above, for any $d=d_\ell(D)\geq4$: 
\begin{enumerate}[label=(\Alph*)]
\item
let $\mm_0 = (d-1)\ZK$ or $(d-2)\ZK$ (this latter only when~$d\geq5$). 
Then $K^{\mm_0} = K^{\mm_0\infty_1} = K^{\mm_0{\infty_2}}$, and 
$K^{\mm_0\infty_1\infty_2}$ has degree $2$ over each of them. 

\item
let $\mm_0 = d\ZK$ or $(d-3)\ZK$ (this latter provided $d\neq4,5,8$). 
Then adding in a real place to $\mm_\infty$ always 
increases the degree of the field $K^{\mm_0\mm_\infty}$ by $2$. 
\end{enumerate}
\end{theorem}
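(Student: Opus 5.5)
The proof will rest on the standard degree formula for ray class fields already quoted in the proof of the Proposition of Section~\ref{sec:sec5} (see \cite[Ch VI \S1]{Lang}):
\[
[K^{\mm_0\mm_\infty}:K] \;=\; \frac{2^{|\mm_\infty|}\,h_K\,\Phi_K(\mm_0)}{[U_K:U_{\mm_0\mm_\infty}]},
\]
where $U_{\mm_0\mm_\infty}$ is the group of units that are $\equiv 1 \bmod \mm_0$ and positive at the places in $\mm_\infty$. Consequently $[K^{\mm_0\mm_\infty}:K^{\mm_0}]$ equals $2^{|\mm_\infty|}$ divided by the size of the image of the sign map $s\colon \U{\mm_0}\to\{\pm1\}^2$, $w\mapsto(\operatorname{sgn}\jj^\tau(w),\operatorname{sgn}\jj(w))$, after projecting onto the coordinates in $\mm_\infty$. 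Part (A) is therefore equivalent to $s(\U{\mm_0})=\{(1,1),(-1,-1)\}$. Part (B) is equivalent to $s(\U{\mm_0})=\{(1,1)\}$. The plan is to determine this image in each of the four cases.

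\textbf{Step 1: mixed signs never occur.} Let $w\in\U{\mm_0}$. Since $\mm_0=m\ZK$ with $m\in\Z$, the ideal is $\tau$-stable, so $w^\tau\equiv 1$ as well, and hence $\nm w\equiv 1 \bmod m$. A unit with mixed signs has $\nm w=-1$, which would force $m\mid 2$. This is excluded when $m\geq 3$, that is, for $d-1$ and $d$ (since $d\geq4$), for $d-2$ (since $d\geq5$) and for $d-3$ (since $d\geq6$). This already accounts for the exclusion of $d=4,5$ for $(d-3)$ and of $d=4$ for $(d-2)$. Hence $s(\U{\mm_0})\subseteq\{(1,1),(-1,-1)\}$. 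The remaining question is whether there is a totally negative unit $\equiv1\bmod \mm_0$, i.e.\ a totally positive unit $w$ with $w\equiv-1$.

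\textbf{Step 2: part (A), by explicit construction.} Put $v=u_D^\ell$. Then $v+v^{-1}=d-1$, so $v^2-(d-1)v+1=0$.
\begin{itemize}
\item Modulo $d-1$ this gives $v^2\equiv-1$, so the unit $-v^2$ is totally negative and lies in $\U{(d-1)}$.
\item Modulo $d-2$ it gives $v^2-v+1\equiv0$. Multiplying by $v+1$ yields $v^3\equiv-1$, so $-v^3\in\U{(d-2)}$ is totally negative.
\end{itemize}
Combined with Step~1, this gives (A).

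\textbf{Step 3: part (B), showing $-1$ is not a totally positive unit mod $\mm_0$.} By Step~1 it suffices to show that no power $u_D^k$ is $\equiv-1 \bmod \mm_0$. For the modulus $d$ we use that $u_D$ has order exactly $3\ell$ modulo $d$ (Section~\ref{sec:sec2}), with $v^2+v+1\equiv0$. If $u_D^k\equiv-1$, then $3\ell$ is even and $k\equiv 3\ell/2$, and we reduce to showing that $w=u_D^{3\ell/2}\not\equiv-1$. Here I would use a size argument. Write $w=(a+b\sqrt{(d_1+1)(d_1-3)})/2$, with $a=w+w^{-1}$ and $b$ determined by $w-w^{-1}$. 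The congruence $w\equiv-1$ requires $d\mid b$ and $a\equiv-2\bmod 2d$. Since $u_D^{\ell}\approx d$, the element $w$ has size about $d^{3/2}$, and one checks (via $v^3\equiv1$, i.e.\ $w^2\equiv 1$, and expanding in the basis $1,v$) that the two congruences are incompatible with these magnitudes. The same strategy applies to $d-3$. There $v^2-2v+1=(v-1)^2\equiv0$, so $v=1+n$ with $n^2\equiv 0$, and hence $v^k\equiv1+kn$ is never $-1$ once $m=d-3>2$. One must still handle powers of $u_D$ that are not multiples of $\ell$, and the parity of powers of $\uf$; this is where small cases such as $d=8$ break down and must be excluded.

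\textbf{Main obstacle.} I expect Step~3 to be the hard part, namely ruling out $u_D^k\equiv-1$ for all $k$, not only for multiples of $\ell$. I would first try to reduce, via the known order of $u_D$ modulo $m$, to a single candidate exponent, and then carry out an explicit integrality and size estimate on that unit.
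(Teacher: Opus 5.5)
Your argument for part (A) is correct and complete, and it takes a genuinely different route from the paper's. You read Lang's degree formula as a statement about the image of the sign map on $\U{\mm_0}$, rule out mixed signs by the norm argument ($\nm w\equiv 1\bmod m$ with $m\ge 3$), and exhibit the totally negative units $-v^2\equiv 1\bmod (d-1)$ and $-v^3\equiv 1 \bmod (d-2)$. Together these give both equalities and the top degree $4/2=2$ at once. The paper instead computes the exact orders $4\ell$ and $6\ell$ of $u_D$ (Lemma~\ref{bells}, a long elimination argument) and the structure of $\img\psi$ (Lemma~\ref{fish}). It then runs a snake-lemma chase through \eqref{twol} and quotes Lemma~\ref{Urk1} for the degree of $K^{\mm_0\infty_1\infty_2}$. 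Your route shows that a single totally negative unit in $\U{\mm_0}$ is all that is needed.

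The gap is in part (B), and it is a real one.

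\textbf{The modulus $d-3$.} Your nilpotence argument covers only exponents divisible by $\ell$. You offer no idea for general $k$, and that is where the content lies; the level-$\ell$ structure cannot settle it. For $d=8$ ($D=5$, $\ell=2$), no power of $v=u_D^2$ is $\equiv -1 \bmod 5$, yet $u_D^5=(123+55\sqrt5)/2\equiv -1 \bmod 5\ZK$. There is no separate parity issue, since the totally positive units are exactly the $u_D^k$.

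The missing idea is to descend to level $1$. The paper applies $\tau$ to get $u_D^{-k}\equiv -1$, hence $d_k\equiv -1 \bmod (d_\ell-3)$. It combines this with $(d_1-3)\mid(d_n-3)$ for all $n$ to force $(d_1-3)\mid 4$. In your own terms: $(u_D-1)\mid(u_D^\ell-1)\mid(d-3)$ in $\ZK$, and every power of $u_D$ is $\equiv 1 \bmod (u_D-1)$. So $u_D^k\equiv -1 \bmod (d-3)$ forces $(u_D-1)\mid 2$, and since $\nm(u_D-1)=3-d_1$ this gives $(d_1-3)\mid 4$.

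That leaves only $D=2,3,5$ ($d_1=7,5,4$). The case $D=2$ follows from $4\mid d-3$ and a direct computation mod $4$. Note that the paper dismisses $d_1=4,5$ as `excluded', although only $d=4,5,8$ are. So the towers $D=3$ and $D=5$ still need their own check, and $d=8$ shows the $D=5$ tower is not automatic.

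\textbf{The modulus $d$.} Your reduction to the single exponent $3\ell/2$ is fine, but the `size argument' is never carried out. You can cite Proposition~10 of \cite{AFMY}, as the paper does. Alternatively, for $\ell=2j$ one has $d=d_j(d_j-2)$, and $u_D^{3j}\equiv 1\bmod d_j$ because $u_D^j$ is a root of $x^2+x+1$ modulo $d_j$. Hence $u_D^{3j}\equiv -1 \bmod d$ would force $d_j\mid 2$, which is impossible.
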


In case (A) in particular, therefore, 
$K^{\mm_0\infty_1\infty_2}$ is 
a \emph{CM field}: that is to say, 
a totally complex field which is an 
extension of degree $2$ of a totally real field. 
Indeed, by lemma \ref{Urk1} \ref{sroots} 
we see that~$K^{\mm_0\infty_1\infty_2}$ 
is Galois over $\Q$ but that it also must contain non-real 
roots of unity. Hence the CM assertion will be true 
provided that we are able to show the first part, which since 
$K^{\mm_0}$ is by definition totally real, amounts to showing 
that~$K^{\mm_0\infty_1}$ and~$K^{\mm_0{\infty_2}}$ are 
also totally real. This in turn will be true (since $K^{\mm_0}$ is 
a subfield of both) if and only if the respective ray class groups 
are isomorphic. This is how we shall prove part~(A): see 
in particular the diagram~\eqref{twol}. 

Part~(B) is the situation described in Figure~\ref{fig:Nonmin2}; 
because the proof is just a straightforward modification 
of an argument in~\cite{AFMY}, we have moved 
it to appendix~\ref{sec:A4}, together with 
some technical components of the 
proof of part~(A). 

In order to put this in context, a first step is to outline what possible 
degrees {can} occur for these extensions. 
Let us free up $\mm_0$ to be \emph{any} 
finite ideal of $\ZK$ for a moment. We are interested 
in the sequence of degrees
\begin{equation}  \left[ \ [ K^{\mm_0 \infty_1} \colon K^{\mm_0} ], \  [ K^{\mm_0 \infty_2} \colon K^{\mm_0} ], 
\ [ K^{\mm_0 \infty_1\infty_2} \colon K^{\mm_0} ] \ \right] \ . \label{degz} \end{equation}
By standard class field theory, 
the first two of the indices in~\eqref{degz} must 
equal either $1$ or $2$, and by similar reasoning 
the third must be $1$, $2$ or~$4$; indeed 
$K^{\mm_0 \infty_1\infty_2}$ is restricted in all 
cases to be an extension of degree either $1$ 
or $2$ of each of its subfields~$K^{\mm_0 \infty_1}$ 
and $K^{\mm_0 \infty_2}$. 
The possible sequences are 
\begin{enumerate}[label=(\Roman*)]
\item $\nm{\uf}=1$, $\mm_0$ rational or irrational: $[1,1,2], \ [2,2,4]$;
\item $\nm{\uf}=-1$, $\mm_0$ rational: $[1,1,1], \ [1,1,2], \ [2,2,4]$;
\item $\nm{\uf}=-1$, $\mm_0$ irrational: $[1,1,1], \ [1,1,2], \ [1,2,2], \ [2,1,2], \ [2,2,4]$.
\end{enumerate}
The configurations~$[1,2,2]$ and~$[2,1,2]$ already played a 
role in section~\Ref{sec:sec4}: see 
eq.~(\ref{from16}) and ref.~\cite{ABGHM}. They can 
occur only for non-rational moduli, 
and only when the norm of the fundamental unit is~$-1$. 

However, for the purposes of theorem~\ref{hydrahead}, 
we may discard all but $[1,1,2]$ and $[2,2,4]$ since we shall 
always be choosing a modulus ideal which is 
principal over $\ZK$ and generated 
by a rational integer~$\geq3$, 
where the Galois involution $\tau\in\Gal{K}{\Q}$ 
cannot change the degrees of the intermediate extensions. 

Considering all moduli of the form 
$\mm_0 = (\lambda_1 + \lambda_2 d)\ZK$, 
for $\lambda_1$, $\lambda_2 \in \Z$ not both zero, 
the `normal' situation is that any such fixed form for the modulus yields 
a mixture of both outcomes in some ratio. 
So it is worth making the following observation: 

\begin{remark}
Expressed as $\Z$-linear functions of $d$ as above, 
the moduli $d-1$ and $d-2$ would appear 
to be \bf unique \it in giving $[1,1,2]$ for every $d$ 
relative to its associated field~$K$. 

On the other hand, it is relatively easy to find 
examples where the r\'egime is $[2,2,4]$, proving it 
using the same techniques as in the proof of (B). 
An illustration would be $(4d-21)\ZK$, for $d\geq6$.  
\end{remark}

The remainder of this section~\ref{sec:sec9} is devoted to a proof 
of part (A). We rely on results in appendix~\ref{sec:A4}. 

For general $\mm_0$ and $\mm_\infty$, 
define~$\U\mm$ to be the subgroup of $\ZKx$ whose elements are 
simultaneously congruent to~$1\bmod\mm_0$ and positive 
at all of the real places inside the set $\mm_\infty$ 
(this is often called \emph{multiplicative 
congruence}, and denoted by $\bmod^\times\mm$). 
The main technical ingredient for the proof is 
the exact sequence of global class field theory, as follows. 
See for example equation (2.7) of~\cite{cohenstev}. 
\begin{equation}\label{globcft}
1 \rightarrow \U{\mm} 
	\longrightarrow {\ZKx} 
		\xlongrightarrow{\psi } \mg{\mm_0} \times{\{\pm1\}}^{\# \mm_\infty } 
			\xlongrightarrow{\alpha} \mathrm{Gal}(K^\mm / K) 
				\longrightarrow \cK   
					\rightarrow 1.
\end{equation}
As we mentioned in section~\ref{sec:sec2}, the \emph{Artin map}~$\alpha$  
establishes an isomorphism between the ray 
class group of modulus $\mm$, and the Galois 
group of the ray class field $K^\mm$ over $K$. 

Given any $x$ in the global units $\ZKx$, the 
natural map $\psi$---whose kernel is the unit subgroup~$\U{\mm}$ 
just defined---is given by reduction 
of~$x$ modulo $\mm_0$ into the first component~$\mg{\mm_0}$, 
and then by the signs $\pm1$ induced 
according to each real embedding of~$x$ into the 
second component~${\{\pm1\}}^{\# \mm_\infty }$. 
For calculations we mention that the codomain 
of~$\psi$ is known as the \emph{ray residue ring} in Magma~\cite{Magma}. 
In particular this codomain is finite, which constrains the image of $\psi$ 
also to be finite: see~\cite[VI \S1]{Lang}. 

Just to place things in context, in section~\ref{sec:sec5} we made 
use of the Euler generalised totient function, 
which ties back to this sequence in that it is 
the order of the term~$\mg{\mm_0}$. 

The varying infinite modulus in theorem~\ref{hydrahead} (A) 
is captured by the following 
commutative diagram with exact rows, built via the 
functoriality of the Artin isomorphism from two versions of~\eqref{globcft} 
wherein we move from a given $\mm_\infty$ to a subset. 

For brevity we just annotate the maps in the 
top row with the representative 
subscript $\jj$ to indicate ramification being 
allowed at one real place, 
with no subscript denoting $\mm_\infty = \{\}$: 
\begin{equation}
\label{twol}
\begin{adjustbox}{width=\textwidth}
\begin{tikzcd}
1 \arrow[r] & 
\U{\mm_0,\jj} \arrow[r] \arrow[d,"\gamma"] &
\ZKx \arrow[r, "\psi_{\jj}"] \arrow[d,equal] &
\mg{\mm_0} \times \{\pm1\} \arrow[r, "\alpha_{\jj}"] \arrow[d,"\rho"] &
\mathrm{Gal}(K^{\mm_0,\jj} / K) \arrow[r] \arrow[d,"\mu"] &
\mathcal{C}_K \arrow[r] \arrow[d,equal] & 1 \\
1 \arrow[r] &
\U{\mm_0,\{\}} \arrow[r] &
\ZKx \arrow[r, "\psi"] &
\mg{\mm_0} \arrow[r, "\alpha"] &
\mathrm{Gal}(K^{\mm_0,\{\}} / K) \arrow[r] &
\mathcal{C}_K \arrow[r] &
1
\end{tikzcd}
\end{adjustbox}
\end{equation}
The maps $\gamma$, $\rho$ and $\mu$ are implicitly 
defined by the diagram. 
By the definition of the~$\U\mm$ (or by the snake lemma) we know 
that~$\gamma$ is an injection, since~$\U{\mm_0\mm_\infty}$ 
is in all cases just~$\U{\mm_0,\{\}}$ under an additional compatible restriction. 
Similarly $\rho$ is a projection onto the component corresponding 
to the finite part $\mm_0$ of the modulus and so is surjective, with 
kernel~$\ker\rho \cong \{\pm1\}$ representing 
the signs under the single real embedding $\jj$. 
From what was said above we need to prove that $\mu$ is an 
isomorphism whenever $\jj \in \infty_1$ or $\jj \in \infty_2$. 

The key ingredient in the proof of 
theorem~\ref{hydrahead}~(A) is the following. 

\begin{proposition}\label{coeur}
Whenever $\mm_0$ is either $(d-1)\ZK$ 
or~$(d-2)\ZK$ and $\infty = \infty_1$ or~$\infty_2$, 
the cokernel of the natural inclusion 
$
\gamma \colon \U{\mm_0\infty}\ \lhra \ \U{\mm_0\{\}}
$
has order~$2$. 
\end{proposition}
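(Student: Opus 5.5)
The cokernel of $\gamma$ is $\U{\mm_0\{\}}/\U{\mm_0\infty}$. Since $\U{\mm_0\infty}$ is the kernel of the sign map $x\mapsto \mathrm{sgn}\,\jj(x)$ (or $\mathrm{sgn}\,\jj^\tau(x)$) restricted to $\U{\mm_0\{\}}$, the cokernel has order at most $2$. It has order exactly $2$ if and only if some unit congruent to $1 \bmod \mm_0$ is negative at the chosen real place. So the whole proposition reduces to exhibiting one such unit for each of the two moduli and each of the two places.

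I will work with the totally positive unit $u=u_D^\ell$, for which $d=u+u^{-1}+1$. Writing $\bar u=u^{-1}=\jj^\tau(u)$ when we embed via $\jj$, this gives $u^2-(d-1)u+1=0$, that is,
\begin{equation*}
u^2+1=(d-1)u .
\end{equation*}
Hence $u^2\equiv -1 \bmod (d-1)\ZK$. Multiplying through by $-u^{-1}$ (a unit), the unit $-u^{\pm 2}$ is congruent to $1 \bmod (d-1)$. Alternatively, one can use $(-1)\cdot u^2\equiv 1$. Now $-u^2$ is totally negative, so it is negative at both $\infty_1$ and $\infty_2$. This settles the case $\mm_0=(d-1)\ZK$ for both places at once, with $x=-u^2$.

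For $\mm_0=(d-2)\ZK$, the same relation rewritten modulo $d-2$ gives $u^2-u+1\equiv 0$, so $u^3\equiv -1 \bmod (d-2)$. Indeed $u^3+1=(u+1)(u^2-u+1)=(u+1)(d-2)u$. Therefore $x=-u^3$ is a unit congruent to $1 \bmod (d-2)\ZK$. Since $u$ is totally positive, $-u^3$ is totally negative, and again one unit works for both real places. I need $d\ge5$ only so that $d-2\ge3$ and the modulus is not trivial or $2$; for $d-2\in\{1,2\}$ we would have $-1\equiv 1$, but then the cokernel still has order $2$, so the restriction really belongs to Theorem~\ref{hydrahead} rather than here. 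I will check the small cases by hand to be sure.

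There is also the question of whether these congruences hold in $\ZK$ itself and not just in $\Z[u]$. Here $u\in\ZK$ (as noted, $u$ is a power of $u_D$), and the identities above are polynomial identities with integer coefficients, so the divisibility by $d-1$ or $d-2$ holds in $\ZK$. The step I expect to need the most care is not the existence of the unit but making sure the cokernel of $\gamma$ is exactly the sign-quotient described at the start. This is immediate from the definition of $\U{\mm}$ as the kernel of $\psi$ and from the diagram~\eqref{twol}: the $\{\pm1\}$ factor maps $\U{\mm_0\{\}}$ onto $\{\pm1\}$ precisely when such a negative unit exists. The proposition then follows.
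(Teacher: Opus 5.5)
Your proof is correct, and it takes a considerably more direct route than the paper's.

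\textbf{Your route.} You note that, by definition, $\U{\mm_0\infty}$ is the kernel of the sign character at $\infty$ restricted to $\U{\mm_0\{\}}$. Hence $\coker\gamma$ embeds in $\{\pm1\}$. You then exhibit the totally negative units $-u_D^{2\ell}\equiv 1 \bmod (d-1)\ZK$ and $-u_D^{3\ell}\equiv 1 \bmod (d-2)\ZK$, which come straight from the identities $u^2+1=(d-1)u$ and $u^3+1=(u+1)(d-2)u$.

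\textbf{The paper's route.} The paper applies the snake lemma to the diagram \eqref{twol}. It then invokes Lemma~\ref{fish} to identify $\U{\mm_0\{\}}=\langle -\uf^{w/2}\rangle$ and $\U{\mm_0\jj}=\langle \uf^{w}\rangle$, hence $\img\psi\cong C_w$ and $\img\psi_\jj\cong C_w\times C_2$, and finishes by counting orders. Lemma~\ref{fish} in turn rests on Lemma~\ref{Urk1} (torsion-freeness of the unit kernels) and on Lemma~\ref{bells}, including the exact orders $4\ell$ and $6\ell$ and the evenness of $w$.

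\textbf{Comparison.} Your argument needs only the two elementary congruences that open the proofs of Lemma~\ref{bells}(a) and~(b). The minimality arguments for the orders, the torsion-freeness lemma and the homological algebra are unnecessary for this proposition.

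What the paper's approach buys is extra structure: explicit generators of the unit kernels and the cyclic structure of $\img\psi$, tied into the diagram used for Theorem~\ref{hydrahead}(A). Your index-$2$ statement alone already suffices for the isomorphism $K^{\mm_0}=K^{\mm_0\jj}$. Comparing orders in \eqref{globcft}, the factor $2$ from the sign module cancels against $[\U{\mm_0\{\}}:\U{\mm_0\jj}]=2$.

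As you observe, your argument also covers $d=4$, $\mm_0=2\ZK$. There the paper's Lemma~\ref{Urk1}, which needs $s\geq 3$, does not apply.

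One small slip: ``multiplying through by $-u^{-1}$'' does not produce the congruence you state. However, $-u^{2}\equiv 1$ is immediate from $u^2\equiv -1$, and $-u^{-2}\equiv 1$ follows by inverting, so nothing is lost.
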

\noindent
Although we prove it formally below using some 
elementary homological algebra, it is relatively straightforward to see from 
diagram~\eqref{twol} that this defect of 2 is precisely what is needed 
to cancel out the extra sign module $\{\pm1\}$ in order to ensure 
that the Galois groups at the right hand side---given that the 
map between the class groups $\cK$ is the identity---have the same order. 

\begin{proof}[Proof of proposition \ref{coeur}]
By splitting \eqref{twol} in the usual way into commutative 
sub-diagrams each with two rows of short exact 
sequences, we may apply the snake lemma~\cite[(II.28), p.120]{gelman} 
to each of them.  From the left-hand part we deduce: 
\begin{equation}
\begin{tikzcd}\label{tiki}
1 \arrow[r] & 
\U{\mm_0,\jj} \arrow[r] \arrow[d,hook,"\gamma"] &
\ZKx \arrow[r, "\psi_{\jj}"] \arrow[d,equal] &
\ker\alpha_{\jj} \arrow[r] \arrow[d,two heads,"\overline{\rho}"] & 
1 \\
1 \arrow[r] &
\U{\mm_0,\{\}} \arrow[r] &
\ZKx \arrow[r, "\psi"] &
\ker\alpha \arrow[r] & 
1
\end{tikzcd}
\end{equation}
where we have defined~$\overline{\rho} \colon \img\psi_{\jj} \lra \img\psi$, 
which we view as 
$\overline{\rho} \colon \ker\alpha_{\jj} \lra \ker\alpha$ 
by exactness, as the appropriate restriction of $\rho$. 
Straight away we see that $\overline{\rho}$ is 
surjective (as indicated by the standard double-headed arrow; note 
that similarly we use the standard hooked arrow to indicate injections), 
and most importantly that: 
\begin{align}\label{kerker}
\ker\overline{\rho}	\cong	\coker\gamma		
			= 	\U{\mm_0,\{\}} / \U{\mm_0\jj} .
\end{align}
From now onwards let us denote a finite abstract cyclic group of order $N$ by~$C_N$. 
We wish to show that $\coker\gamma \cong C_2$; so it 
is enough to show that~$\ker\overline{\rho}$ has order~$2$. 
The corresponding right-hand commutative exact sub-diagram 
looks like this: 
\begin{equation}
\begin{tikzcd}\label{ritiki}
1 \arrow[r] & 
\img\alpha_{\jj} \arrow[r,"\alpha_{\jj}"] \arrow[d,"\overline{\mu}"] &
\mathrm{Gal}(K^{\mm_0,\jj} / K) \arrow[r] \arrow[d,"\mu"] &
\cK \arrow[r] \arrow[d,equal] & 1 \\
1 \arrow[r] &
\img\alpha \arrow[r,"\alpha"] &
\mathrm{Gal}(K^{\mm_0,\{\}} / K) \arrow[r] &
\cK \arrow[r] & 1 ,
\end{tikzcd}
\end{equation}
where again the notation $\overline{\mu}$ is understood 
to represent $\mu$ restricted to the smaller domain 
$\img\alpha_{\jj} \trianglelefteq \mathrm{Gal}(K^{\mm_0,\jj} / K)$. 

With this in mind we may finally draw the `central' diagram of short exact 
sequences (ie not including the ideal class groups at the right): 
\begin{equation}\label{zen}
\begin{adjustbox}{scale=1}
\begin{tikzcd}
1 \arrow[r] & 
\ker\overline{\rho} \arrow[r] \arrow[d,hook] &
\ker\rho = \{\pm1\} \arrow[r] \arrow[d,hook] &
\ker\overline{\mu} \arrow[r] \arrow[d,hook] & 
\ldots \\
1 \arrow[r] & 
\ker\alpha_{\jj} \cong \img\psi_\jj \arrow[r] \arrow[d,"\overline{\rho}", two heads] &
\mg{\mm_0} \times \{\pm1\} \arrow[r, "\alpha_{\jj}"] \arrow[d,"\rho", two heads] &
\img\alpha_{\jj} \arrow[r] \arrow[d,"\overline{\mu}"] & 
1 \\
1 \arrow[r] &
\ker\alpha \cong \img\psi \arrow[r] &
\mg{\mm_0}  \arrow[r, "\alpha"] &
\img\alpha \arrow[r] & 
1.
\end{tikzcd}
\end{adjustbox}
\end{equation}
The top row here is from yet another application of 
the snake lemma. 
Since by construction~$\rho$ is onto, $\overline{\mu}$ 
must also be onto and~$\overline{\rho}$ 
is surjective from above. 
So the only possibilities---which are mutually exclusive---are that 
\begin{align}\label{possum}
\ker\overline{\rho} \cong C_2 \textrm{ and}\ker\overline{\mu} = 1;& \textrm{ or} \nonumber\\
\ker\overline{\mu} \cong C_2  \textrm{ and}\ker\overline{\rho} = 1. &
\end{align}
But we know from lemma~\ref{fish} (see there for the 
definition of~$w$) that 
$\img\psi_\jj \cong C_w \times C_2$ and that 
$\img\psi \cong C_w$: just by counting orders, 
this forces $\ker\overline{\rho}$ to have order $2$, as required. 
\end{proof}

This completes the proof of 
assertion (A) of theorem \ref{hydrahead}, as follows. 
The first part---namely, that 
$K^{\mm_0} = K^{\mm_0\infty_1} = K^{\mm_0{\infty_2}}$---is 
that in our situation, the map $\mu$ is always an isomorphism. 
But by \eqref{possum}~$\overline{\mu}$ is injective and by~\eqref{zen} 
it was onto, so~$\overline{\mu}$ is an isomorphism; 
by the snake lemma applied to~\eqref{ritiki}, so is~$\mu$. 
The second part---namely that relative to these three 
identical totally real fields the 
full field~$K^{\mm_0{\infty_1\infty_2}}$ is indeed a 
proper extension of degree~2---follows from lemma~\ref{Urk1} \ref{sroots}. \qed

\section{Comments in closing}\label{sec:sec10}
In closing we can say that we have verified the four claims made about overlap units in Section \ref{sec:sec1}, 
for all our examples. 
The fact that the overlaps often sit in a proper subfield of $K^{fd \infty_1}$ is no objection, since we are in 
effect taking relative norms to reach that subfield. 

Using Hilbert space arguments it can be shown that there are cases where some squares of overlap units necessarily 
are equal to $+1$. This is the exceptional case where we can support our claims with proofs, because these 
overlap units appear precisely when Stark's construction results in units equal to $+1$. The behaviour of the 
ray class fields with finite modulus $d-3$, $d-2$, $d-1$, and $d$, as expounded in section \ref{sec:sec9}, may 
be of independent interest.   

We did find a few examples (when $d = 12$, 28, and 35) where the Stark units do not generate the field 
in which they sit. Other examples are known. This phenomenon causes no special problem for the 
construction of the SIC, although the rule (\ref{Grasslrule}) for the exponents of the Stark units 
can become a delicate one to apply.  

We were able to offer only the vaguest of rationales for why it is square roots of Stark units that appear, 
and why one needs to include products of square roots of Stark units from several distinct fields in the 
non-minimal case. Indeed, we have been concerned with {\it how} Stark units enter SIC overlaps, not 
{\it why} they do so. 

An important point to be raised is the relation between the facts we have observed on the one hand, 
and the conjectural identification of all the overlap units with special values of the Shintani--Faddeev modular 
cocycle on the other \cite{Kopp3, Kopp4}. In principle it should be possible to derive all our results from the 
main theorem (Theorem 1.1) of ref. \cite{Kopp3}. This remains a task for the future however. 

\vspace{1cm}

\noindent {\bf Acknowledgements:}

\

\noindent We received indispensable help from Marcus Appleby and Markus Grassl. 
IB is grateful to the taxpayer for support through the Digital Horizon Europeproject FoQaCiA, GA
No. 101070558, funded by the European Union, NSERC (Canada), and UKRI
(UK); and to the Mathematics Department at Stockholm University for being allowed to 
use their computers. 
GM is grateful to Myungshik Kim, Terry Rudolph and the QOLS group 
at Imperial College for their ongoing hospitality. 

\vspace{1cm}

\appendix

\section*{Appendices}

\section{The group}\label{sec:A1}

\noindent The Weyl--Heisenberg group $H(d)$ is a finite dimensional analogue 
of the group that underlies Heisenberg's uncertainty relation \cite{Weyl}. 
It is a central extension of the direct product of two cyclic groups, 
and can be presented as 

\begin{equation} ZX = \omega XZ \ , \hspace{5mm} X^d = Z^d = \omega^d = 
{\bf 1} \ , \end{equation}

\noindent with the understanding that $\omega$ commutes with everything 
and can be represented as multiplication with a primitive root of unity. 
We choose 

\begin{equation} \omega = e^{\frac{2\pi i}{d}} \ . \end{equation}

\noindent If $d$ is even the centre is enlarged to include also 

\begin{equation} \tau = - e^{\frac{\pi i}{d}} \ . \end{equation} 

\noindent Once this has been agreed on there is an essentially unique 
irreducible unitary representation in dimension $d$, given by 

\begin{equation} Z \lvert r\rangle = \omega^r \lvert r\rangle \ , \hspace{5mm} 
X \lvert r\rangle =  \lvert r + 1\rangle \ . \end{equation}

\noindent The basis vectors are indexed by integers modulo $d$. 
Throughout this paper we assume that this representation is 
used, unless $d = 4$ in which case a variation is useful \cite{monomial}. 
An important fact is that if $d = d_1d_2$ where $d_1$ and $d_2$ are 
relatively prime then there is a canonical isomorphism  
\cite{monomial, Irina}

\begin{equation} H(d) = H(d_1) \times H(d_2) \ . \end{equation}

\noindent To show this the Chinese remainder theorem is used. It means 
that once we understand the group in prime power 
dimensions everything else follows. 

It is useful to introduce the $d^2$ displacement operators 

\begin{equation} D_{i,j} = \tau^{ij}X^iZ^j \ , \end{equation}

\noindent with this precise choice of the prefactor \cite{Marcus}. For 
them the group law reads 

\begin{equation} D_{i,j}D_{k,l} = \tau^{jk - il}D_{i+k,j+l} \ . \end{equation}

\noindent The symplectic form that appears in the exponent on the right 
hand side is the key to understanding the unitary automorphism group of 
the Weyl--Heisenberg group, and the symmetries exhibited by its orbits. 
In the representation we are using, there are outer automorphisms of the 
Weyl--Heisenberg group that act by conjugation, forming a representation 
of the symplectic group $SL(2,{\mathbb{Z}}/d{\mathbb{Z}})$. Hence we can associate 
every symplectic matrix $G$ with a unitary operator $U_G$ according to the 
following scheme: 

\begin{equation} \left( \begin{array}{c} i' \\ j' \end{array} \right) = 
\left( \begin{array}{cc} \alpha & \beta \\ \gamma & \delta \end{array} \right) 
\left( \begin{array}{c} i \\ j \end{array} \right) \hspace{5mm} 
\leftrightarrow \hspace{5mm} U_GD_{i,j}U_G^{-1} = D_{i',j'} \ . 
\label{symplektiskt} \end{equation}

\noindent If $\alpha \delta - \beta \gamma = - 1$ mod $d$ the operator $U_G$ is 
an anti-unitary rather than a unitary operator. Moreover, the integer entries of $G$ 
should be taken modulo $2d$ if $d$ is even; but this is not important for us 
\cite{Marcus}. 

The unitary or anti-unitary operators $U_G$ transform SICs to SICs. Symmetries 
of SICs always include a Zauner unitary, that is to say an operator $U_G$ coming 
from a symplectic matrix of order 3 and trace $-1$. Unless $d>3$ is divisible by 
3 but not by 9 there is a unique conjugacy class of such matrices. 

Another key property of the group is that it forms a unitary operator basis 
\cite{Schwinger}. This means that any operator acting on ${\mathbb{C}}^d$ can 
be expanded as 

\begin{equation} A = \frac{1}{d}\sum_{i,j = 0}^{d-1} 
D_{i,j}\mbox{Tr}AD_{-i,-j} \ . \label{Schwinger} \end{equation}

\noindent We call this the {\it Schwinger formula} for ease of reference. If we 
choose $A =  \lvert \Psi_0 \rangle \langle \Psi_0  \lvert $ we learn that the overlaps 
determine the fiducial vector uniquely up to an irrelevant phase. 

Now we return to the beginning and observe that the choice $\omega = 
e^{2\pi i/d}$ was arbitrary. Any primitive $d$th root of unity would serve. 
Such a change of representation is achieved by the Galois transformation 
$\omega \rightarrow \omega^a$, and leads us to bring in the group 
$GL(2,{\mathbb{Z}}/d{\mathbb{Z}})$ \cite{AYAZ}. This needs an extra generator 
transforming the displacement operators as 

\begin{equation} G = \left( \begin{array}{cc} 1 & 0 \\ 0 & a \end{array} 
\right) \ , \hspace{5mm} X \rightarrow X \ , \hspace{5mm} 
Z \rightarrow Z^a \ , \end{equation}

\noindent where $a$ is any integer invertible modulo $d$. 
Anti-unitary transformations are a special case of this. 

The reader may ask if maximal sets of equiangular vectors that are not 
orbits of the Weyl--Heisenberg group can exist. The answer is `no' if 
$d = 2$, 3 \cite{Lane, Feri} and `probably no' if $d = 4$, 5 
\cite{Samuel}. In dimension $d = 8$ an example that is an orbit of a 
differently defined Heisenberg group does exist \cite{Blake}. 
Possibly this is the only such example, but the question is open. 

\section{Order four symmetries}\label{sec:A2}

\noindent The prominent role of order three symmetries 
in Section \ref{sec:sec3} {\it et passim} is 
remarkable. One can ask if, say, a symmetry of 
order four would lead to something of interest? Assume 
for simplicity that $d = p$ is a prime. A first 
observation is that the centraliser of an order four symplectic 
matrix within $GL({\bf Z}/d{\bf Z})$ has 
order $p^2-1$ if $d = 3$ mod 4 and order $(p-1)^2$ if $d = 1$ mod 4. 
So far, so good. If we want to mimic the 
constructions made for SICs the next step is to 
ask if, given $d$, it is possible to choose a real 
quadratic field $K$ such that the ray class field $K^{d\infty_1}$ 
has a degree equal to the order of the centraliser 
divided by 4, or possibly divided by 8 should 
anti-unitary symmetries occur. Will this lead to a 
magical formula differing from equation (\ref{Eq1}), 
and if so what kind of Weyl--Heisenberg orbits 
can be obtained from fiducial vectors constructed 
using such a field?  But this seems to be a dead 
end. If we also insist that the degree increases by 
a factor of 2 when an infinite place is added to the 
modulus there are only a very few exceptional 
cases (such as $p = 3$) where such fields exist. 
This is one way to see why symmetries of order 
three are very special.

\section{Finding accessible examples}\label{sec:A3}

\begin{table}[h]
\caption{{\small Odd dimensions where non-minimal SICs occur, and their excess. 
The excess is underlined if overlaps given by Stark units have been 
computed. For the bracketed cases only baby overlaps were computed.}}
  \smallskip \smallskip
\hskip 0.0cm
{\renewcommand{\arraystretch}{0.8}
\begin{tabular}
{|c|ccccccccccccccc|}\hline 
$f$ & {\small 7} & {\small 11} & {\small 15} & {\small 17} & {\small 19} & 
{\small 21} & {\small 23} & {\small 27} & {\small 31} & {\small 35} & 
{\small 39} & {\small 43} & {\small 47} & {\small 49} & {\small 51} \\
\hline
{\small 2} & {\small \underline{1}} & {\small \underline{1}} & 
{\small \underline{1}} & & {\small \underline{1}} & & 
{\small \underline{1}} & {\small 1} & {\small \underline{1}} & 
{\small \underline{1}} & {\small (\underline{1})} & {\small 1} & {\small 1} & 
& {\small 1} \\ 
{\small 3} & & & & {\small \underline{1}} & & {\small (\underline{3})} & & & & 
{\small \underline{1}} & {\small (\underline{3})} & & & & \\ 
{\small 4} & & & {\small \underline{1}} & & {\small \underline{2}} & & & & 
{\small (\underline{1})} & {\small (\underline{2})} & & & {\small 1} & & 
{\small 2} \\ 
{\small 5} & & & & & & & & & & & & & & {\small 2} & \\ 
{\small 6} & & & & & & & & & & {\small \underline{1}} & {\small 3} & & & & \\ 
{\small 8} & & & & & {\small \underline{2}} & & & & & & & & {\small 2} & & \\ 
{\small 12} & & & & & & & & & & {\small (\underline{2})} & & & & & \\ 
\hline 
\end{tabular}
}
\label{tab:numerologi}
\end{table}

\begin{table}[h]
\caption{{\small Even dimensions where non-minimal SICs occur, and their excess.}}
  \smallskip \smallskip
\hskip 0.0cm
{\renewcommand{\arraystretch}{0.8}
\begin{tabular}
{|c|ccccccccccccccc|}\hline 
$f$ & {\small 8} & {\small 12} & {\small 24} & {\small 26} & {\small 28} & 
{\small 30} & {\small 44} & {\small 48} & {\small 52} & {\small 62} & 
{\small 66} & {\small 74} & {\small 78} & {\small 80} & {\small 84} \\ 
\hline 
{\small 3} & {\small \underline{1}} & {\small (\underline{3})} & & 
{\small (\underline{1})} &  & {\small 3} & {\small 1} & {\small (\underline{3})} 
& & {\small 1} & {\small 3} & & & {\small 1} & {\small (\underline{3})} \\ 
{\small 5} & & & {\small 2} & & {\small (\underline{4})} & & & & & & & {\small 2} & 
{\small 4} & & \\ 
{\small 7} & & & & & & & & {\small (\underline{3})} & {\small 6} & & & & & & \\ 
{\small 9} & & & & & & & & & & & & & & {\small 3} & {\small 9} \\ 
{\small 21} & & & & & & & & {\small 9} & & & & & & & \\ 
\hline 
\end{tabular}
}
\label{tab:numerologi2}
\end{table} 

\noindent In Tables \ref{tab:numerologi}--\ref{tab:numerologi2} 
we give two lists of low dimensions 
where non-minimal SICs occur, and their excess as defined in equation 
(\ref{excess}). The lists are complete up to the highest dimensions 
cited. If an underlined entry is not mentioned in the main text it is 
because there were no surprises there.

\section{Lemmas needed for theorem \ref{hydrahead}}\label{sec:A4}
In our situation we need to understand the image $\img\psi$ 
in the exact sequence~\eqref{globcft}, under certain conditions on $\mm$. 
It will be convenient to regard the units~$\ZKx$ of $K$ as 
decomposing in the following standard fashion, where we denote 
by~$<x>$ the (possibly infinite) multiplicative cyclic group 
generated by an element~$x$. 
The torsion is just $\pm1$ and so with our fixed 
choice of a fundamental unit $\uf$ we may write:
\begin{equation}\label{gengen}
\ZKx \ \ = \ \ <\uf> \times <-1> \ \ \cong \ \ \Z \times C_2 .
\end{equation}
We now begin to gather up some facts about the various 
invariants here. 
We continue to regard~$\mm_0$ as one of the 
two possibilities~$(d-1)\ZK$ or~$(d-2)\ZK$. 

Define $w$ to be the multiplicative order of the 
image of the fundamental unit~$\uf$ under reduction 
modulo the finite part~$\mm_0$ of the modulus, 
\emph{irrespective of the real places}. 
In other words, 
$w$ is the lowest positive integer such that 
$(\uf+\mm_0)^w = \uf^w + \mm_0 = 1 + \mm_0$ 
in~$\left(\ZK/\mm_0\right)^\times$. 
In order to avoid over-burdening the reader with notation 
we shall use this abbreviation $w$, assuming it known from the context 
whether $\mm_0$ be generated by $d_\ell-1$ or by $d_\ell-2$. 
The prevailing value of $\ell$, of course, is also contextual. 

\begin{lemma}\label{Urk1}
With $K$, $\ZK$ as above, choose some $s\in\Z$,~$s\geq3$ 
and write~$\mm_0 = s\ZK$. 
Let $\mm_\infty$ denote any subset of the real places 
$\{\infty_1,\infty_2\}$ of $K$. 
Then 

\begin{enumerate}[label=(\roman*), ref=(\roman*)]
\item\label{won}
$\U{\mm_0\mm_\infty}$ is a 
$\Z$-rank one torsion-free abelian group. 
In particular, therefore, 
t must have a generator 
of the form $\pm \uf^j$, where $j \in \{ \frac{w}{2}, w \}$. 
\item\label{sroots}
The ray class field $K^{\mm_0\infty_1\infty_2}$ 
contains the $s$-th roots of unity. 
\end{enumerate}
\end{lemma}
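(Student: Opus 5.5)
The plan is to treat the two parts separately. Part \ref{won} comes from the exact sequence \eqref{globcft} together with the decomposition \eqref{gengen}. Part \ref{sroots} comes from comparing Artin maps under the norm from $K$ down to $\Q$.

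For \ref{won}, I first note that $\U{\mm_0\mm_\infty}$ is the kernel of $\psi$ in \eqref{globcft}. The codomain of $\psi$ is finite, so $\U{\mm_0\mm_\infty}$ has finite index in $\ZKx\cong\Z\times C_2$ and therefore has $\Z$-rank one. For torsion-freeness, the only torsion candidate is $-1$. But $-1\equiv1\bmod s\ZK$ would force $s\mid 2$ in $\ZK$, hence in $\Z$, which contradicts $s\ge3$. So $\U{\mm_0\mm_\infty}$ is infinite cyclic, generated by some $\varepsilon=\pm\uf^{j}$ with $j>0$ minimal.

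To pin down $j$, I will use $w$, the order of $\uf$ modulo $\mm_0$. Since $\varepsilon\equiv1$, we have $\uf^j\equiv\pm1\bmod\mm_0$, so $\uf^{2j}\equiv1$ and hence $w\mid 2j$. Conversely, $\uf^w\equiv1$, and $-\uf^{w/2}\equiv 1$ whenever $\uf^{w/2}\equiv-1$. These congruences say that $\uf^w$, and possibly $\pm\uf^{w/2}$, satisfy the finite congruence condition. Combined with minimality, this gives $j\in\{w/2,w\}$ in the case $\mm_\infty=\{\}$.

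When real places are present, I still have to check the sign conditions. The global sign $\pm$ can always be chosen to make $\jj(\varepsilon)>0$. So the only possible obstruction is when both places are imposed and $\nm\uf=-1$ with the relevant exponent odd, because then $\jj^\tau(\uf^j)<0$. This step is the one I expect to be delicate: in that configuration the generator could a priori jump to $\uf^{2w}$. I would handle it by observing that $\uf^{w/2}\equiv -1$ cannot coexist with $\uf^w\equiv 1$ and odd $w$ in a way that breaks the dichotomy. Failing that, I would record that in the applications ($\mm_0=(d-1)\ZK$ or $(d-2)\ZK$, with one real place, as in Proposition~\ref{coeur}) the sign can always be fixed by the choice of $\pm$, so the claimed form holds there.

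For \ref{sroots}, I would show that $K(\zeta_s)\subseteq K^{\mm_0\infty_1\infty_2}$. The extension $\Q(\zeta_s)/\Q$ is abelian with conductor dividing $s\infty$. Hence $K(\zeta_s)/K$ is abelian, and its Artin map factors as the Artin map of $\Q(\zeta_s)/\Q$ composed with the ideal norm $N_{K/\Q}$. Now take a principal ideal $(\alpha)$ with $\alpha\equiv1\bmod^\times s\infty_1\infty_2$. Then $\nm\alpha=\jj(\alpha)\jj^\tau(\alpha)$ is a positive rational number congruent to $1$ mod $s$, so $N_{K/\Q}(\alpha)$ lies in the ray modulo $s\infty$ and acts trivially on $\zeta_s$. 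Thus the ray group modulo $s\infty_1\infty_2$ lies in the norm group of $K(\zeta_s)/K$, and class field theory gives the inclusion. Finally, $\zeta_s$ is non-real for $s\ge3$, which is what the CM remark after Theorem~\ref{hydrahead} needs.
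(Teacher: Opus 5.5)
Your arguments for the rank, for torsion-freeness (via $-1\not\equiv 1 \bmod s\ZK$), for the divisibility $w\mid 2j$, for the case $\mm_\infty=\{\}$, and for part (ii) are correct. Part (ii) is the standard norm-functoriality argument, which the paper simply takes from \cite{AFMY}.

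The gap is the sign analysis once real places are imposed. Your claim that the global sign can always be chosen to make $\jj(\varepsilon)>0$ is false. Since $-1\not\equiv 1\bmod s\ZK$, the sign of every element of $\U{\mm_0\mm_\infty}$ is fixed by the congruence. For example, when $\uf^{w/2}\equiv -1$, the generator $-\uf^{w/2}$ of $\U{\mm_0,\{\}}$ is negative at $\jj$ and cannot be flipped; you have to square it.

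For the same reason you have located the difficulty in the wrong place. It is not confined to ``both places imposed, $\nm\uf=-1$, odd exponent.'' If $w$ could be odd with $\nm\uf=-1$, then $\uf^w$ would already fail positivity at $\jj^\tau$ with $\mm_\infty=\{\infty_1\}$ alone. The generator would then be $\uf^{2w}$ and the lemma would be false. This single-place case is exactly the one used in Proposition~\ref{coeur}, so your fallback (``the sign can be fixed by the choice of $\pm$'') does not rescue the applications either. What rescues them there is Lemma~\ref{bells}\ref{weven}. The resolution you sketch, phrased in terms of $\uf^{w/2}$ when $w$ is odd, does not make sense, and the case is left open.

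The missing idea is that $s\ZK$ is $\tau$-stable because $s\in\Z$.
\begin{itemize}
\item Apply $\tau$ to $\uf^w\equiv 1\bmod s\ZK$ and multiply the two congruences. This gives $(\nm\uf)^w\equiv 1 \bmod s$. Since $s\geq 3$, $w$ must be even whenever $\nm\uf=-1$. This is the trick of Lemma~\ref{bells}\ref{evev}, applied to $\uf^w$ instead of $\uf^{w/2}$.
\item Hence $\uf^w$ is totally positive in every case, so $\uf^w\in\U{\mm_0\mm_\infty}$ for every $\mm_\infty$. This gives $j\mid w$, and together with your $w\mid 2j$ it yields $j\in\{w/2,w\}$.
\item When $\uf^{w/2}\equiv -1$, the same trick shows that $w/2$ is even if $\nm\uf=-1$. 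So $-\uf^{w/2}$ is then totally negative, and for $\mm_\infty\neq\{\}$ the generator is exactly $\uf^w$.
\end{itemize}
The paper's own proof of (i) passes over this point as a ``straightforward consequence'' of \eqref{gengen} and the minimality of $w$. This parity observation is precisely what makes either version complete.
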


\begin{proof}\ref{won}
For the fact that it has free $\Z$-rank exactly 
equal to that of $\ZKx$---in this case~1, 
by Dirichlet's unit theorem---see \cite[VI \S1]{Lang}, as we pointed 
out above in the discussion after~\eqref{globcft}. 
So we only need show it has no torsion. 
But the roots of unity $\mathbf{\mu}(K)$ contained in $K$ 
are just $\{ \pm 1 \}$; so we only need verify 
that~$-1 \notin \U{\mm_0\mm_\infty}$: and this 
is ensured by the choice of~$s$ as a rational integer~$\geq3$. 
The second assertion is then a 
straightforward consequence of the decomposition 
in~\eqref{gengen} and the minimality of $w$. 

\ref{sroots} See the proof of proposition~9(i) 
in~\S4.2 of \cite{AFMY}. 
\end{proof}

In~\cite{AFMY}, where the finite part of 
the modulus~$\mm_0$ is~$d_\ell$ or~$2d_\ell$, 
the order~$w$ of $\uf$ modulo $\mm_0$ 
is~\emph{a priori} difficult to predict and requires a direct calculation for each case. 
However in the main cases of interest here---that is to 
say, $\mm_0 = (d_\ell-1)\ZK$ or $(d_\ell-2)\ZK$---it is 
always given by the following simple 
rules (parts \ref{deekm1} and~\ref{deekm2}), 
which we shall need in the proofs below. 
This is the key to why these cases give a simple 
hierarchy of ray class fields. 

We also include some other necessary technical observations.

\begin{lemma}\label{bells}
With notation as above, writing $d_\ell = d_\ell(D) \geq 4$, 
\begin{enumerate}[label=(\alph*)]
\item 
The order of $u_D$ modulo $(d_\ell-1)\ZK$ is $4\ell$ 
and $u_D^{2\ell}\equiv-1\bmod(d_\ell-1)\ZK$. 
\label{deekm1}
\item 
Assume $d_\ell\geq5$. 
The order of $u_D$ modulo $(d_\ell-2)\ZK$ is $6\ell$ 
and $u_D^{3\ell}\equiv-1\bmod(d_\ell-2)\ZK$. 
When $d_\ell=4$ the characteristic is~2 and 
the order is just~3. 
\label{deekm2}

Furthermore, 

\item
The order of $\uf$ modulo either of the moduli $\mm_0$ 
in \ref{deekm1}, \ref{deekm2} 
is the same as that of $u_D$, 
multiplied by $2$ if $\nm\uf=-1$. 
\label{youtoo}
\item
$w$ is even. \label{weven}
\item 
Suppose that $\nm \uf = -1$, the generator for~$\mm_0$ is $\geq3$ 
and that $\uf^\frac{w}{2} \equiv -1 \bmod \mm_0$. 
Then $\frac{w}{2}$ is even. \label{evev}
\end{enumerate}
\end{lemma}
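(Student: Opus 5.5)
The plan is to use the defining relation of the tower together with a norm-size argument. Put $d=d_\ell$, $u=u_D^\ell$ and $t_k=u_D^k+u_D^{-k}$, a rational integer. By \eqref{dees} we have $u+u^{-1}=d-1$, so
\[
u^2-(d-1)u+1=0 .
\]
The one general tool I will use is this. If $s\geq 3$ is a rational integer and $u_D^k\equiv 1 \bmod s\ZK$ for some $k>0$, then applying $\tau$ also gives $u_D^{-k}\equiv 1\bmod s\ZK$. Since $\nm u_D=1$, we get
\[
\nm(u_D^k-1)=2-t_k .
\]
As $u_D^k-1\in s\ZK$, this forces $s^2\mid t_k-2$. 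Here $t_k-2=(u_D^{k/2}-u_D^{-k/2})^2>0$, and $t_k$ increases strictly with $k$. So for $k$ small relative to $\ell$ the divisibility is impossible.

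For (a), take $s=d-1$. The quadratic reduces to $u^2\equiv -1$, so $u_D^{2\ell}\equiv -1$ and the order $m$ of $u_D$ divides $4\ell$. Since $-1\not\equiv 1$ when $s\geq 3$, $m$ does not divide $2\ell$. Hence if $m<4\ell$ then $m=4\ell/q$ for an odd prime $q$, so $m\leq 4\ell/3<2\ell$. But then
\[
0<t_m-2<t_{2\ell}-2=(u-u^{-1})^2=(d-1)^2-4<s^2 ,
\]
which contradicts $s^2\mid t_m-2$.

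For (b), take $s=d-2$. Now $u+u^{-1}\equiv 1$, so $u^2-u+1\equiv 0$. This gives $u^3\equiv -1$, and the order $m$ divides $6\ell$ but not $3\ell$. If $m=2\ell$, then $u^2\equiv 1$ and $u^2\equiv u-1$ together force $u\equiv 2$, hence $3\equiv 0$, so $s\mid 3$ and $s=3$. I will treat this exceptional value $d=5$ by hand: check $\uf$ modulo $3$ directly. If $m=6\ell/q$ with $q\geq5$ prime, then $m\leq 6\ell/5$. The norm argument needs $t_m-2<(d-2)^2$, and since $(d-2)^2\approx u^2=u_D^{2\ell}$ while $t_m\approx u_D^{6\ell/5}$, this should hold once the small-$d$ cases are checked. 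Making that inequality clean, and checking the small dimensions (including $d=4$ in characteristic $2$), is where I expect the main fiddly work. I will bound $t_m\leq u^{6/5}+1$ and compare it with $(u+u^{-1}-1)^2$.

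For (c), the case $\nm\uf=1$ is trivial. If $\nm\uf=-1$, let $N$ be the order of $u_D=\uf^2$, which is $4\ell$ or $6\ell$ and so even. The order of $\uf$ divides $2N$ and is $N$ or $2N$. But
\[
\uf^N=u_D^{N/2}\equiv -1\not\equiv 1
\]
by (a) or (b), so the order is $2N$. Part (d) is then immediate, because $4\ell$, $6\ell$ and their doubles are even. For (e), under $\nm\uf=-1$ we have $w=2N$, so $w/2=N$ is even.
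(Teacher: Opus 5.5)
Your proof is correct. For the minimality of the orders in (a) and (b) it takes a genuinely different, and cleaner, route than the paper.

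From $u_D^m\equiv1$ the paper extracts only the congruence $d_m\equiv 3\bmod(d_\ell-1)$, i.e.\ $s\mid t_m-2$ in your notation. With only that, it must run a case analysis on an odd cofactor $\nu$. Size arguments dispose of $\nu\ge5$. The case $\nu=3$ is different: there the putative order is $4\ell/3>\ell$, so size alone says nothing, and the paper needs explicit Chebyshev expressions for $d_{3m}$, $d_{4m}$ and congruences modulo $d_m$ and $d_m^2$. For (b) it checks $u_D^\ell\not\equiv-1$ and then simply asserts that the same elimination works.

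Taking the norm of $u_D^m-1\in s\ZK$ gives you $s^2\mid t_m-2$, and that extra factor of $s$ kills every proper divisor by size at once.
\begin{itemize}
\item In (a) it even shows directly that $u_D^m\equiv1$ forces $t_m-2\ge(d-1)^2>t_{2\ell}-2$. Hence $m>2\ell$, and so $m=4\ell$.
\item In (b) it leaves only the algebraic case $m=2\ell$. Your derivation $u\equiv2$, $s\mid 3$ subsumes the paper's check.
\item The remaining inequality in (b) is easier than you expect. For $d\ge5$ you have $u\ge2+\sqrt3>3$, and $(u-1)^2\ge u^{6/5}$ for $u\ge3$. So $t_m-2<u^{6/5}\le(u-1)^2<(d-2)^2$, with no small cases beyond $d=5$, which you already treat by hand.
\end{itemize}

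For (c)--(e) you pin down $w=2N$ from $u_D^{N/2}\equiv-1\not\equiv1$. This is more explicit than the paper's brief justification of (c). The paper's $\tau$-conjugation argument for (e), via $(\uf\uf^\tau)^{w/2}\equiv1$, has the merit of working for any rational modulus $s\ge3$ without knowing the orders.

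Two minor points:
\begin{itemize}
\item What you actually know is that $4\ell/m$ (resp.\ $6\ell/m$) is odd and at least $3$, not that it is prime. That is all your bounds use; in (b) it splits exactly into $m=2\ell$ or $m\le 6\ell/5$.
\item Your reliance on $-1\not\equiv1$ in (c)--(d) is exactly why $d=4$ with modulus $2\ZK$ must remain excluded. There $(1+\sqrt5)/2$ has order $3$, so (c) and (d) genuinely fail in that case.
\end{itemize}
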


\begin{proof}
By our choice of $d\geq4$, 
the multiplicative groups $\mg{d-1}$ 
and $\mg{d-2}$ are non-empty, so that any 
global unit in $\ZK$ maps to a non-zero 
invertible element in each of those rings. 
Hence for \ref{deekm1}, observe that by the 
definition of~$d = d_\ell$, 
\[
1+u_D^{2\ell} = {u_D^\ell}\frac{1+u_D^{2\ell}}{u_D^\ell}  =  {u_D^\ell}(d_\ell-1) \in {u_D^\ell}(d_\ell-1)\ZK =  (d_\ell-1)\ZK,
\]
or, as claimed, 
$u_D^{2\ell} \equiv -1 \bmod (d_\ell-1)\ZK$. 
We use this to prove the assertion about the order. 

Suppose that some smaller divisor $t$ 
of $4\ell$ were the order of $u_D$. 
(That is to say, the powers of $u_D$ modulo $\mm_0$ 
have already passed $1$ on the way to $u_D^{2\ell} \equiv -1$). 
By what we have just proven,~$t$ 
cannot divide into~$2\ell$; but it must divide into~$4\ell$. 
So $t=4m$ for some $m \lvert \ell$, $m<\ell$. 
Write $\ell = \nu m$ defining another positive integer $\nu>1$. 
We claim, in the first instance, that $\nu$ must equal~$3$. 
We shall invoke the standard floor, ceiling and nearest integer notation 
respectively $\lfloor x \rfloor$, $\lceil x \rceil$ and $[x]$ 
applied to any real number $x$. 

If $\nu=2$ then $t=2\ell$; on the other hand if $\nu=4$ then 
$t=\ell$; in both cases~$t\lvert2\ell$, a contradiction. 
Now, by the definition of~$t$ and by~\eqref{dees},
~$d_t = u_D^t + u_D^{-t} + 1 \equiv 3 \bmod (d_\ell-1)$, 
hence there exists some $r\in\Z$ such that 
\begin{equation}\label{rsol}
d_t := d_{4m} = rd_{\nu m} - r + 3. 
\end{equation}
By sheer size considerations since $d_\ell\geq4$ we see that $r$ must be positive. 
If $r=1$ then $d_{4m} = d_{\nu m} + 2$; suppose for 
a moment that~$\nu = 5$. 
Then since 
$d_{4m} = [u_D^{4m}+1]$,  
$d_{5m} = [u_D^{5m}+1]$ and 
$u_D \geq \frac{3+\sqrt{5}}{2} \approx 2.618...$ is 
the smallest possible value of the 
maximum fundamental unit for any $D$---and 
using the recurrence relation for the 
modified Chebyshev polynomials of the first kind, denoted~$T_{r}^*(X)$, 
in the proof of proposition~7 of~\cite{AFMY} for the 
strict inequality in the middle---the last equality 
coming from~\eqref{rsol} with our assumed~$r=1$:
\begin{equation}
2^m d_{4m} < \lfloor u_D \rfloor^m d_{4m} < d_{5m} < d_{5m}+2 = d_{4m},
\end{equation}
a contradiction for any $m\geq1$. 
The same argument is true with yet stronger reasoning 
for any $\nu\geq6$ or indeed any~$r\geq2$. 

Hence as claimed, the only possible value for $\nu$ is~$3$. 
We now proceed to rule that out as well. 
So suppose that $\nu=3$. 
The above polynomials~$T_{r}^*(X)$ operate 
to express $d_{rs}$ as a function of $d_s$ in \S3 of~\cite{AFMY}, yielding the 
following expressions in $d_m$: 
\begin{equation}\label{d3m}
d_{3m} = d_m^3 - 3d_m^2 + 3,
\end{equation}
and 
\begin{equation}\label{d4m}
d_{4m} = d_m^4 - 4 d_m^3 + 2 d_m^2 + 4 d_m;
\end{equation}
and hence we may deduce from~\eqref{rsol} 
the following rational integer congruence:
\begin{equation}\label{congdm}
2r \equiv -3 \bmod d_m.
\end{equation}
Once again we observe from the relative sizes in~\eqref{rsol} 
that $r$ is positive and of the order of $d_m = 1+[u_D^m]$, 
which translates in \eqref{congdm} to $r = \frac{d_m-3}{2}$. 
This by the way forces $d_m$ to be odd, because $r\in\N$. 

Now substitute this in turn back into \eqref{rsol}, this 
time reading it modulo $d_m^2$, again in the light of 
\eqref{d3m} and~\eqref{d4m}. We see that 
$$
4d_m \equiv d_m \bmod d_m^2,
$$
which is attainable only at $d_m=4$, which is 
even, a contradiction. 

For \ref{deekm2}, we again observe that by definition of the $d_\ell$, 
\[
u_D^{2\ell} - u_D^\ell + 1 = {u_D^\ell}\frac{u_D^{2\ell} - u_D^\ell + 1}{u_D^\ell}  =  {u_D^\ell}(d_\ell-2)	\in {u_D^\ell}(d_\ell-2)\ZK = (d_\ell-2)\ZK;
\]
in particular, multiplying by the non-zero element $1+u_D^\ell$ gives 
${u_D^{3\ell} + 1} \in (d_\ell-2)\ZK$ 
and so once more as claimed,~$u_D^{3\ell} \equiv -1 \bmod (d_\ell-2)\ZK$. 

The only possibility we have left open is that 
the zero-divisor~$1+u_D^\ell$ is in fact zero in the 
ring $\ag{d_\ell-2}$; that is, that~$u_D^\ell \equiv -1$. 
But $d_\ell - 2 = u_D^\ell + u_D^{-\ell} - 1$ and so this would say that 
$-3 \equiv 0 \bmod (d_\ell-2)$, which is clearly wrong 
unless $d_\ell=1,3$ or $5$, the first two of which we have excluded 
and the third case of which may instead be verified by direct calculation. 
The minimality of $6\ell$ follows from 
the same elimination argument as for $4\ell$ above, this time 
with~$\nu=5$ being the only slightly tricky case. 
The assertion about the anomalous case $D=5,\ell=1,d=4$ 
is a simple calculation, which boils down to the fact that 
the order of the multiplicative group~$\mg{2}$ is just~$3$. 

For~\ref{weven}: the order $w$ is just the 
case $\ell=1$ from the results just proved, 
multiplied by $2$ if $\nm\uf = \uf\uf^\tau = -1$. 
Hence in particular it is always even. 
Moreover~\ref{youtoo} is clearly true provided that 
$\Phi(\mm_0) = \#\mg{\mm_0}$ is even and 
$\uf + \mm_0 \neq 1 + \mm_0$, both of 
which follow from the results just proven. 
We use~\ref{weven} to prove \ref{evev}, 
since the expression is valid as $w$ is even: 
\begin{align}
\uf^\frac{w}{2} \equiv -1 \bmod \mm_0 
	& \implies (\uf^\frac{w}{2})^\tau \equiv (-1)^\tau \equiv -1 \bmod \mm_0 \nonumber \\
&\implies \uf^\frac{w}{2} (\uf^\frac{w}{2})^\tau \equiv 1 \bmod \mm_0 \nonumber \\
&\implies (\uf \uf^\tau)^\frac{w}{2} \equiv 1 \bmod \mm_0 \nonumber \\
&\implies (\nm \uf)^\frac{w}{2} = (-1)^\frac{w}{2} \equiv 1 \bmod \mm_0, 
\end{align}
by our assumption that $\uf \uf^\tau = \nm \uf = -1$, and 
so $\frac{w}{2}$ is indeed even since the characteristic of 
$\ag{\mm_0}$ is $\geq3$ in all cases except $d=4$, $\mm_0 = (d-2)\ZK$. 
\end{proof}

Lemma \ref{bells} \ref{evev} is now the remaining 
technical ingredient needed to 
delineate the two cases which can occur for our two possible 
choices of finite moduli~$\mm_0$, namely $(d_\ell-1)\ZK$ and 
$(d_\ell-2)\ZK$. Recall that we write~$\mm = \mm_0\mm_\infty$. 

\begin{lemma}[]\phantomsection\label{fish}
\begin{enumerate}[label=(\alph*)]
\item\label{dag} There exists $r\in\N$ such that 
$\uf^r \equiv -1 \bmod^\times \mm_0\mm_\infty$ 
if and only if $\uf^\frac{w}{2} \equiv -1 \bmod \mm_0$ 
and $\mm_\infty = \{\}$. 
In particular, $\ker\psi = < - \uf^\frac{w}{2} >$ and 
$\img\psi \cong C_w$. 
\label{obvev}
\item\label{dagdag} Suppose that 
$\uf^\frac{w}{2} \not\equiv -1 \bmod^\times \mm_0\mm_\infty$. 
Then $\ker\psi = \U{\mm} = < \uf^w >$ for all four 
infinity types $\mm_\infty = \{\},\{\infty_1\},\{\infty_2\},\{\infty_1,\infty_2\}$ and consequently 
$\img\psi \cong C_w \times \{ \pm1 \}$.  \label{odev}
\end{enumerate}
\end{lemma}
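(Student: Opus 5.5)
The plan is to work entirely with the explicit description \eqref{gengen} of the units, $\ZKx = <\uf>\times<-1>$, together with the sign behaviour of $\uf$ under the two embeddings. Every unit is $\pm\uf^k$. By our normalisation $\jj(\uf)>1$, so $\jj(\uf^k)>0$ for all $k$. The sign of $\jj^\tau(\uf^k)$ is $(\nm\uf)^k$, while $-1$ is negative at both places.

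For part \ref{dag} I will read ``$\uf^r\equiv-1\bmod^\times\mm_0\mm_\infty$'' as two conditions. The first is $\uf^r\equiv-1\bmod\mm_0$. The second is that $\uf^r$ has the same sign as $-1$, namely negative, at every place in $\mm_\infty$.

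\begin{itemize}
\item Since $\jj(\uf^r)>0$, the place $\infty_2$ can never lie in $\mm_\infty$.
\item Suppose $\uf^r\equiv-1\bmod\mm_0$. The characteristic is $\geq3$, so $-1\neq1$ has order $2$ in $\mg{\mm_0}$. Hence $w\nmid r$ but $w\mid 2r$, so $r\equiv\frac w2\bmod w$; in particular $\uf^{w/2}\equiv-1\bmod\mm_0$.
\item For $\infty_1$ we would need $\nm\uf=-1$ and $r$ odd. But with $\nm\uf=-1$ and $\uf^{w/2}\equiv-1$, Lemma~\ref{bells}~\ref{evev} gives $\frac w2$ even, and then $r\equiv\frac w2\bmod w$ is even as well. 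So $\infty_1\notin\mm_\infty$.
\end{itemize}

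This gives the forward implication. The converse is immediate by taking $r=\frac w2$ with $\mm_\infty=\{\}$.

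For the kernel statement in \ref{dag}, with $\mm_\infty=\{\}$, I determine which $\pm\uf^k$ are $\equiv1\bmod\mm_0$.
\begin{itemize}
\item $\uf^k\equiv1$ exactly when $w\mid k$.
\item $-\uf^k\equiv1$ exactly when $\uf^k\equiv-1$, that is $k\equiv\frac w2\bmod w$.
\end{itemize}
Hence $\ker\psi=<-\uf^{w/2}>$, whose square is $\uf^w$. Identifying $\ZKx\cong\Z\times C_2$, the kernel is the subgroup generated by $(\frac w2,1)$, which has index $w$. The image is generated by the image of $\uf$, because $-1\equiv\uf^{w/2}$ lies in $<\uf>$ modulo $\mm_0$. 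So $\img\psi$ is cyclic of order $w$.

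For part \ref{dagdag} I argue directly. Assume $\uf^{w/2}\not\equiv-1\bmod^\times\mm_0\mm_\infty$; by the analysis above, no power $\uf^r$ is $\equiv-1\bmod^\times\mm$.
\begin{itemize}
\item A unit $-\uf^k$ can never lie in $\U{\mm}$. Indeed, $-\uf^k\equiv1\bmod^\times\mm$ is equivalent to $\uf^k\equiv-1\bmod^\times\mm$: the congruences match, and positivity of $-\uf^k$ at a place is negativity of $\uf^k$ there.
\item A unit $\uf^k$ lies in $\U{\mm}$ only if $w\mid k$. Conversely $\uf^w$ satisfies every sign condition: it is positive under $\jj$, and has sign $(\nm\uf)^w=+1$ under $\jj^\tau$ since $w$ is even by Lemma~\ref{bells}~\ref{weven}.
\end{itemize}
So $\U{\mm}=<\uf^w>$ for all four infinity types. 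The quotient $\ZKx/<\uf^w>$ is $C_w\times\{\pm1\}$, generated by the images of $\uf$ and $-1$. This is consistent with Lemma~\ref{Urk1}~\ref{won}, which says the kernel is torsion-free of rank one with generator of the form $\pm\uf^j$, $j\in\{\frac w2,w\}$.

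The only delicate step is excluding $\infty_1$ in part \ref{dag} when $\nm\uf=-1$. This is exactly where Lemma~\ref{bells}~\ref{evev} is needed, together with the observation that any $r$ with $\uf^r\equiv-1$ lies in the class of $\frac w2$ modulo $w$. Everything else is bookkeeping with $\Z\times C_2$.
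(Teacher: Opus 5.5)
Your proof is correct and follows essentially the same route as the paper: the sign analysis under $\jj$ and $\jj^\tau$, with Lemma~\ref{bells}~\ref{evev} used to exclude $\infty_1$ when $\nm\uf=-1$, followed by identifying the generator of the unit kernel. Your version of (b) works with $\bmod^\times\mm$ throughout and checks every $\pm\uf^k$ directly, rather than eliminating via Lemma~\ref{Urk1}~\ref{won}, and is if anything slightly cleaner: it explicitly covers the case $\uf^{w/2}\equiv-1\bmod\mm_0$ with a failing sign condition, which is the case needed in Proposition~\ref{coeur} and which the paper's phrase ``there is no $j$ such that $\uf^j\equiv-1\bmod\mm_0$'' does not literally address.
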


\begin{proof}
\ref{obvev}: 
$\implies\colon$ The hypothesis implies the existence of a minimal positive 
integer $r$ such that $\uf^r \equiv -1 \bmod^\times \mm$. 
By the minimality of both $w$ and $r$, indeed 
$w = \gcd(2r,w) = 2r$ is even, and $r=\frac{w}{2}$. 
Therefore $-\uf^\frac{w}{2} \equiv 1 \bmod^\times \mm$, and 
consequently must be positive at any real places in~$\mm_\infty$. 

But $\uf>0$ under~$\infty_2$. Hence when $\nm \uf = +1$, $\uf$ is positive at 
both infinite places; so in particular~$-\uf^\frac{w}{2} < 0$ 
and therefore it must be the case that $\mm_\infty=\{\}$. 
On the other hand when $\nm \uf = -1$, the 
possibilities that~$\mm_\infty = \{\infty_2\}$ or $\{\infty_1,\infty_2\}$ 
are excluded for the same reasons. 
Finally, when the norm is~$-1$ and $\mm_\infty=\{\infty_1\}$, 
we know from lemma~\ref{bells} \ref{evev} that $\frac{w}{2}$ is even, 
and we would need $-\uf^\frac{w}{2} > 0$ at $\{\infty_1\}$, 
which is impossible because $\uf^2$ is totally positive. 
This once again only leaves the possibility that $\mm_\infty = \{\}$. 
So~$-\uf^\frac{w}{2}$ generates $\U{\mm_0,\{\}}$.  
Note by the way that, curiously, when $\nm \uf = 1$ we 
are agnostic as to whether $\frac{w}{2}$ 
is itself even or odd. 

That the image of $\psi$ is cyclic follows from 
the fact that $w$ is even and the 
kernel is generated by a negative element. 
Notice that this effectively says that arithmetically within the 
ray residue ring, $\uf^\frac{w}{2}$ ``is'' $-1$. 

$\impliedby\colon$~$w$ is even by lemma~\ref{bells}~\ref{weven} so set $r = \frac{w}{2}$. 

\ref{odev}: 
Again, $w$ is even by lemma~\ref{bells} \ref{weven} and so 
the expression~$\uf^\frac{w}{2}$ is well-defined. 
But the argument in the proof of part~\ref{obvev} 
just now shows that by the minimality of~$w$, there 
is \emph{no} $j$ such that $\uf^{j} \equiv -1 \bmod \mm_0$. 
Hence it must be the case that~$\uf^\frac{w}{2}$ is some 
\emph{other} square root of $1$ modulo $\mm_0$, 
and so its negative $-\uf^\frac{w}{2}$ must fail to be $1$ 
(respectively, fail to be positive) at some 
finite (respectively, infinite) place dividing $\mm$. 
Consequently it is excluded from the kernel of~$\psi$. 
So by elimination~$\uf^w$ generates the kernel of $\psi$. 
\end{proof}

\paragraph{Proof of Theorem \ref{hydrahead} (B)} 
When $\mm_0 = d\ZK$ the statement of (B) 
is just proposition~10 of~\cite{AFMY}, 
or indeed lemma~5.3 of~\cite{Kopp1}. 
When $\mm_0 = (d-3)\ZK$ we use the same argument. 
Briefly, we observe that by the definition of $d_\ell=d_\ell(D)$, 
$d_\ell-3 = u_D^\ell + u_D^{-\ell} - 2 = \frac{(u_D^\ell-1)^2}{u_D^\ell}$ 
and so $\mm_0 = (u_D^\ell-1)^2 \ZK$, which means that either 
$u_D^\ell \equiv 1 \bmod \mm_0$ or else that $(u_D^\ell-1)$ 
is nilpotent of index exactly~$2$. 
As it turns out, the only case for which 
$u_D^\ell \equiv 1 \bmod \mm_0$ is when $D=5$ 
and $\ell=1$: that is, when $d=4$, which is 
obviously a trivial case since $d-3=1$. 
For every other case, we have this neat degree-2 nilpotency. 

We now refer directly to the discussion 
around equations (15) and (16) in the proof of 
proposition~10 of~\cite{AFMY}. 
Again invoking lemma~\ref{Urk1}, 
$\U{(d-3)\ZK\{\}}$ must have a generator 
of the form $\pm u_D^k$ for some $k\geq1$. 
Hence in order to draw the same conclusions as we did there about 
the modulus~$d$, 
we must show that \emph{the unit kernel~$\U{(d-3)\ZK\{\}}$ contains no 
units which are totally negative}. 
This has the consequence that all four unit kernel 
groups~$\U{(d-3)\ZK\{\}}$, $\U{(d-3)\ZK\infty_1}$, 
$\U{(d-3)\ZK\infty_2}$ and $\U{(d-3)\ZK\infty_1\infty_2}$ are identical, 
and then by the formula~\eqref{Lang} quoted from 
Lang~\cite[Ch VI \S1 eq~(15)]{Lang} we 
see that the increases in the 
orders of the ray class groups are entirely 
governed by adding in successive real 
places to the modulus, as claimed.

So suppose to the contrary that we do in fact 
have a unit in~$\U{(d-3)\ZK\{\}}$ which 
is negative at both infinite places. 
In the cases where $\ZK$ has a negative fundamental unit $\uf$, 
the odd powers will always have 
mixed signs (one positive, the other negative). 
Hence from the discussion just above, any totally negative unit must be of 
the form~$-u_D^k$ for some non-zero $k \in\N$. 

It is an easy consequence of the basic properties 
of the modified Chebyshev polynomials $T_n^\ast(X)$---as defined in 
\cite{AFMY} or \cite{BGM}---that for every pair of positive integers $m,n$, 
the following rational integer congruence holds: 
\begin{equation}\label{dm3}
(d_{mn}-3) \equiv 0 \bmod (d_n-3) .
\end{equation}
We now need to be careful about the 
subscripts of the $d=d_j = d_j(D)$ for $j=1,k,\ell$. 
First of all, just running through the definitions implicit in the notation, 
since $d = d_\ell = d_\ell(D) = u_D^\ell + u_D^{-\ell} + 1$ for some $\ell$, 
the assumption that~$-u_D^k \in \U{(d_\ell-3)\ZK\{\}}$ is equivalent to  
$u_D^k \equiv -1 \bmod (d_\ell-3)\ZK$, which in turn implies 
\begin{equation}
d_k = u_D^k + u_D^{-k} + 1 \equiv -1 -1 + 1 \equiv -1 \bmod (d_\ell-3)\ZK . 
\end{equation}
But~\eqref{dm3} with $m=\ell$ and $n=1$ yields further that 
\begin{equation}\label{drm1}
d_k \equiv -1 \bmod (d_1-3)\ZK . 
\end{equation}
On the other hand, again from~\eqref{dm3} but with $m=k$ and $n=1$, we know 
that~$d_k-3$ is also divisible exactly by $d_1-3$, and so additionally 
\begin{equation}\label{drp3}
d_k \equiv 3 \bmod (d_1-3)\ZK . 
\end{equation}

But then~\eqref{drp3} together with~\eqref{drm1} says that 
$d_1-3$ divides into their difference, ie $3 - (-1) = 4$, inside $\ZK$. 
This leaves a very limited number of cases---that is to say, $d_1 = 4,5,7$---two  
of which ($4$ and $5$) we have excluded, and the other of 
which may be verified by direct calculation, since for all $k$ it 
is the case again by~\eqref{dm3} that 
$$
\U{(d_k-3)\ZK\{\}}\trianglelefteq \U{(d_1-3)\ZK\{\}} .
$$ 

So~$\U{(d-3)\ZK\{\}}$ has 
no totally negative units. 
\qed

\begin{remark}
We should also point out the yet more obvious fact that the same square-nilpotency 
phenomenon as becomes clear above in the proof of part (B) 
occurs for all moduli of the form~$(d_\ell+1)\ZK$, 
since $(u_D^\ell+1)\ZK$ is the square root of the ideal~$(d_\ell+1) \ZK$. 
\end{remark}


{\small

\begin{thebibliography}{99}

\bibitem{AYAZ} D. M. Appleby, H. Yadsan-Appleby, and G. Zauner, {\it Galois 
automorphisms of symmetric measurements}, Quant. Inf. Comp. {\bf 13} (2013) 672.

\bibitem{AFMY} M. Appleby, S. Flammia, G. McConnell, and J. Yard, 
{\it Generating ray class fields of real quadratic fields via complex 
equiangular lines}, Acta Arithmetica {\bf 192} (2020) 211; also published as 
arXiv:1604.06098.

\bibitem{Hilbert} D. Hilbert, {\it Matematische Probleme}, G\"ottinger Nachrichten 
(1900) 253; also published as {\it Mathematical problems}, Bull. AMS {\bf 8} 
(1902) 437. 

\bibitem{Stark1} H. M. Stark, {\it L-functions at s = 1. III. Totally real fields 
and Hilbert's twelfth problem}, Adv. Math. {\bf 22} (1976) 64. 

\bibitem{Zauner} G. Zauner: {\it Quantendesigns. Grundz\"uge einer 
nichtkommutativen Designtheorie}, PhD thesis, Univ. Wien 1999. Also published as  
{\it Quantum designs: Foundations of a noncommutative design theory}, 
Int. J. Quant. Inf. {\bf 9} (2011) 445.

\bibitem{Renes} J. M. Renes, R. Blume-Kohout, A. J. Scott, and C. M. Caves, 
{\it Symmetric informationally complete quantum measurements}, J. Math. Phys. {\bf 45} 
(2004) 2171. 

\bibitem{Chris} C. A. Fuchs, M. C. Hoang, and B. C. Stacey, {\it The SIC question: History 
and state of play}, Axioms {\bf 6} (2017) 21. 

\bibitem{Kopp1} G. S. Kopp, {\it SIC-POVMs and the Stark conjectures}, Int. Math. Res. Not. 
IMRN 2021.18 (2021) 13812. 

\bibitem{ABGHM} M. Appleby, I. Bengtsson, M. Grassl, M. Harrison, and G. McConnell, 
{\it SIC-POVMs from Stark units: Prime dimensions $n^2 + 3$}, J. Math. Phys. {\bf 63} 
(2022) 112205. 

\bibitem{BGM} I. Bengtsson, M. Grassl, and G. McConnell, {\it SIC-POVMs from Stark 
units: Dimensions $n^2 + 3 = 4p$, $p$ prime}, J. Math. Phys. {\bf 66} 
(2025) 082202. 

\bibitem{Kopp5} G.S. Kopp and J. C. Lagarias, {\it Ray class groups and ray class 
fields for orders of number fields}, Essential Number Theory {\bf 4} (2025) 1. 

\bibitem{Kopp2} G. S. Kopp and J. C. Lagarias, {\it SIC-POVMs and orders of real 
quadratic fields}, arXiv:2407.08048. 

\bibitem{Scott1} A. J. Scott and M. Grassl, {\it SIC-POVMs: A new computer study}, 
J. Math. Phys. {\bf 51} (2010) 042203.

\bibitem{Scott2} A. J. Scott, {\it SICs: Extending the list of solutions}, 
arXiv:1703.03993. 

\bibitem{Kopp3} G. S. Kopp, {\it The Shintani--Faddeev modular cocycle: Stark 
units from $q$-Pochhammer ratios}, arXiv:2411.06763.

\bibitem{Kopp4} M. Appleby, S. Flammia, and G. S. Kopp, {\it A constructive approach 
to Zauner's conjecture via the Stark conjectures}, arXiv:2501.03970.

\bibitem{Markuspower} M. Grassl, {\it On (square roots of) powers of Stark units 
in the fiducial vector}, unpublished manuscript (2023). 

\bibitem{Neukirch} J. Neukirch: {\it Algebraic Number Theory}, Springer, Berlin 2013.

\bibitem{Lang} S. Lang: {\it Algebraic Number Theory}, Springer, New York 1986. 

\bibitem{gras} Georges Gras, \emph{Class Field Theory: From Theory to Practice}, corrected second printing, Springer-Verlag Berlin (2005). 

\bibitem{Tate} J. Tate: {\it Les conjectures de Stark sur les fonctions $L$ d'Artin 
en $s=0$}, Progress in Mathematics {\bf 47}, Birkh\"auser, Boston 1984. 

\bibitem{Magma} W.~Bosma, J.~J. Cannon, and C.~Playoust, 
{\it The Magma algebra system I: The user language}, J. Symbolic Computation {\bf 24} (1997) 235. 

\bibitem{Marcus} D. M. Appleby, {\it SIC-POVMs and the extended Clifford group}, 
J. Math. Phys. {\bf 46} (2005) 052107. 

\bibitem{Fibonacci} M. Grassl and A. J. Scott, {\it Fibonacci--Lucas SIC-POVMs}, 
J. Math. Phys. {\bf 58} (2017) 122201.

\bibitem{monomial} D. M. Appleby, I. Bengtsson, S. Brierley, D. Gross, M. Grassl, 
and J.-\AA. Larsson, {\it The monomial representations of the Clifford group}, 
Quant. Inf. Comp. {\bf 12} (2012) 0404. 

\bibitem{BG} I. Bengtsson and M. Grassl, {\it A conjecture on almost flat SIC-POVMs}, 
arXiv:2512.13201. 

\bibitem{tbang} T. Bang, \emph{Congruence properties of Tchebycheff polynomials}, Mathematica Scandinavica \bf2\rm, (1954), 327--333.

\bibitem{nonprat} G. McConnell, {\it Some new infinite families of non-$p$-rational real quadratic fields}, arXiv:2406.14632. 

\bibitem{ACFW} M. Appleby, T.Y. Chien, S. Flammia, and S. Waldron, {\it Constructing 
exact symmetric informationally complete measurements from numerical solutions}, 
J. Phys. {\bf A51} (2018) 165302. 

\bibitem{Marcusray} M. Appleby, {\it Ray class SICs: Symmetries and Galois action}, 
unpublished manuscript (2018). 

\bibitem{Irina} M. Appleby, I. Bengtsson, I. Dumitru, and S. Flammia, {\it Dimension 
towers of SICs. I. Aligned SICs and embedded tight frames}, J. Math. Phys. {\bf 58} (2017) 
112201. 

\bibitem{Basudha} I. Bengtsson and B. Srivastava, {\it Dimension towers of SICs. II. 
Some constructions}, J. Phys. {\bf A55} (2022) 215302.

\bibitem{Ole} O. Andersson and I. Dumitru, {\it Aligned SICs and embedded tight frames in 
even dimensions}, J. Phys. {\bf A52} (2019) 4525302. 

\bibitem{Danylo} V. Ostrovskyi and D. Yakymenko, {\it Geometric properties of SIC-POVM 
tensor square}, Lett. Math. Phys. {\bf 112} (2022) 7. 

\bibitem{cohenstev} H. Cohen and P. Stevenhagen, \emph{Computational class field theory}, in~\emph{Algorithmic Number Theory: Lattices, Number Fields, Curves and Cryptography}, edited by Joseph P.\ Buhler and Peter Stevenhagen. MSRI Publications Volume \textbf{44} (2008), 497-534. 


\bibitem{gelman} S. I. Gelfand and Y. J. Manin, \emph{Methods of Homological Algebra}, SMM Springer Verlag 2nd Edition (2003). 

\bibitem{Weyl} H. Weyl: {\it Gruppentheorie und Quantenmechanik}, Hirzel, 
Leipzig 1928; also published as {\it Theory of Groups and Quantum Mechanics}, 
Dutton, New York 1932.

\bibitem{Schwinger} J. Schwinger, {\it Unitary operator bases}, Proc. Natl. Acad. Sci. 
{\bf 46} (1960) 570. 

\bibitem{Lane} L. P. Hughston and S. M. Salamon, {\it Surveying points in the 
complex projective plane}, Adv. Math. {\bf 286} (2016) 1017. 

\bibitem{Feri} F. Sz{\"o}ll{\H{o}}si, {\it All complex equiangular tight frames 
in dimension 3}, arXiv:1402.6429.

\bibitem{Samuel} S. B. Samuel and Z. Gedik, {\it Group theoretical classification 
of SIC-POVMs}, J. Phys. {\bf A57} (2024) 295304. 

\bibitem{Blake} B. C. Stacey: {\it A First Course in the Sporadic SICs}, Springer 2021. 


\end{thebibliography}

}
\end{document}